\definecolor{col1}{HTML}{1AB6AB}
\definecolor{col2}{HTML}{EE8E16}
\tikzset{
asg/.style={black, line width=2pt, -stealth},
}
\newcommand{\gear}[5]{%
\foreach \i in {1,...,#1} {%
  [rotate=(\i-1)*360/#1]  (0:#2)  arc (0:#4:#2) {[rounded corners=1.5pt]
             -- (#4+#5:#3)  arc (#4+#5:360/#1-#5:#3)} --  (360/#1:#2)
}}  
\newcommand{\child}{
    \begin{tikzpicture}
        \draw (0,0) circle (10pt);
        \draw (-0.1,0.1) circle (1pt);
        \draw (0.1,0.1) circle (1pt);
        \draw (-0.175,-0.1) to [bend right=70] (0.175,-0.1);
    \end{tikzpicture}
}
\newcommand{\gift}{
    \begin{tikzpicture}
        \draw[black,scale=0.25] (-1,0) -- (1,0)
(-1,1) -- (1,1)
(-1,-1) -- (1,-1)
(-1,1) -- (-1,-1)
(0,1) -- (0,-1)
(0,1) -- (0,-1)
(1,1) -- (1,-1)
(0,1) -- (-0.5,1.3)
(0,1) -- (0.5,1.3);
    \end{tikzpicture}
}
\newcommand{\job}{
    \begin{tikzpicture}
        \draw[black,thick] (0,0) circle (5pt);
    \end{tikzpicture}
}
\newcommand{\machine}{
    \begin{tikzpicture}[scale=0.3]
        \draw[thick,black] (-1,-1) rectangle (1,1);
    \draw[scale=0.3] \gear{8}{2}{2.5}{10}{1};
    \draw (0,0) circle (5pt);
    \end{tikzpicture}
}
\newtheorem{theorem}{Theorem}
\newtheorem{lemma}[theorem]{Lemma}
\newtheorem{corollary}[theorem]{Corollary}
\newtheorem{definition}[theorem]{Definition}
\newcommand{\opt}{\mathrm{OPT}}
\newcommand{\optc}{\mathrm{OPT}_{\cC}}
\newcommand{\santa}{\textsc{SantaClaus}\xspace}
\newcommand{\matroidsanta}{resource-matroid \textsc{SantaClaus}\xspace}
\newcommand{\makespan}{\textsc{Makespan}\xspace}
\newcommand{\matroidmakespan}{job-matroid \textsc{Makespan}\xspace}
\newcommand{\cM}{\mathcal{M}}
\newcommand{\cI}{\mathcal{I}}
\newcommand{\cT}{\mathcal{T}}
\newcommand{\cO}{\mathcal{O}}
\newcommand{\cB}{\mathcal{B}}
\newcommand{\cC}{\mathcal{C}}
\newcommand{\si}{u}
\newcommand{\bi}{w}
\newcommand{\PM}{\mathcal{P}}
\newcommand{\dPM}{\overline{\mathcal{P}}}
\newcommand{\bv}{\bar{v}}
\newcommand{\bhat}[1]{\bm\hat{#1}}
\DeclarePairedDelimiter{\abs}{\lvert}{\rvert}
\DeclarePairedDelimiter{\ceil}{\lceil}{\rceil}
\title{Santa Claus meets Makespan and Matroids: \\ Algorithms and Reductions\thanks{We thank Schloss Dagstuhl for hosting the Seminar 23061 on Scheduling  in February 2023 where we had fruitful discussions on this topic.}}
\author{{\'E}tienne Bamas\thanks{Post-Doctoral Fellow, ETH AI Center, Switzerland. etienne.bamas@inf.ethz.ch. This work was partially supported by the Swiss National Science Foundation project
entitled “Randomness in Problem Instances and Randomized Algorithms” [project
number 200021–184656/1].} \and Alexander Lindermayr\thanks{Faculty of Mathematics and Computer Science, University of Bremen, Germany. \{linderal,nmegow,jschloet\}@uni-bremen.de.} \and Nicole Megow\footnotemark[3] \and Lars Rohwedder\thanks{Maastricht University, Netherlands. l.rohwedder@maastrichtuniversity.nl. Supported by Dutch Research Council (NWO) project ``The Twilight Zone of Efficiency: Optimality of Quasi-Polynomial Time Algorithms'' [grant number OCEN.W.21.268].} \and Jens Schl{\"o}ter\footnotemark[3]}
\date{}
\begin{document}

\maketitle

\begin{abstract}
    In this paper we study the relation of two fundamental problems 
    in scheduling and fair allocation: makespan minimization
    on unrelated parallel machines and max-min fair allocation, also known as the Santa Claus problem. 
    For both of these problems the best approximation 
    factor is a notorious open question; more precisely, whether there is a better-than-2 approximation for the former problem and whether there is a constant approximation for the latter. 
    
    While the two problems are intuitively related and history has shown
    that techniques can often be transferred between them, no formal
    reductions are known. We first show that an affirmative answer to the open question for makespan minimization implies the same for the Santa Claus problem by reducing the latter problem to the former.
    We also prove that for problem instances with only two input values both questions are equivalent.
    
    We then move to a special case called ``restricted assignment'',
    which is well studied in both problems. Although our reductions
    do not maintain the characteristics of this special case, we give a reduction in a slight generalization, where the jobs or resources are assigned to multiple machines or players subject to a matroid constraint and in addition we have only two values. 
    Since for the Santa Claus problem with matroids the two value case is up to constants equivalent to the general case, this draws a similar picture as before: equivalence for two values and the general case of Santa Claus can only be easier than makespan minimization.
    To complete the picture, we give an algorithm for our
    new matroid variant of the Santa Claus problem
    using a non-trivial extension
    of the local search method from restricted assignment. Thereby we 
    unify, generalize, and improve several previous results.
    We believe that this matroid generalization may be of independent interest
    and provide several sample applications.

    As corollaries, we obtain a polynomial-time $(2 - 1/n^\epsilon)$-approximation for two-value makespan minimization for every $\epsilon > 0$, improving on the previous $(2 - 1/m)$-approximation, and a polynomial-time $(1.75+\epsilon)$-approximation for makespan minimization in the restricted assignment case with two values, improving the previous best rate of $1 + 2 / \sqrt{5} + \epsilon \approx 1.8945$.
\end{abstract}

\thispagestyle{empty} 
\addtocounter{page}{-1}
\newpage

\section{Introduction} 

In this paper we study two prominent topics from scheduling theory: the \santa problem and unrelated-machine \makespan minimization; in particular,
two notoriously difficult questions about polynomial-time approximations that are considered major open problems in the field~\cite{schuurman1999polynomial,Woeginger02,Bansal17,WilliamsonShmoys-book}.

In the \santa problem (also known as max-min fair allocation), we are given a set of~$m$ players~$P$ and a set of $n$ indivisible resources $R$. Each resource $j \in R$ has unrelated values $v_{ij} \geq 0$  for each player $i \in P$. 
The task is to find an assignment of resources to players with the objective to maximize the minimum total value assigned to any player. This objective is arguably the best from the perspective of {\em fairness} for each individual player. The name \enquote{Santa Claus} is due to Bansal and Sviridenko~\cite{bansal2006santa} who stated this problem as Santa's task to distribute gifts to children in a way that makes the least happy child maximally happy. 
From the perspective of approximation algorithms, it is entirely plausible that there 
exists a polynomial-time constant approximation for the problem, with the best lower bound assuming P$\neq$NP being only~$2$~\cite{bezakova2005allocating}.
On the other hand, the state-of-the-art is ``only'' a polynomial-time $n^{\epsilon}$-approximation for any constant $\epsilon > 0$,  a remarkable 
result by Chrakabarty, Chuzhoy, and Khanna~\cite{chakrabarty2009allocating},
who also give a polylogarithmic approximation in quasi-polynomial time using the same approach. Positive evidence towards a constant approximation comes from an intensively studied special case, called \emph{restricted assignment}. Here, the values satisfy $v_{ij}\in\{0, v_j\}$, or equivalently each resource has one fixed value, but can only be assigned to a specific subset of players. The inapproximability from the general case still holds for restricted \santa,
even for instances with only two non-zero values, introduced formally later.
The first constant approximations have been achieved
first by randomized rounding of the so-called configuration LP combined with Lov\'asz Local Lemma (LLL)~\cite{bansal2006santa, feige2008allocations} and later using a sophisticated local search technique analyzed against the configuration LP~\cite{annamalai2017combinatorial,DBLP:journals/talg/AsadpourFS12,DBLP:conf/icalp/ChengM18,DBLP:conf/icalp/ChengM19,davies2020tale,HaxellS2023improved,PolacekS2015}. The local search method originates from work on hypergraph matchings by Haxell~\cite{haxell1995condition}.

The \enquote{dual} problem with a min-max objective is the equally fundamental and prominent problem of scheduling jobs on unrelated parallel machines so as to
minimize the maximum completion time, that is, the makespan. For brevity we refer to this as the \makespan problem. 
Formally, we are given a set of $m$ machines $M$ and a set of $n$ jobs $J$. Every job $j \in J$ has size (processing time)~$p_{ij} \geq 0$ on machine $i \in M$. The task is to find an assignment of jobs to machines that minimizes the maximum load over all machines. Here, the load of a machine is the total size of jobs assigned to that machine.
Lenstra, Shmoys and Tardos~\cite{lenstra1990approximation} gave a beautiful $2$-approximation algorithm based on rounding a sparse vertex solution of the so-called assignment LP (a simpler relaxation than the configuration LP). The rounding has been slightly improved to the factor $2-1/m$~\cite{ShchepinV05}, but despite substantial research efforts, this upper bound remains undefeated.
The best lower bound on the approximability assuming P$\neq$NP is $3/2$~\cite{lenstra1990approximation}. Similar to the \santa problem, the restricted assignment case  with $p_{ij}\in\{p_j,\infty\}$ has also been studied extensively here. However, the barrier of~$2$ has been overcome only partially: 
with non-constructive integrality gap bounds~\cite{svensson2012santa} and better-than-$2$ approximations in quasi-polynomial time~\cite{jansen2020quasi} and for the special case of two sizes~\cite{chakrabarty20141,annamalai2019lazy}.

Intuitively, the two problems are related and in the community the belief has been mentioned that \santa admits a constant approximation if (and only if) \makespan admits a better-than-2 approximation; see e.g.,~\cite{Bansal17,BamasR2023-trees}.
Indeed, techniques for one problem often seem to apply to the other one, but no formal reductions are known. 
We give some examples for such parallels: 
\begin{compactenum}
\item[$(i)$] The configuration LP, see~\cite{bansal2006santa}, is the basis of all mentioned results for the restricted assignment variant of both problems.
\item[$(ii)$] The local search technique by Haxell~\cite{haxell1995condition} for hypergraph matching has been adopted and shown to  be very powerful for both problems. First, it has been picked up for restricted \santa \cite{DBLP:journals/talg/AsadpourFS12,PolacekS2015,annamalai2017combinatorial,davies2020tale} and later 
it has been further developed for restricted \makespan~\cite{svensson2012santa,jansen2020quasi,annamalai2019lazy}. 
\item[$(iii)$] Chakrabarty, Khanna and Li~\cite{chakrabarty20141} transferred the technique of rounding the configuration LP via LLL used for restricted \santa~\cite{feige2008allocations,HaeuplerSS11} to restricted \makespan with two job sizes and thereby provided the first slightly-better-than-two approximation in polynomial time. 
\item[$(iv)$] 
 The reduction for establishing hardness of approximation less than~$2$ 
 for \santa~\cite{bezakova2005allocating} is essentially the same as the earlier construction for the $3/2$-inapproximability for \makespan~\cite{lenstra1990approximation}.
 \item[$(v)$] The LP rounding by Lenstra et al.~\cite{lenstra1990approximation} achieves an additive approximation within the maximum (finite) processing time $p_{\max}$, i.e., the makespan is at most $\opt+p_{\max}$, which can be translated into a multiplicative $2$-approximation. 
 Bezakova and Dani~\cite{bezakova2005allocating} show the same additive approximation for \santa: each player is guaranteed a value at least $\opt - v_{\max}$. Note that for the max-min objective this does not translate into a multiplicative 
 guarantee. 
\end{compactenum}

\smallskip 
In this paper, we confirm (part of) the conjectured relation between the \makespan and the \santa problem with respect to their approximability. As our first main result we prove that a better-than-$2$ approximation for \makespan implies an $\cO(1)$-approximation for \santa.
\begin{theorem}\label{th:main-general}
    For any $\alpha \ge 2$, if there exists a polynomial-time $(2-1/\alpha)$-approximation for \makespan,
    then there exists a polynomial-time $(\alpha+\epsilon)$-approximation for \santa for any $\epsilon>0$.
\end{theorem}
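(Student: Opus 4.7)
The goal is to design a polynomial-time reduction from \santa to \makespan in which a $(2-1/\alpha)$-approximation for the latter yields an $(\alpha+\epsilon)$-approximation for the former. I would start by binary-searching the optimal \santa value $T$ and normalizing to $T = 1$, so that the target for each player becomes $1/(\alpha+\epsilon)$. I would then solve the configuration LP of \santa to obtain a fractional assignment in which every player receives total value at least $1$, and classify each resource as \emph{big} for player $i$ when $v_{ij}\geq 1/(\alpha+\epsilon)$ and \emph{small} otherwise: big resources can single-handedly satisfy a player, while small resources must be aggregated.

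The first combinatorial step is to handle big resources via a Hall-type bipartite matching, guided by LP masses, that saturates a maximal subset of players with a single big resource each. Let $P^\ast$ denote the set of remaining players; from the LP one can arrange that each $i\in P^\ast$ still has fractional access to small resources of total value at least (roughly) $1$.

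The heart of the proof is then to encode the residual allocation problem on $P^\ast$ and the remaining small resources as a \makespan instance. The natural choice is to take players as machines and small resources as jobs, with processing times $p_{ij}$ defined from $v_{ij}$ (truncated around the threshold $1/(\alpha+\epsilon)$ and combined with player-specific ``dummy'' loads) so that the configuration LP solution induces a fractional makespan schedule with every machine load at most some target $T'$, while an integral schedule of makespan at most $(2-1/\alpha)\cdot T'$ back-translates into an integral \santa assignment where every player in $P^\ast$ receives value at least $1/(\alpha+\epsilon)$. The key numerology is that the admissible \santa deficit $1-1/(\alpha+\epsilon)$ matches, up to lower-order $\epsilon$-terms, the relative slack $1-1/\alpha$ built into the \makespan approximation. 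Invoking the assumed makespan approximation and combining its output with the prior big-resource matching then yields the desired $(\alpha+\epsilon)$-approximation.

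The main obstacle will be the asymmetry between the min-max and max-min objectives: a naive reduction with $p_{ij}=v_{ij}$ fails because a makespan algorithm contracts loads rather than guaranteeing them. Overcoming this seemingly requires a careful truncation/complementation of the values together with machine-specific dummy jobs that convert each player's deficit into a makespan constraint, and a tight coupling of the approximation slacks on both sides. A secondary difficulty is handling the interplay between the big-resource matching and the small-resource makespan phase, since small resources may be fractionally distributed across players both inside and outside $P^\ast$; a preprocessing step that reassigns such ``stranded'' LP mass and modifies targets by $1+O(\epsilon)$ factors should resolve this while preserving the overall guarantee.
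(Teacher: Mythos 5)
Your plan correctly isolates the central obstacle (a makespan oracle only caps loads from above, whereas \santa needs a per-player lower bound), but the construction you sketch does not resolve it, and this is exactly where the paper's proof does all its work. With players as machines and small resources as jobs, the load of player-machine $i$ is (a function of) the value $i$ receives, so the $(2-1/\alpha)$ guarantee bounds that value from \emph{above} --- useless. Complementing the processing times, say $p_{ij}=C-v_{ij}$, does not fix this: the load becomes $C\,\abs{A_i}-v(A_i)$, so the implied lower bound on $v(A_i)$ degrades with the uncontrolled number $\abs{A_i}$ of resources assigned to $i$. To make complementation well-defined one must first fix, for each player, the \emph{multiset of value types} it aims to collect --- i.e., a configuration --- and build a gadget per (player, configuration) pair. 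This is what the paper does: after rounding so that each player has only polynomially many candidate configurations (Lemma~\ref{lemma:configuration-reduction}), it creates a configuration-machine $m_c^i$ carrying a selector job of size $1$ plus configuration-jobs of total size $\abs{c}=1$; a makespan of $2-1/\alpha$ forces configuration-jobs of total size at least $1/\alpha$ \emph{off} that machine onto resource-machines, and since each configuration-job has size $1$ on a resource-machine, each resource is claimed by at most one player. Your sketch contains neither the per-configuration gadget (configurations appear only inside the LP you solve, not in the constructed \makespan instance) nor the capacity-one mechanism preventing two players from claiming the same resource, so the ``key numerology'' you invoke cannot be instantiated as written.

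Two further points. First, the paper's reduction is purely combinatorial and never solves the configuration LP; relying on it adds the issue of approximate separation and, more importantly, supplies no decoding mechanism --- the LP solution does not tell you how to interpret the makespan oracle's output as a resource assignment. Second, the big-resource matching phase is both unnecessary in the paper's scheme (all value types are treated uniformly by the configurations) and unsound as stated: a maximal matching may consume big resources that the optimum gives to other players, and the claim that every unmatched player retains fractional small-resource value close to $1$ after ``reassigning stranded mass'' is not justified. The fix is to drop the LP-plus-matching phase and instead perform the rounding to polynomially many configurations followed by the gadget construction above.
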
 
For values of $\alpha < 2$, a $(2 - 1/\alpha)$-approximation for the \makespan problem is NP-complete and the implication would still hold, even though clearly uninteresting. Similarly to this, we will restrict our attention in later theorems to non-trivial values of $\alpha$.

We prove also the reverse direction for the {\em two-value} case, in which all resource values are $v_{ij}\in\{0, \si, \bi\}$, for some $\si, \bi \ge 0$, and all processing times are $p_{ij}\in\{\si, \bi, \infty\}$, for some $\si, \bi \ge 0$, respectively. 
This implies the equivalence of \makespan and \santa in that case.
\begin{restatable}{theorem}{twoValueEquivalence}\label{thm:2value-equivalence}\label{th:main-twovalue}
For any $\alpha \geq 2$, there exists an $\alpha$-approximation algorithm for two-value \santa if and only if there exists a $(2-1 / \alpha)$-approximation algorithm for two-value \makespan.
\end{restatable}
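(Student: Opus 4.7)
I would establish the equivalence via two reductions, each preserving the two-value structure.

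\emph{From \makespan to \santa ($(2-1/\alpha)$-\makespan $\Rightarrow$ $\alpha$-\santa).} For general inputs this is precisely Theorem~\ref{th:main-general}, which loses an additive $\epsilon$ in the approximation factor. My first step would be to inspect that reduction and verify that, starting from a two-value \santa instance, the produced \makespan instance is itself two-valued, so that the hypothesized $(2-1/\alpha)$-\makespan algorithm applies. The second step would be to remove the $\epsilon$-slack: this slack presumably arises from rounding processing times onto a polynomial-size grid, a step that is vacuous when only two values are allowed, making the reduction exact and yielding factor $\alpha$.

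\emph{From \santa to \makespan ($\alpha$-\santa $\Rightarrow$ $(2-1/\alpha)$-\makespan).} Given a two-value \makespan instance with sizes $\{\si,\bi,\infty\}$, $\si\le\bi$, I would first guess the optimal makespan $T$; there are only polynomially many candidates among nonnegative integer combinations of $\si$ and $\bi$. The key step is to construct an auxiliary two-value \santa instance that encodes the decision version ``does a schedule of makespan at most $T$ exist?''. The natural construction decomposes each machine's capacity $T$ into several ``slot players'' whose thresholds add up to $T$, while jobs become resources whose values mirror their sizes. Calling the assumed $\alpha$-\santa algorithm yields an assignment in which every slot receives at least a $1/\alpha$ fraction of its target. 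Aggregating slots per machine, the \santa deficit corresponds to a \makespan surplus; in the two-value regime this surplus can be bounded by $(1-1/\alpha)T$, matching $(2-1/\alpha)T$ once the ideal load $T$ is added back.

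\emph{Main obstacle.} I expect the critical step to be the slot decomposition in the second direction: choosing the number and thresholds of slot players per machine so that (a)~a feasible \makespan schedule of load $T$ induces a \santa solution meeting every threshold, and (b)~any $\alpha$-approximate \santa solution on those slots decodes back into a \makespan schedule of load at most $(2-1/\alpha)T$, without constant-factor losses in the mixed (big + small) configurations. Achieving the exact factor $2-1/\alpha$ rather than a weaker bound requires a tight accounting of how much overload each under-filled slot contributes, and one must separate cases in which the machine is dominated by big versus by small jobs.
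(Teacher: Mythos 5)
Your high-level shape is right --- two reductions, each engineered to keep the instances two-valued, with the observation that two values give only polynomially many configurations and hence no $(1+\epsilon)$-loss --- but both directions have concrete gaps where the paper needs (and supplies) a nontrivial idea that your sketch omits.

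For the direction \santa $\to$ \makespan (using the $(2-1/\alpha)$-\makespan algorithm), your plan is to ``verify'' that the reduction behind Theorem~\ref{th:main-general} already outputs a two-value instance. That verification fails: in Lemma~\ref{thm:reduction-polyconfig-santa-to-makespan} a configuration-job has size $v/\abs{c}$ on its configuration-machine and size $1$ on resource-machines, so as soon as different configurations have different total values $\abs{c}$ (which happens even with resource values in $\{0,\si,\bi\}$), the constructed \makespan instance has many distinct sizes. The paper's Lemma~\ref{theorem:unrelated-twovalue-santa-to-makespan} first forces the instance into a shape where every player has exactly two relevant configurations, each of total value exactly $1$ (one resource of value $1$, or $b=\ceil{1/\si}$ resources of value $1/b$); getting there requires a case analysis you do not have --- dispatching $\bi < \opt(I)/\alpha$ via the Bez\'akov\'a--Dani additive guarantee and the ``everyone can get a big resource'' case via bipartite matching --- and only then does the generic construction produce sizes in $\{1,1/b,\infty\}$.

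For the direction \makespan $\to$ \santa, your complementation intuition (``\santa deficit $=$ \makespan surplus'') matches the paper's, but your construction is missing the mechanism that forces every job to be scheduled. With only ``slot players'' per machine and jobs as resources, the \santa algorithm is free to leave job-resources with the machine-side players (indeed that only increases their value), and the decoded ``schedule'' need not assign every job. The paper's Lemma~\ref{thm:unrelated-twovalue-makespan-to-santa} introduces, besides one machine-player $q_i$ with one big resource and $k=\min\{\floor{1/\si},n\}$ small resources, a \emph{job-player} $\bhat{q}_j$ for every job, who values exactly the resources $r_i$ or $r_i^\ell$ of machines $i$ that can run $j$; since $\bhat{q}_j$ must receive value at least $t/\alpha>0$, it must grab a resource, and that grab encodes the assignment of $j$. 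Also note that \santa has a uniform max-min target, so ``slot players whose thresholds add up to $T$'' is not directly expressible; the paper instead uses a single machine-player whose total resource value is $\bi+k\si=1+t$ against a target of $t$, which is where the bound $1+t-t/\alpha\le 2-1/\alpha$ comes from. Without the job-players and this single-player accounting, your second direction does not go through as stated.
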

We then move to the restricted assignment case, where one might hope to unify previous results and possibly infer new results by showing a similar relationship.
Using our techniques, however, this seems unclear,
since the aforementioned reductions do not maintain the characteristics of the restricted assignment
case. However, it turns out to be useful to consider a matroid generalization of our problems.
Towards this, we briefly introduce the notion of a matroid.
A {\em matroid} is a non-empty, downward-closed set system $(E, \cI)$ with ground set~$E$ and a family of subsets $\cI\subseteq 2^{E}$, which satisfies the {\em augmentation property}:
\begin{equation}\label{ex1} 
\mbox{if } I,J \in \cI \mbox{ and } |I|<|J|, \mbox{ then } I+j \in  \mathcal{I} \mbox { for some } j\in J\setminus{I}.
\end{equation}
Given a matroid $\cM=(E,\cI)$, a set $I\subseteq E$
is called \emph{independent} if $I\in \cI$, and \emph{dependent} otherwise. 
An inclusion-wise maximal independent subset is called \emph{basis} of~$\cM$, and we denote the set of bases by $\cB(\cM)$. With a matroid $\cM$, we associate a rank function $r: 2^{E} \rightarrow \mathbb Z_{\ge 0}$, where $r(X)$ describes the maximal cardinality
of an independent subset of $X$. Typical examples of matroids include: linearly independent subsets of some given vectors, acyclic edge sets of a given undirected graph, and subsets of cardinality bounded by some given value. A \emph{polymatroid} is the multiset
analogue of a matroid. We refer to Section~\ref{sec:notation} for further definitions and to~\cite{schrijver2003combinatorial} for a general introduction~to~matroids.

Moving back to our two problems, we first introduce
the restricted {\em \matroidsanta} problem, where we consider again an input of resources and players and each resource $j$
has a value $v_{j}$ for each player $i$ as in the restricted assignment case.
In the matroid variant, however, each resource can potentially be assigned to multiple players, subject to
a (poly-)matroid constraint; more precisely, we require the set of players, which we assign the resource to, to be a basis of a given resource-specific (poly-)matroid, and the resource  contributes to the total value of each of these players. As one may also consider a more general variant with unrelated values $v_{ij}$ we use the phrase {\em restricted} to emphasize our model.
Similarly, we define a restricted {\em \matroidmakespan} problem by replacing the max-min with a min-max objective and asking for an assignment of each job to a set of machines which forms a basis in the job's matroid. 

Davies, Rothvoss and Zhang~\cite{davies2020tale} recently introduced a closely related matroid variant of restricted \santa. Their variant, however, is significantly more restrictive and can be
summarized as follows: they allow a single infinite value matroid-resource and
a set of ``small value'' traditional resources (without the matroid generalization).
For this variant they give a $(4 + \epsilon)$-approximation algorithm.

First, we show that in our general variant an analogous relation to Theorem~\ref{th:main-twovalue} holds. 
\begin{restatable}{theorem}{ThmReductionRestrictedMatroidSantaTwoValue}

\label{thm:restricted-matroid-equivalence}
For any $\alpha \geq 2$, there exists a polynomial-time $\alpha$-approximation algorithm for the restricted two-value \matroidsanta problem if and only if there exists a polynomial-time $(2-1/\alpha)$-approximation algorithm for the restricted two-value \matroidmakespan problem.
\end{restatable}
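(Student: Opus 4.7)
The plan is to lift the reduction underlying Theorem~\ref{th:main-twovalue} (the two-value equivalence for \santa and \makespan without matroids) to the matroid-constrained setting, handling both directions of the biconditional symmetrically. As a first step I would apply the standard preprocessing of guessing (or binary-searching) the target value $T$ of the optimal restricted two-value \matroidsanta solution and, on the other side, the target makespan $M$ of the optimal restricted two-value \matroidmakespan solution; this reduces each approximation question to a polynomial number of feasibility checks parameterized by the guess. After suitable rescaling, the two nonzero values $u < w$ and the guessed target lie in a bounded numerical range.

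Second, I would invoke the value-transformation used in Theorem~\ref{th:main-twovalue} on each feasibility instance, producing a feasibility instance of the other problem with a target related so that an $\alpha$-approximation on one side corresponds to a $(2 - 1/\alpha)$-approximation on the other. The key observation is that this reduction is \emph{local} at the level of each resource or job: it preserves the underlying bipartite structure between items (resources/jobs) and agents (players/machines), and only rewrites the scalar $v_j$ or $p_j$ attached to each item as a function of $u$, $w$, $T$, and $M$. Consequently, the per-resource matroid $\cM_j$ attached to resource $j$ carries over verbatim to become the per-job matroid of the corresponding transformed item (and vice versa), since in both \matroidsanta and \matroidmakespan the feasibility constraint at item $j$ is the same statement \textquotedblleft the assigned set of agents is a basis of $\cM_j$\textquotedblright.

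The main obstacle I anticipate is verifying that no step of the non-matroid reduction implicitly uses the fact that each resource or job is placed on a \emph{single} agent rather than on an entire basis. The quantities relevant to the reduction are aggregate loads $\sum_{j : i \in B_j} v_j$ at each agent $i$, which generalize naturally from singleton to basis assignments: a resource in a basis contributes its value to every agent in that basis, exactly as the transformed item does on the other side. Any auxiliary gadget items introduced by the original reduction to calibrate or balance loads would be equipped with the free rank-one matroid over their set of admissible agents, so that they reproduce the singleton-assignment behavior of the non-matroid case. Once these verifications are carried out, both directions of the biconditional follow and yield the claimed factor trade-off $\alpha \leftrightarrow 2 - 1/\alpha$ between restricted two-value \matroidsanta and restricted two-value \matroidmakespan.
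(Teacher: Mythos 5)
There is a genuine gap, and it lies at the heart of your plan. Your argument rests on the claim that the reduction underlying \Cref{th:main-twovalue} is ``local'' at the level of each item and \emph{preserves the bipartite structure between items and agents}, so that each per-resource matroid can be carried over verbatim to the corresponding transformed item. This is false for the reductions actually used in the two-value equivalence. In the \makespan-to-\santa direction, each machine $i$ is turned into a machine-player $q_i$ \emph{together with} a bundle of resources $r_i, r_i^1,\dotsc,r_i^k$, and each job $j$ is turned into a job-player $\hat q_j$; in the \santa-to-\makespan direction, players become jobs, configurations become machines, and resources become machines. In every case the roles of items and agents are inverted and expanded by gadgets: a resource (which carries a matroid over \emph{players}) maps to a \emph{machine}, and a job (which carries a matroid over \emph{machines}) maps to a \emph{player}. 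Since in job-matroid \makespan the matroids sit on jobs and in resource-matroid \santa they sit on resources, there is no entity on the other side to which the original matroid could be attached ``verbatim,'' and equipping only the auxiliary gadgets with rank-one matroids does not address this. The paper explicitly flags this obstruction (``the aforementioned reductions do not maintain the characteristics of the restricted assignment case''), which is precisely why \Cref{thm:restricted-matroid-equivalence} is not proved by lifting \Cref{th:main-twovalue}.

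The paper's actual route is different in two essential ways that your proposal omits. First, by \Cref{prop:matroid-makespan-reduce-jobs} one merges all items of equal value into a single polymatroid (via polymatroid sum, with a later decomposition via polymatroid intersection), so that each instance has exactly two items. Second, the reduction between the two problems is \emph{complementation via polymatroid duality}: for a \makespan instance with jobs of sizes $p_1,p_2$ and polymatroids $\PM_1,\PM_2$ over the machines $E$, one truncates $\PM_j$ at $k_j=\lfloor 1/p_j\rfloor$, forms the dual polymatroid with respect to the vector $k_j\cdot E$, and builds a \santa instance on the \emph{same} ground set $E$ (players $=$ machines) with those dual polymatroids; solutions correspond via $\overline{x}_j(e)=k_j-x_j(e)$, which converts a load bound $1$ into a value bound $t=k_1p_1+k_2p_2-1$ and transfers the approximation factors as $\alpha\leftrightarrow 2-1/\alpha$ because $t\le 1$. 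If you want a correct proof, this duality step (or some substitute for it) is the missing idea; the gadget reductions of \Cref{th:main-twovalue} cannot be made to carry the matroids.
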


Since the matroid version is a new problem, we cannot directly infer any approximation results for it. 
Hence, we develop a polynomial-time approximation algorithm for the restricted \matroidsanta problem.
\begin{theorem}\label{th:main-approx}
    For any $\epsilon > 0$, there exists a polynomial-time $(8+\epsilon)$-approximation algorithm for the restricted \matroidsanta problem and a $(4 + \epsilon)$-approximation in the two-value~case.
\end{theorem}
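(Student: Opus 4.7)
The plan is to extend the Haxell-style local-search paradigm for restricted \santa (see \cite{PolacekS2015,annamalai2017combinatorial,davies2020tale,HaxellS2023improved}) to accommodate the matroid constraint governing how each resource is spread across players. Fix a target value $T$, obtained by binary search on the configuration LP for restricted \matroidsanta. For each player $i$, a \emph{configuration} is either (a) a single \emph{big} resource $j$ with $v_j \ge T/2$, paired with a basis of $j$'s matroid that contains $i$, or (b) a bundle of \emph{small} resources ($v_j < T/2$) of total value at least $T$, each paired with a basis of its matroid containing $i$. Separation for the corresponding LP reduces to separation over matroid-rank polytopes, which is polynomial via submodular function minimization. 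The goal of the rounding stage is to produce an integral assignment in which every player is \emph{covered}, i.e., receives a configuration of value at least $T/\alpha$, with $\alpha = 8$ in general and $\alpha = 4$ in the two-value case.

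Starting from the empty assignment, the algorithm repeatedly grows an \emph{alternating tree} rooted at an uncovered player $p^{\star}$. Tree nodes alternate between (i) candidate configurations that would cover some player but whose resources are already partly in use and (ii) currently covered players whose committed configurations conflict with a shallower tree node. The tree is expanded by adding, for each conflict, an LP-supported set of alternative configurations for the blocked player; once the set of addable alternatives is ``dense enough'' relative to the blockers (as measured by a lexicographic signature in the spirit of \cite{annamalai2017combinatorial}), a batch swap is performed that re-routes all involved resources simultaneously to new bases of their respective matroids, covering $p^{\star}$ while preserving coverage of every previously covered player. The signature strictly decreases at every successful swap, yielding polynomial termination.

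The central technical obstacle is replacing the single-owner exchange argument of the classical setting with a matroid-aware one: freeing resource $j$ from a covered player $p$ while including a new player $p'$ requires producing a fresh basis of $j$'s matroid with the right membership; moreover, consistent bases must be chosen simultaneously for all resources displaced by a batch swap. The key tool will be the basis exchange axiom together with a matroid-intersection/union argument applied to the contracted matroids obtained by forcing and forbidding particular players, which generalizes the alternating-path reasoning of \cite{PolacekS2015,annamalai2017combinatorial}. This is where the extra factor of two in the general bound enters: covering a player via a bundle of small resources in case (b) forces an additional relaxation step, since the bundle contributes only its total weight while each resource still drags along a full basis-worth of matroid commitments. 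In the two-value case the small class collapses to one uniform value, so the bundle analysis tightens by a factor of two and the bound sharpens to $4 + \eps$; alternatively, any future algorithm for two-value \matroidmakespan can be plugged in through Theorem~\ref{thm:restricted-matroid-equivalence}, although we do not rely on that route here.
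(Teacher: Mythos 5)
Your high-level plan---extend the Haxell/Annamalai-style local search to the matroid setting and analyze it against the configuration LP---runs into a fatal obstacle that the paper identifies explicitly: for the restricted \matroidsanta problem the configuration LP has an unbounded integrality gap, so no rounding or local-search analysis measured against it can yield a constant factor. The paper exhibits an instance (essentially the classical $m$-vs-$1$ gap instance for unrelated \santa from \cite{bansal2006santa}) that is expressible in the polymatroid model with a uniform matroid of rank $|E|-1$ and $f(X)=|X|$; there the configuration LP certifies value $m$ while the integral optimum is $1$. Every load-bearing step of your argument leans on that LP: the binary search for $T$, the ``LP-supported set of alternative configurations'' used to expand the tree, and the ``dense enough'' condition whose proof in \cite{annamalai2017combinatorial,davies2020tale} is exactly an LP-infeasibility argument. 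Without a valid relaxation, a failed expansion step certifies nothing, and the algorithm cannot distinguish infeasibility of $\alpha T$ from its own failure. The paper's actual proof is organized quite differently: it first reduces the general problem to a two-element core problem (one matroid of value $\infty$, one polymatroid of value $1$; this reduction, via an LP-rounding lemma losing an additive $\max_j v_j$, is where the factor $2$ between $8+\epsilon$ and $4+\epsilon$ arises---not from a ``bundle analysis''), and then replaces the configuration LP by a bespoke combinatorial \emph{certificate of infeasibility}: a pair of sets $Z_2\subseteq Z_1$ with a rank deficiency condition on $Z_1$ and small marginal values $f(i\mid b\cdot(Z_2-i))<b$, $f(i\mid 2b\cdot Z_2)<2b$ on $Z_2$ and $Z_1\setminus Z_2$. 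The local search (addable set $A$, blocking set $B$, recursion on $B$) is designed so that each failure mode produces such a certificate.

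A secondary gap: your batch-swap step asserts that ``consistent bases must be chosen simultaneously for all resources displaced'' and that basis exchange plus matroid intersection/union handles this, but this is precisely the conceptual difficulty the paper flags---in the polymatroid formulation there are no individual resources to re-route, only marginal values of a submodular function, so the per-resource alternating-path reasoning of \cite{PolacekS2015,annamalai2017combinatorial} has no direct analogue. The paper's resolution is to do all bookkeeping through contractions of the matroid (by $C$) and of the polymatroid (by $b\cdot(A\cup B)$) before recursing, and to prove feasibility of the re-assembled solution from marginal-value inequalities (Lemmas~\ref{lem:inv-feasible} and~\ref{lem:recursive-feasible}) rather than from explicit exchanges. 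As written, your proposal does not supply a substitute for either the certificate or this contraction machinery, so the approximation guarantee and the termination argument both remain unproven.
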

This is achieved via a non-trivial generalization of the commonly used local search technique for restricted assignment, see Section~\ref{sec:algorithmic-techniques} for a technical overview.
We prove the variant for two values and then give a reduction from the general case, see Lemma~\ref{lemma:two-value-santa-to-general-matroid}.
Apart from the curiosity-driven motivation for a matroid generalization of the classical  
scheduling problems and the usage through our reductions,
we present in Section~\ref{sec:applications} sample applications where such a variant arises.
Next, we state two immediate implications of our results to the state-of-the-art
of the \makespan problem.

\begin{corollary}\label{cor:1.75}
For every $\epsilon > 0$, there exists a polynomial-time $(1.75+\epsilon)$-approximation algorithm for the restricted two-value \matroidmakespan problem.
\end{corollary}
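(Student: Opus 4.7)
The plan is to combine the $(4+\epsilon')$-approximation for the two-value restricted \matroidsanta problem from Theorem~\ref{th:main-approx} with the reduction in Theorem~\ref{thm:restricted-matroid-equivalence} in its backward direction. Setting $\alpha = 4 + \epsilon'$ in Theorem~\ref{thm:restricted-matroid-equivalence}, an $\alpha$-approximation for restricted two-value \matroidsanta yields a polynomial-time $(2 - 1/(4+\epsilon'))$-approximation for restricted two-value \matroidmakespan. Since $2 - 1/4 = 1.75$, for any desired $\epsilon > 0$ it suffices to choose $\epsilon'$ small enough that $2 - 1/(4+\epsilon') \le 1.75 + \epsilon$.

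Concretely, given $\epsilon > 0$, I would first handle the essentially trivial regime $\epsilon \ge 1/4$ by invoking the classical $2$-approximation for unrelated-machine \makespan~\cite{lenstra1990approximation,ShchepinV05} (which applies to our restricted \matroidmakespan after treating the matroid constraint as an explicit assignment requirement, or more simply because $1.75 + \epsilon \ge 2$ already). In the interesting regime $0 < \epsilon < 1/4$, rearranging $2 - 1/(4+\epsilon') \le 1.75 + \epsilon$ gives the explicit choice $\epsilon' \le \frac{1}{1/4 - \epsilon} - 4 = \frac{4\epsilon}{1 - 4\epsilon}$, which is a strictly positive constant whenever $\epsilon$ is. Plugging this $\epsilon'$ into Theorem~\ref{th:main-approx} produces a polynomial-time $(4+\epsilon')$-approximation for the two-value \matroidsanta instance, and then Theorem~\ref{thm:restricted-matroid-equivalence} transforms it into the desired $(1.75+\epsilon)$-approximation for the two-value \matroidmakespan instance in polynomial time.

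There is no real obstacle here beyond bookkeeping: both Theorem~\ref{th:main-approx} and Theorem~\ref{thm:restricted-matroid-equivalence} are stated as polynomial-time guarantees with the right parameter dependence, so the composition remains polynomial time, and the chosen $\epsilon'$ is a constant whenever $\epsilon$ is a constant. Thus the corollary follows immediately as a black-box combination of the approximation algorithm (Theorem~\ref{th:main-approx}) and the equivalence for the restricted two-value matroid setting (Theorem~\ref{thm:restricted-matroid-equivalence}).
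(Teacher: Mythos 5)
Your proposal is correct and matches the paper's own (one-line) justification exactly: the corollary is obtained by composing the $(4+\epsilon')$-approximation of Theorem~\ref{th:main-approx} for restricted two-value \matroidsanta with the $\alpha \mapsto 2-1/\alpha$ direction of Theorem~\ref{thm:restricted-matroid-equivalence}. One harmless slip: $\frac{1}{1/4-\epsilon}-4 = \frac{16\epsilon}{1-4\epsilon}$ rather than $\frac{4\epsilon}{1-4\epsilon}$, but since you only need \emph{some} sufficiently small positive $\epsilon'$, the argument is unaffected.
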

For completeness, we also provide a $2$-approximation of the restricted \matroidmakespan problem (with any number of values) that follows from standard techniques, see Theorem~\ref{thm:rounding2}.
Corollary~\ref{cor:1.75} holds in particular true for the restricted \makespan problem, thus
improving upon the previously best polynomial-time approximation rate of $1 + 2/\sqrt{5} + \epsilon \approx 1.8945$~\cite{annamalai2019lazy}.
The corollary follows from combining Theorems~\ref{thm:restricted-matroid-equivalence} and~\ref{th:main-approx}. In their work on restricted \santa, Davies et al.~\cite{davies2020tale} managed to reduce the technical complexity of previous works, which handled complicated path decompositions explicitly, using a cleaner matroid abstraction.
Our algorithm shows that such a simplification is also possible for
restricted two-value \makespan, which was not clear before.

\begin{corollary}
For every $\epsilon > 0$, there exists a polynomial-time $(2 - 1 / n^{\epsilon})$-approximation algorithm and a quasi-polynomial-time $(2 - 1/\mathrm{polylog}(n))$-approximation algorithm for two-value \makespan.
\end{corollary}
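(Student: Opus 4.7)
The plan is to derive the corollary by composing Theorem~\ref{th:main-twovalue} with known approximation algorithms for the general \santa problem, in particular the result of Chakrabarty, Chuzhoy and Khanna~\cite{chakrabarty2009allocating}. That work provides, for every fixed $\epsilon>0$, a polynomial-time $n^{\epsilon}$-approximation for general \santa together with a quasi-polynomial-time $\mathrm{polylog}(n)$-approximation. Since two-value \santa is simply a special case of the general problem, both algorithms apply without change to two-value inputs.

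For the polynomial-time statement, I would instantiate the ``only if'' direction of Theorem~\ref{th:main-twovalue} with $\alpha = n^{\epsilon}$: an $\alpha$-approximation for two-value \santa is turned into a $(2 - 1/\alpha) = (2 - 1/n^{\epsilon})$-approximation for two-value \makespan, provided $\alpha \geq 2$. The inequality $n^{\epsilon} \geq 2$ fails only for finitely many values of $n$ (for any fixed $\epsilon$); for those the instance has constant size and may be solved optimally by brute force, so the overall running time remains polynomial. For the quasi-polynomial statement I would apply exactly the same argument with $\alpha = \mathrm{polylog}(n)$, obtaining a $(2 - 1/\mathrm{polylog}(n))$-approximation for two-value \makespan in quasi-polynomial time.

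The main (and essentially the only) obstacle I anticipate is a running-time check: the reduction underlying the ``only if'' direction of Theorem~\ref{th:main-twovalue} must be efficient enough to preserve the complexity of the source algorithm. The polynomial-time preservation is already built into the statement of the theorem. For the quasi-polynomial case, one has to verify that the transformation from a two-value \makespan instance to a two-value \santa instance only blows up the instance size polynomially, so that composition with the quasi-polynomial \santa algorithm still yields a quasi-polynomial algorithm for \makespan. I expect this to follow directly from the combinatorial nature of the reduction, but it is the point I would explicitly verify before declaring the proof complete.
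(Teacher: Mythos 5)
Your proposal matches the paper's proof exactly: the corollary is obtained by running the $n^{\epsilon}$-approximation (resp.\ quasi-polynomial $\mathrm{polylog}(n)$-approximation) of Chakrabarty et al.\ on the two-value \santa instance and transferring it through the \santa-to-\makespan direction of Theorem~\ref{th:main-twovalue}. The running-time and instance-size checks you flag are exactly the (routine) verifications needed, and the polynomial blow-up in the reduction is harmless for the ratio as well since $\epsilon>0$ is arbitrary.
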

This result follows from the algorithm of Chakrabarty et al.~\cite{chakrabarty2009allocating} and Theorem~\ref{th:main-twovalue}, and
improves upon the best-known polynomial-time approximation factor of $2 - 1/m$ for $m$ machines~\cite{ShchepinV05}.
\pagebreak
\subsection{Applications}\label{sec:applications}
Next, we lay out three sample applications of our matroid generalization.

First, consider service centers that offer various types of services to clients. The specific service that such a center offers has some value associated with it and it can only be provided to a limited number of clients, a typical constraint appearing for example in capacitated facility location problems.
Furthermore, a service center can serve only clients that are located in the same region and a client can only receive a specific type of service once, i.e., by a single center, since receiving the same service twice yields no additional value.
The services should be assigned to the clients in such a way that all clients are treated \enquote{fairly} with respect to their total value for the services. That is, we want to maximize the total value for the least happy client.  This can be modeled as a resource-matroid \santa problem with clients being players and services (one per service type) being resources. The set of clients that can receive a particular type of service can be modeled as a transversal matroid\footnote{Given a bipartite graph $G=(J \cup S, E)$, a set $S' \subseteq S$ is independent in the {\em transversal matroid} $\cM=(S, \cI)$ if there is a matching in $G$ which covers $S'$.}.
In the classical (restricted) \santa problem, one cannot express the constraint that a client can receive each type of service only once.

As a second example, consider a program committee (PC) for a scientific conference. We would like to assign papers to PC members such that the workload is balanced in the sense that we minimize the maximum workload over all PC members. 
We will view this as a \makespan problem with PC members being machines and submissions being jobs.
PC members have declared which submissions they would agree to assess. Submissions may be of different types such as ``short papers'' or ``regular paper'' with varying workloads.
In a typical conference, each submission needs to be assessed three times and obviously it is important that this is done by different PC members; hence, we cannot simply model this as a traditional restricted assignment problem where we duplicate a job three times. However, it is easy to model this using matroids by having one basis for each triple of PC members that agree to assess this submission, i.e., we have a job-matroid \makespan problem with a uniform matroid\footnote{A {\em uniform matroid} $\cM=(X,\cI)$ of rank $r$ has as independent sets all subsets of $X$ of cardinality at most~$r$.} of rank $3$.

Our third illustration is a job-matroid \makespan problem, in which a 
graphic matroid\footnote{Given an undirected graph $G = (V,E)$, the {\em graphic matroid} $\cM = (E, \cI)$ has as independent sets the cycle-free edge sets (forests), i.e., $\cI = \{ F \subseteq E: F \text{ is acyclic in } G\}.$} allows us to model connectivity requirements.
In cloud computing and data centers, a number of servers is available to execute multiple applications at the same time.  Each application is executed on a subset of servers and these servers must be connected to allow for communication. We assume that these connections are direct in the sense that an application may not use additional servers as Steiner nodes.
We need to reserve a certain bandwidth for each application's communication, which depends on characteristics of the application itself. The task is to select carefully on which links to reserve the bandwidth for each individual application such that load on these links is balanced, more precisely, we want to select links to minimize the maximum total bandwidth requirement imposed on any link. This can be modelled as job-matroid \makespan problem with jobs being applications, machines being the edges (links) in a graph formed by the allocated servers, and the load (processing time) being the requested bandwidth of the application. The task is to choose for each job a spanning tree, that is, a basis in the job-dependent graphic matroid such that the maximum total bandwidth on any edge is minimized. This cannot be modeled as classical \makespan or restricted assignment problem, since it cannot capture the structure of a graphic matroid.

\subsection{Algorithmic techniques}\label{sec:algorithmic-techniques}
Our main algorithmic contribution lies in a local search algorithm for the new variant restricted \matroidsanta, see Theorem~\ref{th:main-approx}.
We give an overview of the method here, its main technical merits, and how it relates to previous works.
The specific local search method that we refer to originates in an algorithmic proof for a hypergraph matching theorem by Haxell~\cite{haxell1995condition}.
The theorem is a generalization of Hall's theorem for bipartite graphs to hypergraphs
and Haxell's proof can be thought of as a very non-trivial extension of the augmenting path method in bipartite graphs.
Asadpour et al.~\cite{DBLP:journals/talg/AsadpourFS12} made the connection to restricted \santa. In addition to a black-box reduction to the specific hypergraph matching problem, they also reinterpreted Haxell's method as a sophisticated
algorithm for restricted \santa.

Although not explicitly mentioned in earlier works, this new algorithm can be
thought of as a generalization of the typical augmentation algorithm for matroid partition: the core of the problem lies in the case where we have two values for the resources, more precisely, either the value of a resource is infinitely large or it is a unit value~$1$. This case is up to constants equivalent to the general problem, see e.g.,~\cite{bansal2006santa}.
Observe now the following structure: we need to select a subset $I_M$ of players such that there exists a left-perfect matching of $I_M$ to the infinite-value resources
and such that there exists a $b$-matching of all players in $P\setminus I_M$ to the small resources (each player is matched to $b$ resources, each resource to at most one player), where $b$ has to be maximized.
The sets $I_M$ that fulfill the condition above form a transversal matroid, but unfortunately the sets of players for which there is a $b$-matching does not (for a fixed $b > 1$). If they \emph{would} actually form a matroid, then the problem could easily be solved by matroid partition, where given two matroids $\cM_1 = (E, \cI_1), \cM_2 = (E, \cI_2)$ over the same ground set, we want to find two independent sets $I_1\in\cI_1$, $I_2\in\cI_2$ that partition the ground set, i.e., $I_1 \dot\cup I_2 = E$ (here we focus on the variant with two matroids, although also more than two matroids may be allowed).
Matroid partition--and the equivalent problem of matroid intersection--can be solved
in polynomial time by an augmenting path algorithm that repeatedly increases the
union of $I_1$ and $I_2$ by first swapping elements between these two sets.
Although, as mentioned above, the $b$-matching does not have a matroid structure,
the algorithmic paradigm of swapping elements between matching and $b$-matching
in order to increase their union still works once we allow approximation of $b$.

The constraint on set $I_M$ forms a transversal matroid (implied by the infinite-value resources) and Davies et al.~\cite{davies2020tale} then showed that
the algorithmic idea generalizes to arbitrary matroid structures.
In their problem, however, the $b$-matching remains the same without further abstraction, whereas in our further generalization, we embrace the polymatroid structure of the $b$-matching.
We now require instead of a $b$-matching that the multiset $b\cdot (P\setminus I_M)$ (having $b$ copies for each element in $P\setminus I_M$) is in some given polymatroid.
We believe that this abstraction is the logical conclusion for this line of research.

Although a seemingly natural extension, it is highly non-trivial to generalize the
existing algorithm to our setting. Firstly, there are conceptional issues that come from the fact that previous methods revolve around reassigning resources or jobs and those do not exist explicitly in our polymatroid. 
Secondly, a serious technical problem comes from the lack of a certificate of infeasibility. The design of the algorithm is closely connected to a certificate of infeasibility, which is provided (for analysis' sake) in case the
algorithm fails. For example, in matroid partition when the augmenting path algorithm
fails, one can derive a set $X$ such that $r_1(X) + r_2(X) < |X|$, where $r_1$ and $r_2$ are the rank functions of the matroids~\cite{knuth1973matroid}. This clearly proves infeasibility.
In applications of the method to restricted \santa or \makespan,
the role of the certificate of infeasibility was taken by the configuration LP.
It is unclear how one would generalize the configuration LP beyond the partial matroid generalization of Davies et al.~\cite{davies2020tale},
since it heavily relies on the matching structure of small resources. But even worse, we show that already in a special case for which
the configuration LP is still meaningful, its integrality gap is large; hence it is not helpful. Consider the following instance (which appeared already in \cite{bansal2006santa}). We have a first set $E$ of $m$ players, and a set $S$ of $m-1$ resources. For each player $i\in E$, there is an additional set $E_{i}$ of $m$ players and an additional set $S_{i}$ of $m+1$ resources. Each player $i\in E$ has valuation $m$ for all the resources in $S$, valuation $1$ for all the resources in $E_{i}$, and valuation $0$ for the remaining resources. For any $i\in E$, each player in $E_{i}$ has valuation $m$ for the resources in $S_{i}$, and valuation $0$ for other resources. It was showed in \cite{bansal2006santa} that in this example the configuration LP will give of $m$ to this instance, while it is clear that the integral optimum is at most $1$. Indeed, there are only $m-1$ resources in $S$, hence at least one player $i$ from $E$ will need to take resources from $S_{i}$. But $S_{i}$ contains only $m+1$ resources while $m$ players in $E_{i}$ need to take one resource each from the corresponding set $S_{i}$. Interestingly, this example can be captured by our polymatroid variant. We have universe $E$ and the uniform matroid of rank $|E|-1$ (i.e. $r(X)=|X|$ for any $X\neq E$, and $r(E)=|E|-1$), and $f(X)=|X|$, which models that a set of $|X|$ players in $E_1$ can be assigned a maximum of $|X|$ resources from the set $\bigcup_{i\in E} S_{i}$. In the matroid problem, the players in $\bigcup_{i\in E} E_{i}$
only appear implicitly. One may also derive $f$ by defining a natural polymatroid that assigns the items in $\bigcup_{i\in E} S_{i}$ and from which we then contract the players of $\bigcup_{i\in E} E_i$. As stated above, the configuration LP does not give us a good lower bound (or certificate of infeasibility) in this case.
Another way to find a certificate would be, in the spirit of matroid partition, to try to find a set $X$, for which $r(X) + f(X) / b < |X|$, where $r$ is the rank function of the matroid and $f$ the submodular function corresponding to the polymatroid. Although this would prove infeasibility of a solution of value $b$, it is not sufficient as seen in the example above. As previously detailed, it is clear that no solution of value more than $1$ exists. On the other hand, $r(X) + f(E)/|E| \ge r(X) = |X|$ for all $X \neq E$ and $r(E) + f(E)/|E| = |E| - 1 + 1 \ge |E|$. Hence, the certificate above is also not sufficient to rule out a solution of value $|E| = m$.

To overcome this issue, we develop the following certificate of infeasibility. Let $X\subseteq E$ with $f(X) \le b\cdot |X|$, in the example above we can take $X = \emptyset$. Further, let $Y\supseteq X$ have a not too large rank of $r(Y) < |Y| - |X| / 2$ and small marginal values for each element, that is, 
$f(i\mid X) < b$ for all $i\in Y\setminus X$. In the example above, take $Y = E$.
We claim that this constitutes a proof that no solution of value $3b$ exists.
Suppose that for $I_P\subseteq Y$ we have that $3b\cdot I_P$ is in the polymatroid. Then
\begin{equation*}
    2 b \cdot |I_P| \le 3b \cdot |I_P| - \sum_{i\in I_P\setminus X} f(i \mid X) \le f(X) \le b\cdot |X| \ ,
\end{equation*}
where we use that $3b \cdot |I_P| \le f(I_P) \le f(X) + \sum_{i\in I_P\setminus X} f(i\mid X)$.
It follows that $|I_P| \le |X| / 2$, but because of its
rank the matroid cannot cover all remaining elements.
Our algorithm is carefully designed to either output
a solution of value $b$ or to prove, using the idea above,
that no solution exists for $(4 + \epsilon)b$. Note
that we use a slightly higher constant compared to the
miniature above and we also require a more complicated variant of the certificate. This is mainly for efficiency reasons: in
order to achieve polynomial running time we need to work with weaker conditions.

\subsection{Definitions and notation} \label{sec:notation}

We write $\cO_\epsilon(\cdot)$ as the standard $\cO$-notation, where we suppress any factors that are functions in only $\epsilon$.
For a set $X$ and an element $i$ we write $X + i := X\cup\{i\}$. Similarly, $X - i := X \setminus \{i\}$. 

Let $E$ be a universe.
For a vector $x \in \mathbb{R}^{E}$, we write $x(e)$ for the entry of $x$ corresponding to $e \in E$, and $x(S) = \sum_{e \in S} x(e)$.
For some $X\subseteq E$, we write $b \cdot X$ as the vector $y \in \mathbb Z^E$ with
$y(e) = b$ for $e\in X$ and $y(e) = 0$ for $e\notin X$.
A set function $f : 2^{E} \to \mathbb{R}$ is submodular if for all subsets $A,B \subseteq E$ holds $f(A) + f(B) \geq f(A \cup B) + f(A \cap B)$, and monotone if for all $A \subseteq B \subseteq E$ holds $f(A) \leq f(B)$.
Let $f: 2^{E} \to \mathbb{Z}_{\geq 0}$ be a monotone submodular integer function with $f(\emptyset) = 0$. 
An \emph{integer polymatroid} over $E$ associated with $f$ is defined as
\[
 \PM = \{x \in \mathbb{Z}_{\geq 0}^{E} : x(S) \leq f(S) \; \forall S \subseteq E \}.
\]
In the following we always refer to integer polymatroids when talking about polymatroids.
A polymatroid can be interpreted as the multiset generalization of a matroid and most concepts of matroids translate easily to polymatroids. Every element $x \in \PM$ can be seen as an independent multiset in which an element $e \in E$ appears with multiplicity $x(e)$. A polymatroid is also downward-closed, that is, $x \in \PM$ implies $y \in \PM$ for any $0 \leq y \leq x$, and satisfies the augmentation property, that is, if $x,y \in \PM$ with $x(E) < y(E)$, then there is some $e \in E$ such that $x' \in \PM$ with $x'(e) = x(e) + 1$ and $x'(e') = x(e')$ for all $e' \in E - e$.
In particular, any matroid is a polymatroid.

A basis of a polymatroid $\PM$ is an element $x \in \PM$ which satisfies $f(E) = x(E)$, meaning that all bases have the same cardinality (in terms of multisets). We denote the set of bases of $\PM$ by $\cB(\PM)$.
For a given polymatroid $\PM$ and a constant $k \in \mathbb{Z}_{\geq 0}$, the set $\{x \in \PM : x(e) \leq k \; \forall e \in E \}$ is again a polymatroid.

For a given polymatroid $\PM$ of the submodular function $f$, and some vector $z \in \mathbb{Z}_{\geq 0}^E$ with $x \leq z$ for all $x \in \PM$, we define the \emph{dual polymatroid} $\dPM$ of $\PM$ with respect to $z$ 
via the set function $g$ with
\[
g(S) = z(S) + f(E \setminus S) - f(E)
\]
for every $S \subseteq E$.
This function is submodular, monotone and satisfies $g(\emptyset)=0$, hence this definition is well-defined. Note that if $x \in \cB(\PM)$ it follows $g(E) = z(E) + f(\emptyset) - f(E) = z(E) - x(E)$, and therefore $z - x \in \cB(\dPM)$.

If a polymatroid $\PM$ associated with a function $f$ is given as an input for a problem, we assume that it is represented in form of a value oracle for $f$. We can test whether some vector $x$ is in $\PM$ by checking whether the minimum of the submodular function $f(S) - x(S)$ is non-negative, which can be done with a polynomial number of value queries to $f$.

We refer for an extensive overview over polymatroids to Schrijver~\cite[chapters 44 - 49]{schrijver2003combinatorial}.
We now give precise definitions of the matroid problems we consider. 

\begin{definition}[Resource-matroid \santa]
In the restricted {\em resource-matroid} \santa problem, there are sets of $m$ players $P$ and $n$ resources $R$ 
with values $v_{j}$ for all $j \in R$.
Further, for every resource $j \in R$ there is an integer polymatroid $\PM_j$ over $P$.
The task is to allocate each resource $j\in R$ to a basis $x_j\in \cB(\PM_j)$ and let each player $i$ profit from the resource $j$ with value $v_{j} \cdot x_j(i)$.  The goal is to maximize the minimum total value any player receives, i.e., 
\(
    \min_{i\in P} \sum_{j \in R} v_{j} \cdot x_j(i)  \,.
\)
\end{definition}

\begin{definition}[Job-matroid \makespan]
In the restricted {\em job-matroid} \makespan problem, there are sets of $m$ machines $M$ and $n$ jobs $J$ with sizes~$p_{j}$ for all $j \in J$. Further, for every job $j \in J$ there is an integer polymatroid $\PM_j$ over $M$. The task is to allocate each job $j \in J$ to a basis $x_j \in \cB(\PM_j)$ which means that $j$ contributes load $p_{j} \cdot x_j(i) $ to the total load of machine $i$. The goal is to minimize the maximum total load over all machines, i.e., 
\(
    \max_{i\in M} \sum_{j \in J} p_{j} \cdot x_j(i) \, .
\)
\end{definition}

These new matroid allocation problems generalize the restricted assignment variants of \santa and \makespan, respectively. In fact, the matroid variant with a uniform matroid 
of rank~$1$ corresponds to the respective traditional problem. 

Note that in restricted \matroidsanta it is equivalent to require that $x_j \in \PM_j$ for each resource $j$ instead of $x_j \in \cB(\PM_j)$, since we can always increase $x_j$ to reach a basis without making the solution worse. In restricted \matroidmakespan this is not the case.

Both matroid problems can be reduced to instances where the number of polymatroids is equal to the number of distinct job sizes (resource values). This is because we can sum polymatroids associated with jobs (resources) of equal size (value) to a single one, 
and then decompose a basis for such a merged polymatroid  
into bases for the original polymatroids via polymatroid intersection.
Formally, we get the following proposition. For a more detailed explanation, see~\Cref{apx:notation}. 

\begin{restatable}{proposition}{propreducejobs}\label{prop:matroid-makespan-reduce-jobs}
For any $\alpha \geq 1$, if there exists a polynomial-time $\alpha$-approximation algorithm for restricted \matroidmakespan (\matroidsanta) with $h$ jobs (resources), then there exists a polynomial-time $\alpha$-approximation algorithm for restricted \matroidmakespan (\matroidsanta) with $p_j$ resp. $v_j \in \{\bi_1,\ldots,\bi_h\}$ and $\bi_1,\ldots,\bi_h \geq 0$.
\end{restatable}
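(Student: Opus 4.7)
The plan is to merge jobs (resp.\ resources) of identical size (resp.\ value) into a single polymatroid constraint, apply the assumed $\alpha$-approximation to the resulting $h$-job (resp.\ $h$-resource) instance, and then decompose the returned solution back into assignments for the original jobs. We describe the \matroidmakespan version; the \matroidsanta case is entirely symmetric, with ``value'' and ``player'' replacing ``size'' and ``machine''.

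For each distinct size $\bi_t$ occurring in the input, let $J_t$ be the set of jobs of size $\bi_t$ and define $\hat{f}_t := \sum_{j \in J_t} f_j$, where $f_j$ is the submodular function associated with $\PM_j$. Sums of monotone submodular integer functions vanishing at $\emptyset$ are again of that type, so $\hat{f}_t$ defines a polymatroid $\hat{\PM}_t$, and a value oracle for $\hat{f}_t$ is obtained from the oracles for the $f_j$ by a single summation. The reduced instance then has exactly one job of size $\bi_t$ and polymatroid $\hat{\PM}_t$ for each distinct size, giving at most $h$ jobs in total.

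The crucial property is that any basis $\hat{x}_t \in \cB(\hat{\PM}_t)$ can be decomposed in polynomial time into bases $x_j \in \cB(\PM_j)$, $j \in J_t$, with $\hat{x}_t = \sum_{j\in J_t} x_j$. Existence follows from the identification of the Minkowski sum $\sum_{j\in J_t}\PM_j$ with the polymatroid of $\hat{f}_t$, together with $\hat{x}_t(M) = \hat{f}_t(M) = \sum_{j\in J_t} f_j(M)$, which forces each component $x_j$ to exhaust the ground-set value of $f_j$ and hence to be a basis of $\PM_j$. Algorithmically, extracting such a decomposition is a standard polymatroid sum/partition task that can be carried out in polynomial time from the value oracles for the $f_j$~\cite{schrijver2003combinatorial}. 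Because all jobs in $J_t$ share size $\bi_t$, the load that $J_t$ contributes to any machine $i$ after decomposition equals $\bi_t \sum_{j \in J_t} x_j(i) = \bi_t \hat{x}_t(i)$, matching the load of the merged job on~$i$.

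Consequently, feasible solutions of the original and merged instances correspond to one another with identical per-machine loads, so the two optima coincide. Applying the assumed $\alpha$-approximation to the $h$-job reduced instance and then decomposing each returned basis yields a feasible assignment for the original instance with makespan at most $\alpha \cdot \opt$. The only genuine obstacle is the basis-decomposition step, but once polymatroid sum/partition is invoked as a black box, the rest is bookkeeping; in particular no loss in the approximation factor is incurred.
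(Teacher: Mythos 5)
Your proposal is correct and follows essentially the same route as the paper: merge equal-size jobs by summing their submodular functions, use the identification of that sum with the Minkowski sum of the polymatroids to guarantee a basis decomposition, and observe that equal sizes make the per-machine loads match. The only difference is that the paper spells out the decomposition step explicitly (via a polymatroid intersection on disjoint copies of the ground set with degree constraints), whereas you invoke it as a standard black box from Schrijver, which is legitimate.
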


\section{Santa Claus and makespan reductions}
\label{sec:santa-to-makespan}

In this section we present our first two reductions and prove \Cref{th:main-general} and \Cref{th:main-twovalue}. 
The precise statements given in the following subsections imply these results. They are formulated as subroutines for a standard guessing framework (see e.g.\ \cite{HochbaumS87}), which we briefly explain here. 
Consider a \santa instance $I$ for which we want to compute an $\alpha$-approximate solution. We first guess $\opt(I)$ with some variable $T$ using binary search as follows.
For some guess $T$, we scale down all values of $I$ by factor $T$ and obtain $I'$. 
Then, we prove that if $\opt(I) \geq T$ (and $\opt(I') \geq 1$), our subroutine finds a solution for $I'$ with an objective value of at least $1/\alpha$. 
If we do not obtain such a solution, we can conclude $\opt(I) < T$ and safely repeat with a smaller guess. Otherwise, we repeat with a larger guess.
After establishing $T = \opt(I)$, the subroutine gives us a solution with an objective value of at least $T/\alpha$. For \makespan one can design an analogous procedure. 

\subsection{From Santa Claus to makespan minimization}

In this section, we present an approximation preserving reduction (up to a factor of $1+\epsilon$) from \santa to \makespan. More precisely, we show the following result. 

\begin{lemma}\label{thm:santa-to-makespan} 
For any $\alpha \geq 2$ and $\epsilon > 0$, given an instance $I$ of \santa with $\opt(I) \geq 1$, we can construct in polynomial time an instance $I'$ of \makespan such that, given a $(2-1/\alpha)$-approximate solution for $I'$, we can compute in polynomial time a solution for~$I$ with an objective value of at least $1/(\alpha+\epsilon)$.
\end{lemma}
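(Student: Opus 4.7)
My plan is to construct the \makespan instance $I'$ via a classification-and-encoding approach that aligns the max-min structure of \santa with the min-max structure of \makespan. The central difficulty is that max-min and min-max are not exchangeable under a simple complementation, so the reduction must combine a structural decomposition of the input with a non-trivial rescaling of values into processing times.

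First, since the surrounding binary-search framework has already established $\opt(I) \geq 1$, I would normalize by capping $v_{ij} \leq 1$ (any value exceeding the target is wasteful) and discretizing the non-zero values into geometric buckets of ratio $1+\epsilon$, losing only an $\epsilon$-factor and leaving only $O_{\epsilon}(1)$ distinct value classes. I would then pick a threshold $\beta = \Theta(\epsilon/\alpha)$ and partition pairs into big ($v_{ij} \geq \beta$) and small ($v_{ij} < \beta$). Big pairs are few per player in any near-optimal solution and behave like a matching constraint, while the aggregate contribution of small pairs can be steered by load balancing, which is precisely what \makespan handles well.

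Next, I would take the machines of $I'$ to be the players of $I$ and the jobs of $I'$ to be the resources of $I$, with carefully designed processing times. For big pairs, $p_{ij}$ encodes a matching-like constraint (a unit or near-unit size on useful big pairs, and a prohibitively large value elsewhere), so that a low makespan forces each player to be matched to few big resources. For small pairs, $p_{ij}$ is set to be proportional to a scaled complement of $v_{ij}$, so that accumulating small resources increases Santa value while contributing only limited makespan load. The target $T^{*}$ is chosen so that any Santa solution of value at least $1$ produces a schedule of load at most $T^{*}$ in $I'$, while any schedule of load at most $(2-1/\alpha)\, T^{*}$ can be decoded back into a Santa assignment of value at least $1/(\alpha+\epsilon)$.

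Finally, given a $(2-1/\alpha)$-approximate schedule for $I'$, I would recover the Santa assignment by reading off which resources are placed on each player/machine, and verify per-player (splitting into the big and small contribution) that the total value is at least $1/(\alpha+\epsilon)$. I expect the main obstacle to be choosing the encoding of values into processing times so that the $(2-1/\alpha)$ \makespan ratio \emph{exactly} translates to the $1/(\alpha+\epsilon)$ \santa ratio; this requires a tight correspondence between the \makespan rounding loss and the \santa value loss, together with a careful choice of $\beta$ to keep the big matching part and the small load-balancing part from interfering. A further delicate point is ensuring that a player receiving many small jobs in $I'$ cannot be starved in $I$: the scaling of $p_{ij}$ on small pairs must convert bounded load into provable value, even when the makespan algorithm is adversarial within its approximation guarantee.
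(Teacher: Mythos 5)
Your proposal has a genuine structural gap. You propose to take the machines of $I'$ to be the players of $I$ and the jobs to be the resources, with processing times given by a ``scaled complement'' of the values. The fundamental obstacle---which you name at the outset but do not actually overcome---is that in \makespan nothing forces a machine to receive \emph{any} load: an empty machine is ideal for the min-max objective, but under your decoding it corresponds to a player with value $0$, which is fatal for the max-min objective. A complementation of values into processing times on a direct player$\to$machine mapping cannot encode the constraint ``every player must accumulate value at least $1/\alpha$''; the approximation guarantee of the \makespan oracle only upper-bounds loads, it never lower-bounds them. Your big/small split and the threshold $\beta$ do not address this, and the claim that ``the target $T^*$ is chosen so that\dots'' defers exactly the step that fails.

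The paper's construction is quite different and is built precisely around a forcing mechanism. After first reducing to a version of \santa with polynomially many \emph{configurations} per player (\Cref{lemma:configuration-reduction}, which needs more than geometric bucketing: one also enforces a monotonicity structure on the multiplicities to get the count down to $n^{\cO_\epsilon(1)}$), one introduces a configuration-machine $m_c^i$ for every player--configuration pair and a resource-machine $m_j$ for every resource. A player-job of size $1$ selects one configuration-machine per player, and the configuration-jobs of the selected configuration $c$, with sizes $v/\abs{c}$ summing to $1$, sit by default on that same machine, giving it load $2$. A schedule of makespan $2-1/\alpha$ therefore \emph{must} push configuration-jobs of total size at least $1/\alpha$ off onto resource-machines; since those jobs have size $1$ on resource-machines, each resource is claimed by at most one player, and the pushed-off jobs decode into a resource assignment of value at least $1/\alpha$ per player. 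This extra layer of gadget machines---absent from your proposal---is what converts the min-max upper bound into a per-player lower bound. To repair your approach you would need to reinvent essentially this gadget; the direct mapping you describe cannot be made to work by tuning $\beta$ or the scaling.
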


This lemma then implies \Cref{th:main-general} via the guessing framework.
We split the proof of \Cref{thm:santa-to-makespan} into two parts (cf.~\Cref{lemma:configuration-reduction,thm:reduction-polyconfig-santa-to-makespan}).
First, we show that we can define a polynomial number of \emph{configurations} for each player, which represent different options this player has. We show that there is a nearly optimal
solution, up to a factor of $1+\epsilon$, that only uses these configurations.

Second, we reduce this problem to \makespan without losing additional constants. That is, we present a reduction proving that a $(2-1/\alpha)$-approximation for \makespan implies an $\alpha$-approximation for \santa restricted to polynomially many configurations.
Intuitively, the fact that we can enumerate the list of possible optimal configurations per player enables us to create gadgets for every configuration in the constructed \makespan instance which we exploit when reducing solutions.

\subsubsection{Santa Claus with polynomially many configurations}

Consider a \santa instance with players $P$ and $n$ resources~$R$.
We define the set of \emph{value types} as $\cT = \{v_{ij} : i \in P, j \in R\}$, which contains all distinct resource values that occur in the instance.
We call a function $c: \cT \to \{0,1,\ldots,n\}$ a \emph{configuration}, and define the \emph{total value} of $c$ as $\abs{c} = \sum_{v \in \cT} c(v) \cdot v$. One can also see a configuration as a multiset of value types.

Given a configuration $c_i$ for a player $i$ of a \santa instance $I$, we say that a resource assignment $A=\{A_i\}_{i\in P}$  for $I$ that assigns the set of resources $A_i$ to player $i$ \emph{matches} the configuration $c_i$ if  $\abs{\{ j \in A_i : v_{ij} = v \}} = c_i(v)$ for every value type $v \in \cT$. 

We use $\cC_i$ to refer to a set of configurations for a player $i \in P$ and call $\cC = \{\cC_i\}_{i \in P}$ a \emph{collection of configurations}.
A resource assignment $A$ matches a collection of configurations $\cC$ if, for each player $i$, there exists a configuration $c \in \cC_i$ such that $A_i$ matches $c$. 
Given a \santa instance $I$ and a collection of configurations $\cC = \{\cC_i\}_{i \in P}$, we use $\optc(I)$ to refer to the optimal objective value for instance $I$ among those solutions that match $\cC$.

The main result of this section is the following lemma.

\begin{restatable}{lemma}{santaConfigurationReduction}\label{lemma:configuration-reduction}
    For every $\epsilon > 0$ and a given instance $I$ of \santa with $\opt(I) \geq 1$, we can construct a rounded instance $I'$ with a collection of configurations $\cC$ such that the number of configurations for each player is polynomial in the input size of $I$ 
    and $\optc(I') \ge 1/(1+\epsilon)$. 
    
    Further, every solution for $I'$ of objective value $T$ is a solution for $I$ with objective value at least $T$.
\end{restatable}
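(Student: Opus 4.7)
The plan is to preprocess $I$ onto a coarse value grid, yielding a rounded instance $I'$ with only few distinct value types, and then to define $\cC$ as a collection of \emph{canonical} configurations of polynomial size per player that still witnesses a $(1+\epsilon)$-approximately optimal solution.

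For the preprocessing, I would scale so that $\opt(I)=1$, cap each $v_{ij}$ at $1$ (after capping, the original optimum still certifies max-min value at least $1$, since any player's assignment of total value $\ge 1$ remains of total value $\ge 1$ under the cap), round every value down to the nearest power of $(1+\epsilon)^{-1}$ (a $(1+\epsilon)$ multiplicative loss), and zero out any value below $\epsilon/n$ (additive loss at most $\epsilon$ per player). Since $v'_{ij}\le v_{ij}$ pointwise, any assignment attaining value $T$ in $I'$ attains value at least $T$ in $I$, which immediately settles the lemma's second assertion. After preprocessing, the non-zero values of $I'$ lie on a geometric grid inside $[\epsilon/n,1]$, so there are at most $D=O(\epsilon^{-1}\log(n/\epsilon))$ distinct value types, and the above losses can be combined to give $\opt(I')\ge 1/(1+O(\epsilon))$, which we can rescale to $\ge 1/(1+\epsilon)$ by tightening the constants in the preprocessing.

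Next, I would split $\cT$ into ``big'' values ($\ge\epsilon$) and ``small'' values ($<\epsilon$) and call a configuration $c$ \emph{canonical} if $|c|\in[1/(1+\epsilon),2]$, $c$ uses at most $O(1/\epsilon)$ items of big values, and $c$ contains items from at most one small value type. Big values number only $\cO_\epsilon(1)$, and since at most $O(1/\epsilon)$ big items fit in value $\le 2$, the big part ranges over a set of size depending only on $\epsilon$; the small part is parameterized by a single choice among $D$ small types together with an integer count in $\{0,\ldots,n\}$. Defining $\cC_i$ to be the set of all canonical configurations therefore gives $|\cC_i|\le\mathrm{poly}(n,1/\epsilon)$ for every player $i$.

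The main obstacle will be to prove $\optc(I')\ge 1/(1+\epsilon)$, i.e., that some assignment of value at least $1/(1+\epsilon)$ realizes a canonical configuration simultaneously for all players. Starting from an optimal assignment $A^*$ for $I'$, the plan for each player $i$ is: if the big items of $A^*_i$ already sum to at least $1/(1+\epsilon)$, retain an inclusion-minimal subset of $O(1/\epsilon)$ of them and discard the rest; otherwise, keep all big items and replace the heterogeneous small items by copies of a single representative small type, chosen so that the new per-type demand on resources is bounded by the original mass of small items used by $i$. The hard part will be to ensure \emph{global} realizability---that these simultaneous per-player replacements can be routed through the genuine resources without double-allocation. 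I would handle this via a Hall-type exchange argument: process small types in a prescribed order, charge each new unit of demand to a disjoint unit of original demand from $A^*$, and absorb any residual overflow into an $O(\epsilon)$ multiplicative loss using the slack built in during the rounding and zeroing steps.
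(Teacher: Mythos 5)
Your preprocessing (cap at $1$, round down to powers of $(1+\epsilon)^{-1}$, zero out values below roughly $1/n$) matches the paper's, and your observation that rounded values are pointwise dominated immediately gives the lemma's second assertion. The gap is in the definition of canonical configurations: restricting each player to items from \emph{at most one small value type} is too strong, and the claim $\optc(I')\ge 1/(1+\epsilon)$ is simply false for that collection. Consider a player whose entire optimal bundle consists of small items spread over $D=\Theta_\epsilon(\log n)$ value types, with the instance constructed so that for each small type $\bv$ she values only about $1/(D\bv)$ resources at $\bv$ (each type contributing $\approx 1/D$ to her total of $1$). To reach value $\Omega(1)$ from a single small type she would need $\Theta(1/\bv)$ items of that type, but only a $1/D$ fraction of them exist anywhere in the instance. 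No Hall-type or exchange argument can route around this, because the obstruction is not contention with other players but the total supply of resources of the chosen representative type; the replacement step requires the player to \emph{acquire} items she was never assigned, and they need not exist.

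The paper's proof avoids exactly this trap: it never adds items to a player's bundle, only discards them. After the same rounding, it (a) snaps the multiplicities $c(\bv)$ to a geometric grid ($\lfloor(1+\epsilon)^\ell\rfloor$ or $\lceil(1+\epsilon)^\ell\rceil$), and (b) partitions the $O_\epsilon(\log n)$ value types into $\kappa=\lceil 1/\epsilon^3\rceil$ interleaved classes and keeps, within each class, only value types whose multiplicities are strictly increasing as the value decreases. Condition (b) is the counting engine: a configuration restricted to one class is determined by a subset of positions and a subset of multiplicity levels, giving $2^{O_\epsilon(\log n)}\cdot 2^{O_\epsilon(\log n)}=n^{O_\epsilon(1)}$ configurations, with constantly many classes. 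Crucially, enforcing (b) on an arbitrary optimal configuration is done by zeroing out the violating value types, so the resulting configuration satisfies $c_i(\bv)\le c_i'(\bv)$ everywhere and is realized by a subset of the already-assigned resources — no global rerouting is needed. The $(1+\epsilon)$ loss is then a per-player geometric-sum bound: within a class consecutive retained types are separated by a factor $(1+\epsilon)^\kappa$, so the discarded types are dominated by the retained head of each run. If you want to salvage your approach, you would need to replace the ``one small type'' rule by a pruning rule of this kind that only ever removes resources from a player's bundle.
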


The lemma essentially allows us to consider only solutions that partially match the constructed collection of configurations $\cC$. If we find such a solution that $\alpha$-approximates $\optc(I')$, we immediately get a $(\alpha+\epsilon)$-approximation for $\opt(I)$. 

To prove the lemma, we employ several rounding techniques and enforce a certain monotonicity condition on the configurations. This allows us to reduce the number of configurations per player to a polynomial while still guaranteeing $\optc(I') \ge 1/(1+\epsilon)$. For a full proof, we refer to~\Cref{apx:polynomial-configs}.

\subsubsection{Reduction to makespan minimization}\label{sec:csc-mm-red}

We prove the following lemma which, together with~\Cref{lemma:configuration-reduction}, implies~\Cref{thm:santa-to-makespan}.

\begin{lemma}\label{thm:reduction-polyconfig-santa-to-makespan}
Let $I$ be an instance of \santa and let $\cC$ be a collection of configurations with $\optc(I) \geq 1$. For any $\alpha \geq 1$, we can construct in polynomial time an instance $I'$ of \makespan such that, given a $(2-1 / \alpha)$-approximate solution for $I'$, we can compute in polynomial time a solution for $I$ with value at least $1 / \alpha$. 
The running times are polynomial in the size of $(I,\cC)$.
\end{lemma}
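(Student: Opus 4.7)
The plan is to construct a \makespan instance $I'$ with one machine $M_{i,c}$ for every pair $(i,c)$ with $i \in P$ and $c \in \cC_i$.  Each resource $j \in R$ becomes a job whose processing time on $M_{i,c}$ is $v_{ij}$ if $c(v_{ij}) \geq 1$ and $\infty$ otherwise; this restricts the job to machines of player-configuration pairs that can actually use the corresponding value type.  To force each player to effectively commit to a single configuration, I attach a \emph{selector gadget} per player~$i$, consisting of $|\cC_i|-1$ dummy jobs of a common size~$D$, each executable only on the machines $\{M_{i,c} : c \in \cC_i\}$.  The value~$D$ is tuned so that two dummies cannot coexist on one machine without violating the target makespan, which in any bounded-makespan schedule leaves each player with exactly one ``dummy-free'' machine.

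To upper bound the optimum makespan of $I'$, I would start from an optimum \santa solution matching $\cC$ with value at least~$1$: place each resource on the machine $M_{i,c_i^*}$ of its player's chosen configuration and distribute the $|\cC_i|-1$ dummies one per remaining machine of player~$i$.  Every machine then carries load at most $\max(D, \max_{i, c \in \cC_i} |c|) = T^* = O(1)$, which upper bounds the optimum makespan.

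Given a $(2-1/\alpha)$-approximate schedule for $I'$, I extract a \santa solution for~$I$ by sending each resource~$j$ to the player of the machine on which it is scheduled; this is well defined because finite processing times correspond exactly to compatible $(i,c)$ pairs.  The calibration of~$D$ implies that the dummies occupy all but one machine per player, and the unique dummy-free machine is interpreted as that player's chosen configuration; consistency with the configuration structure of $\cC$ then follows from the compatibility constraints built into the processing times.

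The main obstacle is lower bounding the value each player receives in this extraction, since \makespan minimization does not intrinsically balance loads across players' gadgets and a naive argument does not preclude starving some player.  The plan to overcome this is to tune $D$ and $T^*$ so that the total residual load budget on player~$i$'s gadget, namely $(2-1/\alpha)T^*$ on the dummy-free machine plus $(2-1/\alpha)T^* - D$ on each dummy-occupied one, is forced by the $\infty$-processing-time restrictions and the compatibility structure of $\cC$ to be occupied by resource jobs of total value at least $1/\alpha$ for every single player.  The hardest piece is precisely this combinatorial step: showing that the selector gadget together with the infinity entries prevents resources from ``escaping'' the gadgets of players that value them, and thereby converts the $(2-1/\alpha)T^*$ slack in the \makespan bound into a uniform per-player \santa guarantee of $1/\alpha$.
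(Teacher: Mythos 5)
There is a genuine gap, and it sits exactly where you flagged it: your construction cannot convert the makespan \emph{upper} bound into a per-player \emph{lower} bound on received value, and no tuning of $D$ and $T^*$ will fix this. In your reduction the resources are jobs and a player's received value equals the load placed \emph{on} that player's machines. A schedule in which some player's dummy-free machine carries no resource jobs at all has strictly smaller load on that machine and is therefore only \emph{better} for the makespan objective; every resource valued by several players can simply be routed to one of the other players' gadgets (or absorbed by slack capacity elsewhere), and nothing in the $\infty$-entries or the selector gadget prevents this. The compatibility constraints only restrict \emph{where} a resource job may go, never \emph{that} it must go to a particular starving player. So even an optimal schedule for $I'$ can correspond to a \santa solution of value $0$, and the reduction fails already at $\alpha = 1$.

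The paper's construction resolves precisely this issue by inverting the roles: resources become \emph{machines}, and each configuration $c$ of player $i$ is encoded by configuration-jobs of total size $|c|/|c| = 1$ that by default sit on the configuration-machine $m_c^i$, together with a selector job $j_i$ of size $1$. In any schedule of makespan at most $2 - 1/\alpha$, the selected machine $m_c^i$ carries $j_i$ and hence can retain configuration-jobs of total size at most $1 - 1/\alpha$; the remaining configuration-jobs, of total size at least $1/\alpha$, must be \emph{pushed off} onto resource-machines, each of which (having all job sizes equal to $1$ there) can absorb at most one such job. The value a player receives is thus identified with the load \emph{removed from} a fixed budget of $2$ on her machine, which the min-max objective does lower-bound, rather than with the load placed on her machines, which it does not. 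This role inversion is the essential missing idea in your proposal; without it, the ``hardest piece'' you defer is not merely hard but false for your gadget.
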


We first give some intuition for the reduction of the lemma.
Assume for simplicity that every configuration in~$\cC$ has value 1. We want to construct a \makespan instance with an optimal makespan equal to 1.
Exploiting the polynomial number of configurations, we introduce \emph{configuration-machines} for every player and every configuration of that player. By using a gadget structure, we ensure that every solution for the \makespan instance \enquote{selects} one configuration-machine for each player, which we call the \emph{player-machine}. 
This is done by forcing an extra (selection-)load of 1 on exactly one configuration-machine for that player. The other configuration-machines will just be able to absorb all other jobs that can be placed on the machine, so we assume that they do so and ignore them. 
Intuitively, the player-machine determines the configuration which we (partially) use for the player when transferring a solution back to the \santa instance.
To this end, we encode the corresponding configuration $c$ of the player-machine by introducing for every $v \in \cT$ a total of $c(v)$ \emph{configuration-jobs} with size $v$ on that machine.
For every resource we also introduce a \emph{resource-machine}, on which only configuration-jobs of the same value type as the corresponding resource can be placed, with size 1.
In an optimal solution of makespan 1, no configuration-job can be placed on a player-machine due to the selection-load.
This means that all these jobs have to be placed on the resource-machines instead. Since each resource-machine can absorb at most one job, we can interpret the placement of the configuration jobs as resource assignment for the \santa instance that matches the configuration of the player-machines. This, however, only works for optimal solutions.

To better understand the connection between approximate solutions, imagine an initial state where all configuration-jobs are placed on their player-machines in the \makespan instance, and all resources are unassigned in the \santa instance.
This means that all player-machines have a total load of~$1 + \abs{c} = 2$ (observe that $\abs{c}=1$ is caused by the configuration-jobs) and all players have a total value of $0$.
Now, a player can gain a resource by moving a suitable configuration-job away from her player-machine to a resource-machine. 
Since, even in a better-than-2 approximation for the \makespan instance, a resource-machine can absorb at most one job, we can again interpret this as the players competing for the resources via moving jobs away from their player-machines to resource-machines.
Therefore, in a $(2-1 / \alpha)$-approximation for the \makespan instance, a player must be able to move jobs away from her player-machine of total size at least $1 / \alpha$. But this means in our interpretation that she receives resources of total value at least $1 / \alpha$ in the \santa~instance.

In the following, we formalize these ideas and prove~\Cref{thm:reduction-polyconfig-santa-to-makespan}.
Fix an instance $I$ of the \santa problem and a collection of configurations $\cC$ for $I$ with $\optc(I) \geq 1$. We proceed by describing the reduction and proving two auxiliary lemmas that imply~\Cref{thm:reduction-polyconfig-santa-to-makespan}.

\paragraph{Preprocessing}
We first remove all configurations from $\cC$ of value strictly less than $1$. Since we assumed that $\optc(I) \geq 1$, this does not affect $\optc(I)$.

\paragraph{Construction}
We construct a \makespan instance~$I'$ as follows.
For every player $i$ in instance~$I$, we introduce a
player-job $j_i$ and for every configuration $c \in \cC_i$ a configuration-machine $m_c^i$, where the size of $j_i$ is equal to $1$ on every configuration-machine $m_c^i$ and $\infty$ on all other machines.
For every configuration $c \in \cC_i$ and for every value type $v \in \cT$ we introduce a set $J_{c,v}^i$ of $c(v)$ many configuration-jobs. 
For every resource $j$ we introduce a resource-machine $m_j$.
Finally, every configuration-job in $J_{c,v}^i$ has size $1$ on every resource-machine $m_j$ if resource $j$ has value type $v$ for player $i$ (i.e., $v_{ij} = v$), size $v / \abs{c}$ on the configuration-machine $m_c^i$, and $\infty$ on all other machines. See \Cref{fig:santa-to-makespan-construction}.

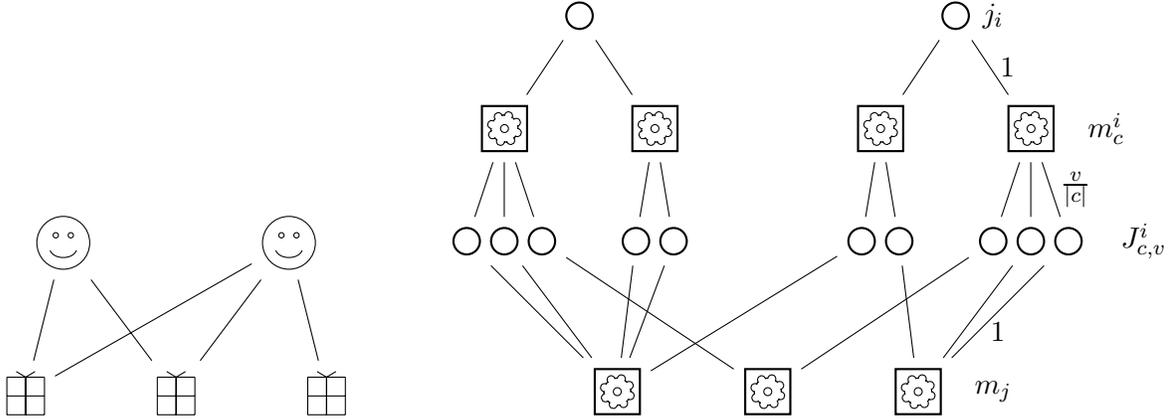
\begin{figure}
    \begin{subfigure}[b]{0.38\textwidth}
    \begin{center}
        \begin{tikzpicture}
            \node (c1) at (-1.5,5) {\child};
            \node (c2) at (1.5,5) {\child};
            \node (g1) at (-2,3) {\gift};
            \node (g2) at (0,3) {\gift};
            \node (g3) at (2,3) {\gift};

            \draw[black]
                (c1) -- (g1)
                (c1) -- (g2)
                (c2) -- (g1)
                (c2) -- (g2)
                (c2) -- (g3);
        \end{tikzpicture}
    \end{center}
    \caption{\santa instance $I$. Players are visualized by smileys and resources by gifts.}
    \end{subfigure}
    \hfill
    \begin{subfigure}[b]{0.6\textwidth}
    \begin{center}
    \begin{tikzpicture}
    \node (cj1) at (0,9) {\job};
    \node (m1) at (-1,7.5) {\machine};
    \node (m2) at (1,7.5) {\machine};
    \node (m1j1) at (-1.5,6) {\job};
    \node (m1j2) at (-1,6) {\job};
    \node (m1j3) at (-0.5,6) {\job};
    \node (m2j1) at (0.75,6) {\job};
    \node (m2j3) at (1.25,6) {\job};
    \draw[black] (cj1) -- (m1);
    \draw[black] (cj1) -- (m2);
    \draw[black] (m1j1) -- (m1);
    \draw[black] (m1j2) -- (m1);
    \draw[black] (m1j3) -- (m1);
    \draw[black] (m2j1) -- (m2);
    \draw[black] (m2j3) -- (m2);

    \node (cj2) at (5,9) {\job};
    \node at (5.5,9) {$j_i$};
    \node (m3) at (4,7.5) {\machine};
    \node (m4) at (6,7.5) {\machine};
    \node at (7,7.5) {$m_c^i$};
    \node (m3j1) at (3.75,6) {\job};
    \node (m3j3) at (4.25,6) {\job};
    \node (m4j1) at (5.5,6) {\job};
    \node (m4j2) at (6,6) {\job};
    \node (m4j3) at (6.5,6) {\job};
    \node at (7.5,6) {$J_{c,v}^i$};
    \draw[black] (cj2) -- (m3);
    \draw[black] (cj2) -- node[right] {1} (m4);
    \draw[black] (m3j1) -- (m3);
    \draw[black] (m3j3) -- (m3);
    \draw[black] (m4j1) -- (m4);
    \draw[black] (m4j2) -- (m4);
    \draw[black] (m4j3) -- node[right] {$\frac{v}{\abs{c}}$} (m4);

    \node (mr1) at (0.5,4) {\machine};
    \node (mr2) at (2.5,4) {\machine};
    \node (mr3) at (4.5,4) {\machine};
    \node at (5.5,4) {$m_j$};

    \draw[black] (mr1) -- (m1j1);
    \draw[black] (mr1) -- (m1j2);
    \draw[black] (mr2) -- (m1j3);
    \draw[black] (mr1) -- (m2j3);
    \draw[black] (mr1) -- (m2j1);
    \draw[black] (mr1) -- (m3j1);
    \draw[black] (mr2) -- (m4j1);
    \draw[black] (mr3) -- (m4j2);
    \draw[black] (mr3) -- node[below] {$1$} (m4j3);
    \draw[black] (mr3) -- (m3j3);

    \end{tikzpicture}
    \end{center}
    \caption{\makespan instance $I'$. Machines are visualized by squares with a gear and jobs by cycles.}
    \end{subfigure}
    \caption{The construction used in \Cref{thm:reduction-polyconfig-santa-to-makespan}. In both pictures an edge indicates that an item has a non-trivial value for an entity.}
    \label{fig:santa-to-makespan-construction}
\end{figure}

\begin{lemma}
The optimal objective value of $I'$ is at most $1$.
\end{lemma}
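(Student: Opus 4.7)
The plan is to take any feasible solution for $I$ that matches $\cC$ and has value at least $1$, and to turn it into an assignment of jobs to machines in $I'$ of makespan at most $1$. By assumption $\optc(I)\ge 1$, and after the preprocessing step every remaining configuration $c$ satisfies $|c|\ge 1$, so such a solution exists and for each player $i$ it picks a configuration $c_i\in\cC_i$ together with a set $A_i\subseteq R$ of assigned resources such that $|\{j\in A_i : v_{ij}=v\}|=c_i(v)$ for every $v\in\cT$; moreover the sets $A_i$ are pairwise disjoint.

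I would then define the schedule as follows. For every player $i$:
\begin{itemize}
    \item place the player-job $j_i$ on the \emph{chosen} configuration-machine $m_{c_i}^i$;
    \item for every \emph{unchosen} configuration $c\in\cC_i\setminus\{c_i\}$, place all configuration-jobs in $\bigcup_{v\in\cT} J_{c,v}^i$ on $m_c^i$;
    \item for the chosen configuration $c_i$ and every value type $v\in\cT$, fix a bijection between the $c_i(v)$ jobs of $J_{c_i,v}^i$ and the $c_i(v)$ resources $j\in A_i$ with $v_{ij}=v$ (such a bijection exists because $A_i$ matches $c_i$), and place each such configuration-job on the resource-machine $m_j$ corresponding to its partner.
\end{itemize}
Every job is placed on a machine on which it has finite size, so the assignment is well-defined.

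The remaining task is to verify that no machine receives load more than $1$. A chosen configuration-machine $m_{c_i}^i$ gets only the player-job $j_i$ (of size $1$) for a total load of $1$. An unchosen configuration-machine $m_c^i$ with $c\neq c_i$ gets exactly the jobs of $\bigcup_v J_{c,v}^i$, contributing $\sum_{v\in\cT} c(v)\cdot v/|c| = |c|/|c| = 1$. Finally, each resource-machine $m_j$ receives jobs only from players $i$ with $j\in A_i$, and since the $A_i$ are disjoint there is at most one such player; hence $m_j$ absorbs at most one configuration-job of size $1$, so its load is at most $1$.

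The step I expect to require the most care is the bijection in the third bullet: it is where the configuration-matching property of the assumed solution is actually used, and it is also what guarantees that configuration-jobs for chosen configurations land on distinct resource-machines. Everything else is a routine load computation, and the preprocessing step (removing configurations of value $<1$) is what makes the load on unchosen configuration-machines exactly equal to~$1$ rather than possibly larger.
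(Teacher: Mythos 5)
Your proposal is correct and matches the paper's proof essentially step for step: the same construction (player-job on the chosen configuration-machine, all configuration-jobs of unchosen configurations on their own machines, and the chosen configuration's jobs matched one-to-one onto the resource-machines of $A_i$) and the same load calculations. The explicit bijection you describe is exactly the assignment the paper asserts exists because $A_i$ matches $c_i$.
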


\begin{proof} 
Fix a solution of $I$ that is optimal among the solutions that match $\cC$.  
Consider a player $i$ of instance $I$ and let $c \in \cC_i$ be the selected configuration for player $i$ in the given solution. Let $A_i$ be the set of resources assigned to player $i$. In the solution for $I'$, we assign job $j_i$ to machine $m_c^i$, giving it a load of~$1$.
Further, we assign the configuration-jobs $J_{c,v}^i$ of configuration $c$ to resource-machines $\{m_j : j \in A_i \}$ such that every resource-machine receives at most one job. Such an assignment must exist by the fact that configuration $c$ is matched by the fixed solution for $I$. For every configuration $c' \in \cC_i \setminus \{c\}$ we assign for all $v \in \cT$ every configuration-job in $J_{c',v}^i$ to machine $m_{c'}^i$, giving those a load of $\sum_{v \in \cT} c'(v) \frac{v}{\abs{c'}} = 1$. 
Since, in the given solution, every resource $j$ is assigned to at most one player $i$, and since we have assigned at most one configuration-job
to machine $m_j$, every resource-machine also has a load of at most 1.
Hence, the makespan of the constructed solution for $I'$ is at most $1$.
\end{proof}

\begin{lemma}\label{lemma:sc-to-makespan-approx}
For any $\alpha \geq 1$, given a solution for $I'$ with a makespan of at most~$2 - 1 / \alpha$, we can construct in polynomial time a solution for $I$ where every player receives a total value of at least~$1 / \alpha$.
\end{lemma}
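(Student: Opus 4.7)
The plan is to exploit the gadget structure to extract, from a $(2-1/\alpha)$-approximate \makespan solution, a selected configuration per player together with a set of resources realizing a $1/\alpha$-fraction of that configuration's value. The four steps are: (i) identify the selected configuration per player, (ii) lower-bound the mass of configuration-jobs displaced from the player-machine, (iii) argue that resource-machines can be used at most once each, and (iv) translate displaced jobs into a valid \santa assignment.

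First I would observe that since $j_i$ has size $\infty$ outside the configuration-machines $\{m_c^i : c \in \cC_i\}$, a finite-makespan solution must place $j_i$ on some $m_{c_i}^i$; call $c_i$ the \emph{selected configuration} of player $i$. On $m_{c_i}^i$, the only other jobs with finite size are those in $\bigcup_{v \in \cT} J_{c_i,v}^i$ (player-jobs $j_{i'}$ for $i' \neq i$ have size $\infty$ there, and so do configuration-jobs $J_{c',v}^i$ with $c' \neq c_i$ and those belonging to other players). Since $j_i$ alone contributes load $1$ and the total load on $m_{c_i}^i$ is at most $2 - 1/\alpha$, the configuration-jobs in $\bigcup_v J_{c_i,v}^i$ remaining on $m_{c_i}^i$ carry at most $1 - 1/\alpha$ of the scaled size $v/|c_i|$; hence the configuration-jobs of $(c_i, i)$ that have been displaced to resource-machines carry total scaled size at least $1/\alpha$ (using that the total scaled size of $\bigcup_v J_{c_i,v}^i$ on $m_{c_i}^i$ equals $\sum_v c_i(v)\cdot v/|c_i| = 1$).

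Next I would use the key observation that resource-machines see each configuration-job with size exactly $1$, and the makespan bound $2 - 1/\alpha < 2$ forces at most one configuration-job per $m_j$. This allows me to define the assignment $A_i := \{j \in R : \text{some job from } \bigcup_v J_{c_i,v}^i \text{ sits on } m_j\}$ without collisions between different players. Any job from $J_{c_i,v}^i$ on $m_j$ requires $v_{ij} = v$ (otherwise the size on $m_j$ would be $\infty$), so resource $j$ contributes exactly $v_{ij}$ to player $i$ in instance $I$. The total value collected is therefore $|c_i|$ times the displaced scaled size, which is at least $|c_i|/\alpha \geq 1/\alpha$ by the preprocessing step that discards configurations of value below $1$.

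The steps above are conceptually straightforward; the only subtlety I expect to flag carefully is the ``resource-machine is used at most once'' claim, which is what makes the assignment valid across players and which is precisely where the strict inequality $2 - 1/\alpha < 2$ is used. All other arguments are bookkeeping on the gadget, and the resulting construction is manifestly polynomial-time.
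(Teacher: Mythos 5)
Your proposal is correct and follows essentially the same argument as the paper's proof: locate the selected configuration via the finite placement of $j_i$, bound the displaced configuration-jobs' scaled size by $1/\alpha$ using the load bound on $m_{c_i}^i$, use the makespan bound $2-1/\alpha<2$ together with size $\geq 1$ to get at most one configuration-job per resource-machine, and convert scaled size to value via $\abs{c_i}\geq 1$ from the preprocessing. The only difference is that you make explicit the no-collision argument across players, which the paper handles implicitly with the same observation.
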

\begin{proof}
Given a solution for $I'$ where every machine has a load of at most $2-1 / \alpha$, we construct a solution for $I$ as follows. 
Fix a player $i$ and assume that $j_i$ is assigned to machine~$m_c^i$. 

Let $J_i$ be the set of configuration-jobs of configuration $c$ of player $i$ which are \emph{not} assigned to $m_{c}^i$.
Thus, every job in $J_i$ is assigned to a resource-machine. Note that every resource-machine has at most one assigned job, because every job has size of at least $1$ on these machines.
Let $R_i$ be the set of resources for which the corresponding resource-machines receive a job of $J_i$. 
We assign the resources $R_i$ to player $i$ in the solution for $I$. 
The load contributed by configuration-jobs to machine $m_c^i$ is at most $1-1 / \alpha$, because job $j_i$ is also assigned to $m_c^i$ and has size $1$. This implies that the total size of jobs in $J_i$ for machine $m_c^i$ is at least 
\[
\sum_{j \in J_i : j \in J_{c,v}^i} \frac{v}{\abs{c}} \geq \left(\sum_{v \in \cT} c(v) \frac{v}{\abs{c}}\right) - \left(1 - \frac{1}{\alpha} \right) = 1 - \left(1 - \frac{1}{\alpha} \right) = \frac{1}{\alpha}.
\]
Since $\abs{c} \geq 1$ by our preprocessing, we conclude that player $i$ receives a total value of at least 
\[
 \sum_{j \in R_i} v_{ij} = \sum_{j \in J_i : j \in J_{c,v}^i} v \geq \sum_{j \in J_i : j \in J_{c,v}^i} \frac{v}{\abs{c}} \geq \frac{1}{\alpha}.
\]
A visualization of this argument is given in \Cref{fig:santa-to-makespan-approx}.

\end{proof}

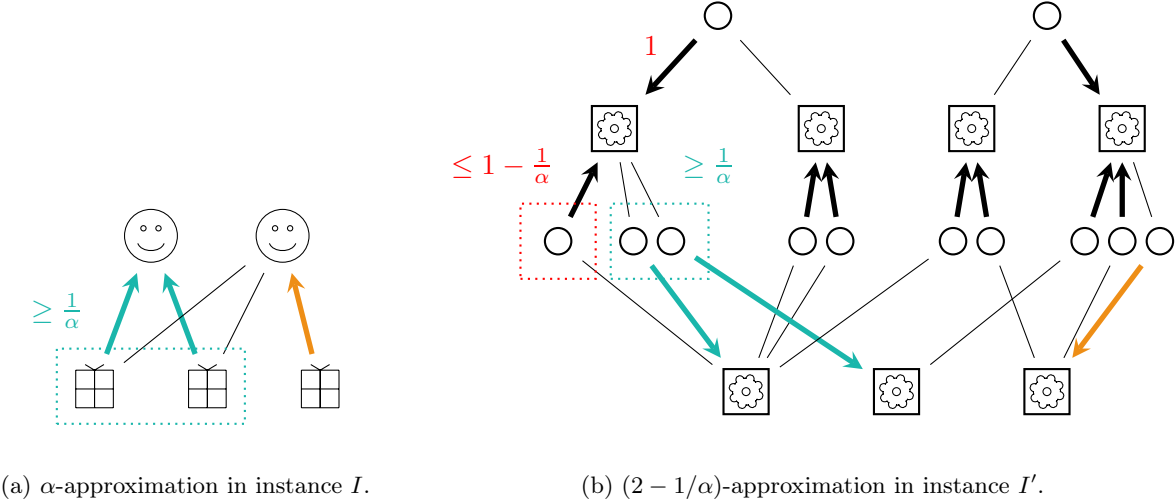
\begin{figure}
    \begin{subfigure}[b]{0.3\textwidth}
    \begin{center}
        \begin{tikzpicture}
            \node (c1) at (-1.25,5) {\child};
            \node (c2) at (0.5,5) {\child};
            \node (g1) at (-2,3) {\gift};
            \node (g2) at (-0.5,3) {\gift};
            \node (g3) at (1,3) {\gift};

            \draw[asg,col1] (g1) -- (c1);
            \draw[asg,col1] (g2) -- (c1);
            \draw[asg,col2] (g3) -- (c2);
            
            \draw[black]
                (c2) -- (g1)
                (c2) -- (g2);

            \draw [thick,col1, dotted] (-2.5,2.5) rectangle (0,3.5);
            \node[col1] at (-2.5,4) {$\geq \frac{1}{\alpha}$};
    \end{tikzpicture}
    \end{center}
    \caption{$\alpha$-approximation in instance $I$.}
    \end{subfigure}
    \begin{subfigure}[b]{0.7\textwidth}
    \begin{center}
    \begin{tikzpicture}
    \node (cj1) at (0.125,9) {\job};
    \node (m1) at (-1.25,7.5) {\machine};
    \node (m2) at (1.5,7.5) {\machine};
    \node (m1j2) at (-2,6) {\job};
    \node (m1j3) at (-1,6) {\job};
     \node (m1j4) at (-0.5,6) {\job};
    \node (m2j1) at (1.25,6) {\job};
    \node (m2j3) at (1.75,6) {\job};
    \draw[asg] (cj1) -- node[red, above left] {1} (m1);
    \draw[black] (cj1) -- (m2);
    \draw[asg] (m1j2) -- (m1);
    \draw[black] (m1) -- (m1j3);
    \draw[black] (m1) -- (m1j4);
    \draw[asg] (m2j1) -- (m2);
    \draw[asg] (m2j3) -- (m2);

    \node (cj2) at (4.5,9) {\job};
    \node (m3) at (3.5,7.5) {\machine};
    \node (m4) at (5.5,7.5) {\machine};
    \node (m3j1) at (3.25,6) {\job};
    \node (m3j3) at (3.75,6) {\job};
    \node (m4j1) at (5,6) {\job};
    \node (m4j2) at (5.5,6) {\job};
    \node (m4j3) at (6,6) {\job};
    \draw[black] (cj2) -- (m3);
    \draw[asg] (cj2) --  (m4);
    \draw[asg] (m3j1) --(m3);
    \draw[asg] (m3j3) -- (m3);
    \draw[asg] (m4j1) -- (m4);
    \draw[asg] (m4j2) -- (m4);
    \draw[black] (m4j3) -- (m4);

    \node (mr1) at (0.5,4) {\machine};
    \node (mr2) at (2.5,4) {\machine};
    \node (mr3) at (4.5,4) {\machine};

    \draw[black] (mr1) -- (m2j3);
    \draw[black] (mr1) -- (m2j1);
    \draw[black] (m1j2) -- (mr1);
    \draw[asg,col1] (m1j3) -- (mr1);
    \draw[asg,col1] (m1j4) -- (mr2);
    \draw[black] (mr1) -- (m3j1);
    \draw[black] (mr2) -- (m4j1);
    \draw[black] (mr3) -- (m4j2);
    \draw[asg,col2] (m4j3) -- (mr3);
    \draw[black] (mr3) -- (m3j3);

    \draw [thick,red, dotted] (-2.5,5.5) rectangle (-1.5,6.5);
    \draw [thick,col1, dotted] (-1.3,5.5) rectangle (0,6.5);
    \node[col1] at (0,7) {$\geq \frac{1}{\alpha}$};
    \node[red] at (-2.75,7) {$\leq 1-\frac{1}{\alpha}$};

    \end{tikzpicture}
    \end{center}
    \caption{$(2-1/\alpha)$-approximation in instance $I'$.}
    \end{subfigure}
    \caption{Visualization of the argument for translating approximate solutions used in \Cref{lemma:sc-to-makespan-approx}.}
    \label{fig:santa-to-makespan-approx}
\end{figure}

\subsection{Equivalence in the unrelated 2-value case}

In this section, we consider the two-value case of \santa and \makespan and prove that there exists an approximation preserving equivalence between these two problems.
We give a full proof in~\Cref{apx:two-value}. Here, we briefly restate the main theorem and highlight the technical ideas.

\twoValueEquivalence*

For the direction from two-value \santa to two-value \makespan, we show that we can apply a similar reduction as in~\Cref{thm:santa-to-makespan}. 
We do so by reducing the given \santa instance to an instance where the reduction does not introduce a third value. Further, we observe that, in the two-value case, we only have a polynomial number of relevant configurations, so we do not have to reduce this number and, thus, do not lose the additional factor of $1 +\epsilon$.

For the other direction, we prove that similar ideas as in the reduction of~\Cref{thm:santa-to-makespan} also work in the direction from \makespan to \santa if we only have to job sizes. This requires some additional ideas to ensure that we do not need to introduce further resource values.

\section{Reductions for matroid allocation problems}
\label{sec:restr-matroid}

We move to the restricted assignment setting and consider the matroid generalizations of \santa and \makespan. 
The main result here is \Cref{thm:restricted-matroid-equivalence}, which we restate here for convenience. 

\ThmReductionRestrictedMatroidSantaTwoValue*

We prove the theorem using the following two lemmas, one for each direction, and the same standard binary search framework as in the~\Cref{sec:santa-to-makespan}.
The key ideas for constructing instances and for transforming solutions in these reductions rely on polymatroid duality.
Note that, by~\Cref{prop:matroid-makespan-reduce-jobs}, we can w.l.o.g.~assume that the number of resources and jobs, respectively, is exactly two.

\begin{lemma}
Let $\alpha \geq 2$ and $I$ be an instance of the restricted \matroidmakespan problem with two jobs and $\opt(I) \leq 1$. Then we can compute an instance $I'$ of the restricted \matroidsanta problem with two resources, such that, given an $\alpha$-approximate solution for $I'$, we can compute a solution for $I$ with an objective value of at most $2 - 1 / \alpha$.
\end{lemma}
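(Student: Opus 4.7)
The plan is to exploit polymatroid duality, which turns the min-max makespan objective on $M$ into a max-min fair allocation objective on the same ground set. By~\Cref{prop:matroid-makespan-reduce-jobs} I may assume $I$ has exactly two jobs $j_1,j_2$ of sizes $p_1,p_2$ with polymatroids $\PM_1,\PM_2$ over the machine set $M$. The santa instance $I'$ will keep the ground set $M$ (now playing the role of players) and will have two resources of values $v_j := p_j$ whose polymatroids are dual to suitably truncated versions of the $\PM_j$.

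Concretely, I would set $K_j := \lfloor 1/p_j \rfloor$ and $\PM_j^{\mathrm{trunc}} := \{x \in \PM_j : x(i) \le K_j \text{ for all } i \in M\}$, which remains a polymatroid by the truncation fact recalled in~\Cref{sec:notation}. Since $\opt(I) \le 1$, any optimum $x_j^* \in \cB(\PM_j)$ satisfies $p_j x_j^*(i) \le 1$, so $x_j^*(i) \le K_j$, meaning $x_j^* \in \PM_j^{\mathrm{trunc}}$; moreover $x_j^*$ still attains coordinate sum $f_j(M)$ (the rank of $\PM_j$), which forces $\cB(\PM_j^{\mathrm{trunc}}) \subseteq \cB(\PM_j)$ so that $\PM_j^{\mathrm{trunc}}$ is a valid replacement for $\PM_j$. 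Now set $z_j := K_j \cdot \mathbf{1}$ and let $\dPM_j$ be the dual of $\PM_j^{\mathrm{trunc}}$ with respect to $z_j$ as defined in~\Cref{sec:notation}; the instance $I'$ has players $M$, resources of values $p_1,p_2$, and polymatroids $\dPM_1,\dPM_2$. The basis correspondence $x_j \in \cB(\PM_j^{\mathrm{trunc}}) \leftrightarrow y_j := z_j - x_j \in \cB(\dPM_j)$ rewrites the machine load as $\sum_j p_j x_j(i) = C - \sum_j p_j y_j(i)$, where $C := p_1 K_1 + p_2 K_2$.

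Applying this bijection to $x_1^*,x_2^*$ yields $\opt(I') \ge \min_i \sum_j p_j (z_j(i) - x_j^*(i)) \ge C - 1$. The numerical crux is the bound $C \le 2$, which follows since $p_j \lfloor 1/p_j \rfloor \le 1$ for each $j$. Given any $\alpha$-approximate solution $y_1,y_2 \in \cB(\dPM_j)$ for $I'$, with $\min_i \sum_j p_j y_j(i) \ge (C-1)/\alpha$, I would set $x_j := z_j - y_j \in \cB(\PM_j^{\mathrm{trunc}}) \subseteq \cB(\PM_j)$ and bound
\[
\max_i \sum_j p_j x_j(i) \;=\; C - \min_i \sum_j p_j y_j(i) \;\le\; C - \tfrac{C-1}{\alpha} \;\le\; 2 - \tfrac{1}{\alpha},
\]
using that $C \mapsto C - (C-1)/\alpha$ is non-decreasing for $\alpha \ge 1$ together with $C \le 2$.

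I expect the main care to lie in two places. First, the argument that bases of the truncated polymatroid are genuine bases of $\PM_j$ (so that the transported $x_j$ is a feasible makespan solution) relies on the identity $f_j^{\mathrm{trunc}}(M) = f_j(M)$, which itself rests on feasibility of $I$ at makespan $1$; without the hypothesis $\opt(I) \le 1$ this step would fail. Second, implementing value oracles for $\PM_j^{\mathrm{trunc}}$ and $\dPM_j$ in polynomial time from an oracle for $\PM_j$ is standard but must be spelled out for the reduction to be polynomial, and the edge case $p_j > 1$ (which forces $K_j = 0$ and $\PM_j$ to have zero rank) must be checked to degenerate cleanly without affecting $C \le 2$.
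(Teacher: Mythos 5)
Your proposal is correct and follows essentially the same route as the paper's proof: truncate each $\PM_j$ at $\lfloor 1/p_j\rfloor$, dualize with respect to the constant vector, and exploit the affine relation between loads and values together with the bound $p_1\lfloor 1/p_1\rfloor + p_2\lfloor 1/p_2\rfloor \le 2$ (the paper phrases this as $t \le 1$ with $t = C-1$). The basis-correspondence argument, including the observation that $\cB(\PM_j^{\mathrm{trunc}})\subseteq\cB(\PM_j)$ because $f_j^{\mathrm{trunc}}(M)=f_j(M)$ under the hypothesis $\opt(I)\le 1$, matches the paper's reasoning exactly.
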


\begin{proof}
Given an instance $I$ with $\opt(I) \leq 1$ of the restricted \matroidmakespan problem with machines $E$ and two jobs with sizes $p_1,p_2 \geq 0$ and polymatroids $\PM_1,\PM_2$, we construct an instance~$I'$ of the restricted \matroidsanta problem as follows.

Let $k_1 = \lfloor 1 / p_1 \rfloor$ and let $k_2 = \lfloor 1 / p_2 \rfloor$. 
We first consider the polymatroids $\PM'_1 = \{x \in \PM_1 : x(e) \leq k_1 \; \forall e \in E \}$ and $\PM'_2 = \{x \in \PM_2 : x(e) \leq k_2 \; \forall e \in E \}$.
Let $f'_1$ and $f'_2$ be the associated submodular functions of these polymatroids. 
Since $\opt(I) \leq 1$, any optimal solution $x_j \in \cB(P_j)$ satisfies $x_j(e) \leq k_j$ for all $e \in E$, and therefore, $x_j \in \cB(P'_j)$, for $j \in \{1,2\}$. Thus, $f_j(E) = x_j(E) = f'_j(E)$.
Let $\dPM_j$ be the dual polymatroid of $\PM'_j$ with respect to the vector $k_j \cdot E$ (the vector of $\mathbb Z^E$ where all entries are equal to $k_j$), for $j \in \{1,2\}$.
We compose instance $I'$ using players $E$ and two resources with polymatroids $\dPM_1, \dPM_2$ and resource values $p_1, p_2$.

Let $t  = k_1 \cdot p_1 + k_2 \cdot p_2 - 1$. We show that $\opt(I') \geq t$. Fix an optimal solution for $I$ which selects bases $x_1 \in \cB(\PM_1)$ and $x_2 \in \cB(\PM_2)$. 
We define vectors $\overline{x}_j(e) = k_j - x_j(e)$ for all $e \in E$, and conclude 
that $\overline{x}_j \in\cB(\dPM_j)$, because $x_j \in \cB(\PM'_j)$. This means that $\overline{x}_1$ and $\overline{x}_2$ are a feasible solution for $I'$. Using $\opt(I) \leq 1$, for every player $e \in E$ holds
\[
    p_1 \cdot \overline{x}_1(e) + p_2 \cdot \overline{x}_2(e) = (1+t) -  \left(p_1  \cdot x_1(e) + p_2  \cdot x_2(e)\right) = (1+t) - \opt(I) \geq t,
\]
showing that $\opt(I') \geq t$.

We finally prove the stated bound on the objective value of an approximate solution. 
Fix an $\alpha$-approximate solution for $I'$ which selects bases $\overline{y}_1 \in \dPM_1$ and $\overline{y}_2 \in \dPM_2$.
We construct an approximate solution
$y_j \in \cB(\PM_j)$ for instance $I$ by setting $y_j(e) = k_j - \overline{y}_j(e)$ for every $e \in E$ and $j \in \{1,2\}$. The construction of the dual polymatroid $\dPM_j$ implies $y_j \in \cB(\PM'_j)$ for $j \in \{1,2\}$.
We further have $y_j \in \cB(\PM_j)$, because $\PM'_j \subseteq \PM_j$ and $y_j(E) = f'_j(E) = f_j(E)$.
Moreover, for every machine $e \in E$ holds
\[
p_1 \cdot y_1(e) + p_2 \cdot y_2(e) = (1+t) - \left(p_1 \cdot \overline{y}_1(e) + p_2 \cdot \overline{y}_2(e) \right) \leq (1+t) - \frac{1}{\alpha} \cdot  \opt(I') \leq 1+t - \frac{t}{\alpha}.
\]

Since by construction $t \le 1$, we have $t - t / \alpha \le 1 - 
 1 / \alpha$, which implies that the makespan of the constructed solution $(y_1,y_2)$ is at most $2 - 1 / \alpha$.
\end{proof}

The second direction can be shown with the same proof idea and some additional tweaks specific to the direction. We give the full proof for the following lemma in~\Cref{apx:restr-matroid}

\begin{restatable}{lemma}{lemMatroidReductionSM}
Let $\alpha \geq 2$ and $I$ an instance of the restricted \matroidsanta problem with two resources and $\opt(I) \geq 1$. Then we can compute an instance $I'$ of the restricted \matroidmakespan problem with two jobs, such that, given a $(2- 1 / \alpha)$-approximate solution for~$I'$, we can compute a solution for $I$ with an objective value of at least $1 / \alpha$.
\end{restatable}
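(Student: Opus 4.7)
The plan is to mirror the polymatroid-duality construction of the preceding lemma, now moving from Santa to makespan. Given the Santa instance $I$ with values $v_1,v_2$ and polymatroids $\PM_1,\PM_2$ on $E$ satisfying $\opt(I)\ge 1$, I would first preprocess by replacing any $v_j>1$ with $1$, which preserves $\opt(I)\ge 1$ since a single unit of such a resource already carries at least the target value. I then set $k_j=\lceil 1/v_j\rceil$ for both $j$, define the truncated polymatroids $\PM'_j=\PM_j\cap\{x:x(e)\le k_j\;\forall e\in E\}$, let $\dPM_j$ be the dual polymatroid of $\PM'_j$ with respect to $k_j\cdot E$, and assemble $I'$ as the restricted \matroidmakespan instance on machines $E$ with two jobs of sizes $v_1,v_2$ and polymatroids $\dPM_1,\dPM_2$.

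The counterpart of the forward proof's observation that ``$\opt(I)\le 1$ forces $x_j(e)\le k_j$ automatically'' is a capping argument going the other way. For any Santa solution $(x_1,x_2)$ of value at least $1$, the coordinate-wise cap $\tilde x_j(e)=\min(x_j(e),k_j)$ lies in $\PM'_j$ by downward closedness, and every player $e$ still receives value at least $1$: either no cap is active and the value is unchanged, or some cap activates on resource $j$ and then $v_j\tilde x_j(e)=v_jk_j\ge 1$ by the choice of $k_j$. Hence $\opt_{\PM'}(I)\ge 1$. Using the bijection $x_j\leftrightarrow k_j\cdot E-x_j$ between $\cB(\PM'_j)$ and $\cB(\dPM_j)$, together with the identity $\sum_j v_j(k_j-x_j(e))=S-\sum_j v_jx_j(e)$ where $S=k_1v_1+k_2v_2$, this gives $\opt(I')\le S-1$.

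For the backward translation, I take a $(2-1/\alpha)$-approximate makespan solution $(\bar y_1,\bar y_2)$ and set $y_j=k_j\cdot E-\bar y_j\in\cB(\PM'_j)\subseteq\PM_j$; at every player $e$ we then have
\begin{equation*}
\sum_j v_j y_j(e)\;=\;S-\sum_j v_j\bar y_j(e)\;\ge\;S-(2-1/\alpha)(S-1)\;=\;1-(S-1)(1-1/\alpha),
\end{equation*}
which equals $1/\alpha$ precisely when $S\le 2$, yielding the desired bound.

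The main obstacle, and the ``additional tweaks specific to this direction'', is that for arbitrary $v_j\in(0,1]$ the quantity $S=v_1\lceil 1/v_1\rceil+v_2\lceil 1/v_2\rceil$ lies in $[2,4)$ and equals $2$ only when each $v_j$ is a reciprocal of a positive integer. I would handle this by a further preprocessing step that rounds each $v_j$ down to $1/\lceil 1/v_j\rceil$ and then applies \Cref{prop:matroid-makespan-reduce-jobs} to keep the job and resource counts at two throughout, accompanied by a rescaling inside the standard binary-search framework on $\opt(I)$ which absorbs the factor-at-most-two loss introduced by the rounding so that the final approximation ratio comes out to exactly $1/\alpha$. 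This final calibration between the rounded values, the dual construction and the binary search is where I expect the bulk of the bookkeeping to live; once it is set up, the rest of the argument is a step-by-step mirror of the preceding lemma with the roles of $\opt\le 1$ and $\opt\ge 1$ swapped.
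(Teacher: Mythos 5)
Your construction (truncate $\PM_j$ at $k_j$, dualize with respect to $k_j\cdot E$, use the identity $\sum_j v_j(k_j-x_j(e))=S-\sum_j v_jx_j(e)$) is exactly the paper's, and you correctly isolate the one real obstruction: the argument only closes when $S=k_1v_1+k_2v_2=2$, i.e.\ when each $v_j$ is the reciprocal of a positive integer. The gap is in how you propose to get there. Rounding each $v_j$ down to $1/\ceil{1/v_j}$ can shrink a value by a factor arbitrarily close to $2$ (e.g.\ $v_j=1-\delta\mapsto 1/2$), so the rounded instance $\tilde I$ may only satisfy $\opt(\tilde I)\ge 1/2$ rather than $\ge 1$. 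The binary-search framework cannot absorb this: it only locates the correct guess $T$ for $\opt(I)$, it does not recover multiplicative losses incurred inside the reduction. Concretely, either you run the construction on $\tilde I$ without rescaling, in which case $\opt(I')\le S-\opt(\tilde I)$ may be as large as $3/2$ and the bound $S-(2-1/\alpha)\opt(I')$ becomes negative for $\alpha\ge 2$; or you rescale $\tilde I$ so that its optimum is $1$, in which case the value $1/\alpha$ you recover is measured against $\opt(\tilde I)$ and translates only to $\approx 1/(2\alpha)$ in $I$. Either way you get a $2\alpha$-type guarantee, which breaks the exact $\alpha\leftrightarrow 2-1/\alpha$ equivalence of Theorem~\ref{thm:restricted-matroid-equivalence}. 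Note also that no other choice of $k_j$ can help: the capping argument needs $k_jv_j\ge 1$ while the makespan bound needs $k_1v_1+k_2v_2\le 2$, which together force $v_j=1/k_j$ exactly.

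The paper resolves this differently: it first invokes Lemma~\ref{lemma:two-value-santa-to-general-matroid}, whose case analysis reduces the two-value problem \emph{without loss in the approximation factor} to the normal form $v_1=\infty$ (a matroid that must cover certain players outright, reformulated as $v_1=1$ with $k_1=1$) and $v_2=1/b$ with $b\in\mathbb N$. In that normal form $S=2$ holds identically and your computation goes through verbatim. So the missing ingredient is not bookkeeping inside the binary search but a lossless structural reduction to the reciprocal-value case; with that substituted for your rounding step, the rest of your argument matches the paper's proof.
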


\section{Local search algorithm}
In this section we present our algorithm for Theorem~\ref{th:main-approx} that finds
an $(8+\epsilon)$-approximation for the restricted \matroidsanta problem and a $(4 + \epsilon)$-approximation in the case of two values. The analysis is moved to Appendix~\ref{apx:ana-localsearch}

We will show in Lemma~\ref{lemma:two-value-santa-to-general-matroid} that it suffices to solve the following problem with $\alpha = 4 + \epsilon$.
Given a matroid $\cM = (E, \cI)$ and a polymatroid $\PM$ over the same set of elements as well as some $b\in \mathbb N$, find some $I_M\in \cI$ and $y \in \PM$
such that for every $i\in E$ we have $i\in I_M$ or $y(i) \ge b$, or determine that no solution exists for $\alpha b$.
Before we move to the algorithm, we define some specific notation used throughout the section.
\paragraph*{Notation.}
Recall that for some $X\subseteq E$, we write $b \cdot X$ as the vector $y\in \mathbb Z^E$ with
$y(i) = b$ for $i\in X$ and $y(i) = 0$ for $i\notin X$.
Contracting a set $X \subseteq E$ of a matroid $\cM$ defines a new matroid $\cM \slash X$ obtained from
restricting the elements to $E\setminus X$ and defining
the rank function as $r(Y \mid X) = r(Y \cup X) - r(X)$.
Naturally, the independent sets of the contracted matroid form all the sets that together with \emph{any} independent set in $S$ are independent in the original matroid.
We use contraction primarily to fix some elements in
the matroid.
We need a similar notion for the polymatroid $\PM$ defined
by the submodular function~$f$.
However, while a single element $i\in E$
has rank function $r(i) \le 1$ in a matroid, the value $f(i)$ could be arbitrarily large.
In particular, contracting a set using $f(Y \mid X)$
we may reserve more resources for $X$ than 
intuitively necessary (recall we only need to cover elements with the polymatroid $b$ times).
Hence, we need to use a more sophisticated approach here. First, consider
the polymatroid $\PM' = \{y\in \PM : y(i) \le b \ \forall i\in X\}$, i.e., a
restriction on the multiplicity of each element in $X$ within the polymatroid. 
Let, $f'$ be the submodular function defining $\PM'$.
After this transformation we can use $f'(Y \mid X)$ without the
aforementioned issues. We introduce the short notation
\begin{equation*}
    f(Y \mid b \cdot X) = f'(Y \mid X) \ .
\end{equation*}
Note that $f(Y \mid b \cdot X)$ behaves as one would expect in the
sense of decreasing marginal returns, see Lemma~\ref{lem:decr-marg}.
We will define both $r(Y \mid X)$ and $f(Y \mid b \cdot X)$ on
all $X\subseteq E$ instead of only $Y\subseteq E\setminus X$.
More precisely, $r(Y \mid X) = r(Y \cup X) - r(X) = r((Y\setminus X) \cup X) - r(X) = r(Y \setminus X \mid X)$ gives a natural extension
although clearly the elements in $X$ behave
trivially.
Similar to this, we extend $f(Y \mid b \cdot X)$ to $Y \cap X \neq \emptyset$ by defining $f(Y \mid b \cdot X) = f(Y\setminus X \mid b \cdot X)$.

\paragraph*{Framework.}
We move to a further variation of the problem, which resembles an augmentation framework
similar to matroid partition problems, where we are given a partial solution that we then extend.
The algorithm itself is defined using recursion with the following interface.
\begin{description}
\item{Input.} Matroid $\cM = (E, \cI)$ with rank function $r$, polymatroid $\PM \subseteq \mathbb Z^E$ with function $f$, both over the same elements, and a number $b\in\mathbb N$.

Further, disjoint sets $I_M\in \cI, b\cdot I_P \in \PM, B_0 \subseteq E$.
Finally, a partial order~$\prec$ on $B_0$.

\item{Output.} Either an augmented solution
$I_M'\in \cI$ and $b\cdot I'_P\in \PM$
such that $I_M'\dot\cup I_P' \supseteq I_M\cup I_P$ and
$|I_M' \cap B_0| \ge \epsilon^2 |B_0|$
or ``failure''.

In case failure is returned, we provide a certificate that proves that no $I_M^*, I_P^*$ can exist with $I_M^*\cup I_P^* \supseteq I_M\cup I_P$, $I_M^*\in \cI$ and the stronger conditions $\alpha b \cdot I_P^* \in \PM$ for $\alpha = 4 + \cO(\epsilon)$ and $|I_M^* \cap B_0| \ge 3 \epsilon |B_0|$. Details on the certificate follow in the analysis.

The partial order~$\prec$ affects which elements of $B_0$ the algorithm tries first to add to $I_M$. The precise guarantees on the output are subtle, but important inside the recursion, see proof of Lemma~\ref{lem:addable3}.
\end{description}
We can apply this variant to solve our previous polymatroid problem as follows: we initialize $I_P$ as the set of all elements $i\in E$ that have $r(i) = 0$.
If $b\cdot I_P \notin \PM$ it is clear that the optimum is smaller than $b$.
Assuming $b\cdot I_P \in \PM$ we set $\cI = \emptyset$ and now extend $I_M\cup I_P$ one element at a time by calling the procedure above with $B_0 = \{i\}$ for some element $i\notin I_M\cup I_P$. Each time $I_M$ and $I_P$ will be changed, but end up covering
an additional element. Repeating this at most $|E|$ times we either have a solution
that covers all elements or
some element $i$ cannot be added, which certifies that no solution exists with $i$ covered by the matroid. In this
case we alter the rank function to $r(X) \gets r(X - i)$, setting in particular $r(i) \gets 0$. We then restart the whole procedure.

We will now describe how to solve this variant of the problem.
First, we assume without loss of generality that $I_M \cup I_P \cup B_0 = E$ by
simply dropping irrelevant elements from the input.
As stated above, we want to add many elements of $B_0$ to $I_M$.
In the trivial case that $r(B_0 \mid I_M) \ge \epsilon^2 |B_0|$, we add greedily
as many elements of $B_0$ as possible to $I_M$ (while maintaining $I_M\in \cI$), which will
result in $|I_M\cap B_0| \ge \epsilon^2 |B_0|$, and we
terminate successfully.
Otherwise, we will have to remove elements from $I_M$ before we can add sufficiently many elements of $B_0$ to $I_M$.
To this end, we will carefully construct a set
of \emph{addable elements} $A\subseteq I_M$, where the notion has historical reasons and comes from the idea
that we want to ``add'' $A$ to $I_P$.
The procedure for creating $A$ is deferred to later
and here we summarize only its important properties.
The existence of $A$ will be guaranteed by the algorithm.
Along with $A$ we also create the set $C$ with $A \subseteq C \subseteq I_M$, which contains more elements
of $I_M$ that are relevant for adding
$B_0$ to $I_M$ (but not all of them could be added to $A$).
The relevant properties of $A$ and $C$ are as follows.
\begin{enumerate}
    \item It holds that $2b \cdot A \in \PM$, which means that in principle $A$ could be added to $I_P$ (and removed from~$I_M$).
    Note that this does not take into account potential conflicts with other elements currently in~$I_P$.
    Also remark that we are intentionally overprovisioning here, by using $2b$ instead of~$b$.
    \item Set $A$ is maximal within $C$ regarding the previous property.
    More specifically,
    $f(i \mid 2b \cdot A) < 2b$ for all $i\in C\setminus A$.
    \item If we remove many elements of $A$ from $I_M$, we are able to add many elements of $B_0$ to $I_M$.
    Specifically, we require that for every $R\subseteq A$ with $|R| \ge \epsilon |A|$
    we have 
    \begin{equation*}    
    r(B_0 \mid I_M \setminus R) \ge \epsilon^2 |B_0| \ .
    \end{equation*}
    We note that while this property initially holds, only a weaker version is maintained as the algorithm progresses. For more details see Lemma~\ref{lem:addable3}.
    \item Set $C$ should contain almost all elements that
    block elements of $B_0$ from being added to $I_M$.
    Formally, $r(B_0 \mid C) \le 2\epsilon |B_0|$. In particular, a solution that covers
    many elements of $B_0$ with the matroid 
    needs to cover a substantial amount of elements
    in $C$ with the polymatroid.
\end{enumerate}
Our new goal is to move many elements of $A$ to $I_P$, which may not be possible immediately because of conflicting
elements currently in $I_P$. We first characterize these elements:
Define the \emph{blocking elements} $B$ as the set of all $i\in I_P$ such that $f(i \mid b \cdot (I_P \cup A - i)) < b$.
It is intuitively clear that elements not in $B$ are not relevant to adding $A$:
assume we remove some elements of $I_P$ and add some elements of $A$ to it. Then afterwards all elements not in $B$ can easily be added back to $I_P$ (if they were removed), since each of their marginal values will still be at least $b$. 

When an element in $A$ can be added to $I_P$, we will not add it right away.
Instead we will
only place it in a set of \emph{immediately addable elements} $A_I$. Only when
we have enough of these elements to successfully terminate, we will add them to $I_P$.
This is mainly for simplicity, i.e., to keep structures as static as possible during execution.
We now repeatedly perform the first possible operation from the following:
\begin{enumerate}
    \item If $f(i \mid b \cdot (I_P \cup A_I)) \ge b$ (equivalently, $b \cdot (I_P \cup A_I + i) \in \PM$) for some $i\in A\setminus A_I$, add $i$ to $A_I$.
    \item If $|A_I| \ge \epsilon |A|$, add $A_I$ to $I_P$, remove it from $I_M$, and greedily add as many elements of $B_0$ as possible to $I_M$,
    in the order given by~$\prec$. We can then terminate successfully (Lemma~\ref{lem:addable3}).
    \item If $|B| < \epsilon |B_0|$, return ``failure''.
    \item If none of the above applies, we will recurse on $B$, which means that we try to move many elements of $B$ to $I_M$ so that they can be removed from $I_P$, hopefully allowing us to move elements of $A$ to $A_I$. The details of the recursion follow towards the end of the section.
\end{enumerate}
\paragraph*{Construction of addable elements.}
We construct a series of disjoint sets $A_1,A_2,\dotsc$
as follows:
assume that $A_1,A_2,\dotsc,A_{\ell-1}$ have already been
created. We initialize $A_\ell = \emptyset$.
Then repeat the following until exhaustion:
if there exists an $i\in I_M\setminus (A_1\cup A_2 \cup \cdots \cup A_\ell)$ with $r(i \mid B_0 \cup I_M \setminus A_{\ell} - i) = 0$ and
\begin{equation*}
    f(i \mid 2b \cdot (A_1 \cup A_2 \cup \cdots \cup A_\ell)) \ge 2b \ ,
\end{equation*}
then we add $i$ to $A_{\ell}$.
If $|A_{\ell}| < \epsilon |B_0|$ we terminate with
$A = A_1 \cup A_2 \cup \cdots \cup A_{\ell-1}$. 
Otherwise, we continue with the next set.
When the construction of $A$ is finalized,
we define
\begin{equation*}
  C = B_0 \cup A \cup \{i\in I_M\setminus A : f(i \mid 2b \cdot A) < 2b \} \ .
\end{equation*}
\paragraph*{Recursion.}
We will denote the input of the recursion using prime next to the symbol, e.g. $E'$, $B'_0$, etc.
The goal of the recursion is to move many elements of $B$ to $I_M$.
On the other hand, we want to keep our structures
within $B_0, A,$ and $C$ largely intact.
To this end, we first contract $C$ from the matroid,
which means that the recursion cannot remove elements
of $C$ from $I_M$. It may not be intuitively clear
why this would be bad for us, but removing many elements of $C$ does not necessarily allows us to add many elements of $B_0$ (comparable to Property~3 of the set $A$) for arbitrary elements of $C$.
In any case, we want to avoid
the complications related to such changes in $C$.
Hence, we set
\begin{align*}
    E' &= E\setminus C\ , \\
    I_M' &= I_M \setminus C \ ,\text{ and} \\
    r'(X) &= r(X \mid C) \quad \forall X\subseteq E'
\end{align*}
which defines a new matroid $\cM' = (E', \cI')$.
Regarding the polymatroid, we remove $B$ from $I_P$ so as to produce an input where
sets $B'_0$, $I_M'$, and $I_P'$ are disjoint. However, by contracting $b \cdot B$ from the polymatroid we make sure that we can add it back to the modified solution after the recursion has returned (at least for those elements that were not moved into $I_M$).
Furthermore, we want to avoid that the recursion moves elements to $I_P$
that hinder $A$ from being added to the polymatroid. Hence, we contract $b\cdot A$
as well.
This is achieved by setting
\begin{align*}
    I'_P &= I_P \setminus B \text{ and} \\
    f'(X) &= f(X \mid b \cdot (A \cup B)) \quad \forall X\subseteq E' \ .
\end{align*}
From $f'$ we obtain the new polymatroid $\PM'$. Finally, we define
\begin{equation*}
   B'_0 = B_0 \cup B
\end{equation*}
and extend~$\prec$ by giving all elements of $B$ lower priority than $B_0$.
Intuitively, it does not hurt us to include $B_0$ in $B'_0$.
If the recursion manages to move elements to $B_0$,
this is only good for us.
We prove in Lemma~\ref{lem:recursive-feasible} that
this indeed forms a valid input of the problem.
If the recursive call returns failure, we return failure as well.
Otherwise, we update as follows. Let $I''_M$ and $I''_P$ be the output of the recursion.
First, we add back the previously removed $C$:
\begin{equation*}
    I_M \gets C \cup I_M''  \ .
\end{equation*}
As for $I_P$, we want to add back $B$ except for those elements that were covered with the matroid in the recursive call. Thus,
\begin{equation*}
    I_P \gets (B \setminus I_M'') \cup I_P'' \ .
\end{equation*}
In Lemma~\ref{lem:inv-feasible} we show that the new $I_M, I_P$ constitute
again a feasible solution.
If the returned set~$I_M''$ satisfies $|B_0 \cap I_M''| \ge \epsilon^2 |B_0|$ we terminate successfully.
Otherwise, it must hold that $|B\cap I_M''| \ge \epsilon^2 |B|$, which intuitively means we made big progress in freeing $B$ and is used in the running time analysis.
Since $I_P$ has changed, $B$ may no longer correspond to its initial definition. Hence,
we update $B$ to again reflect
the set of all $i\in I_P$
with $f(i\mid b \cdot (I_P\cup A - i)) < b$ according to the
new set $I_P$. 

\section{Final remarks} 

For the two notorious open problems in scheduling theory, we prove \makespan to be at least as difficult as \santa; more precisely, a better-than-$2$ approximation for \makespan would imply an $\cO(1)$-approximation for \santa. In the two-value case both problems appear equivalent w.r.t.\ approximability. 
The obvious open question is whether there is also a \makespan-to-\santa reduction (for restricted assignment or the general case). Here we note that for restricted assignment \makespan,
all efforts to refine the local search method in order
to give a better-than-$2$ approximation have failed so far. 
Also with our new reduction techniques it seems that it would require additional ideas to handle this problem.
By the reductions, our local search method generalizes all previously known polynomial-time local search results for the two problems (up to constants) and
yields the first approximation algorithm for a new matroid \santa variant, where items are allocated to the basis of a (poly-)matroid.
We hope that this makes it clearer where the power and limitations of the method are.

Finally, we comment on an alternative matroid scheduling variant with matroid constraints on the {\em items} allocated to a specific machine/player. In the {\em machine-matroid} \makespan problem, each machine would be given a matroid on the jobs. 
All jobs must be assigned such that each machine receives an independent set of its matroid. 
The player-matroid \santa can be defined similarly. 

Kawase et al.~\cite{KawaseKMS21} consider such matroid partition problems for various objective functions showing complexity results. 
Further, two special-matroid examples for \makespan have been studied, namely, bag-constrained scheduling~\cite{DasW17,GrageJK19} (single partition matroid) and scheduling with capacity constraints~\cite{ChenJLZ16} (uniform matroids). 
The approximability lower bound $\Omega((\log n)^{1/4})$ 
by \cite{DasW17} holds for the restricted assignment setting and even translates to an inapproximability bound for machine-matroid \makespan for identical machines with machine-dependent matroids. 
We are not aware of any similarly strong lower bounds for the \santa variant.

\bibliographystyle{plain}
\bibliography{literature}

\appendix

\section{Definitions and notation}
\label{apx:notation}

\propreducejobs*

\begin{proof}
In the following, we use the notation $[h] := \{1,\ldots,h\}$.
Let $I$ be an instance of the restricted \matroidmakespan problem with machines $E$ and $h$ distinct processing times $p_1,\ldots,p_h$.
Let $J_\ell$, $\ell \in [h]$, denote the set of jobs with processing times $p_\ell$.
Further, let $\PM^\ell_j$ with $\ell \in [h]$ and $j \in J_\ell$ denote the corresponding polymatroids over $E$ and let $f_j^\ell$ be the associated submodular function.

We construct an instance $I'$ of the restricted \matroidmakespan problem with $h$ jobs by using the same set of machines $E$ and creating the polymatroids $\PM_\ell$ for $\ell \in [h]$ with the montone submodular function $f_\ell(S) = \sum_{j \in J_\ell} f_j^\ell(S)$ for every subset $S \subseteq E$.
Note that $\PM_\ell = \sum_{j \in J_\ell} \PM^\ell_j$~\cite{schrijver2003combinatorial}.
For $\ell \in [h]$, the goal in instance $I'$ is to find vectors $x_\ell \in \mathcal{B}(\PM_\ell)$ such that $\max_{e \in E} \sum_{\ell \in [h]} p_\ell \cdot x_\ell(e)$ is minimized. We prove that this reduction preserves the approximation factor.

Consider a solution of instance $I$ that selects the bases $x_j^\ell$ for job $j \in J_\ell$ with $\ell \in [h]$ and consider the vectors $x'_\ell$ with $x'_\ell(e) = \sum_{j \in J_\ell} x_j^\ell(e)$ for all $e \in E$.
Using again the fact that $\PM_\ell = \sum_{j \in J_\ell} \PM^\ell_j$ we have $x'_\ell \in \PM_\ell$ for all $\ell \in [h]$.
In particular, $x'_\ell(E) = \sum_{e \in E} \sum_{j \in J_\ell} x_j^\ell(e) = \sum_{j \in J_\ell} f_j^\ell(E) = f_\ell(E)$, so $x'_\ell$ is a basis of $\PM_\ell$. Thus, $(x'_1, \ldots,x'_h)$ is a feasible solution for instance $I'$.
Furthermore,
\[
\opt(I') \le \max_{e \in E}\sum_{\ell \in  [h]} x'_\ell(e) \cdot p_\ell = \max_{e \in E} \sum_{\ell \in [h]} \sum_{j \in J_\ell} x_j^\ell(e) \cdot p_\ell = \opt(I).
\]

Consider some solution $(y'_1, \ldots , y'_h)$ to instance $I'$, i.e., $y'_\ell \in \cB(\PM_\ell)$ and  $f_\ell(E) = y'_\ell(E)$ for all $\ell \in [h]$.
We construct a solution to $I$ by decomposing each $y'_\ell$, $\ell \in [h]$, into bases $y_j^\ell \in \cB(\PM_j^\ell)$ such that $y'_\ell(e) = \sum_{j \in J_\ell} y_j^\ell(e)$ holds for all $e \in E$.
As $\opt(I') \le \opt(I)$, this implies that the reduction preserves the approximation factor. 
If such decomposition would not exist for some $\ell \in [h]$, then, by construction of the submodular function $f_\ell$, we would arrive at a contradiction to $f_\ell(E) = y'_\ell(E)$.

To find the decomposition for an $\ell \in [h]$ in polynomial time, consider the polymatroids $\bhat{\PM}^\ell_j$ which are just copies of the original polymatroids $\PM^\ell_j$ on pairwise disjoint copies $\bhat{E}_j$ of the ground set $E$. 
For each $\ell \in [h]$, we decompose the solution $y'_\ell$ of instance $I'$ into bases of the copy polymatroids, which then implies a decomposition into bases of the original polymatroids. 
For each $e \in E$, let $C_e$ denote the set of copies of $e$ introduced by the ground set copies.
We want to find a basis $\bhat{y}_j^\ell$ for every $j \in J_\ell$ such that $\sum_{\bhat{e} \in C_e} \bhat{y}_j^\ell(\bhat{e}) = y'_\ell(e)$ holds for all $e \in E$ and $\ell \in [h]$. 
For an element $e \in E$ and $\ell \in [h]$, consider the polymatroid $\mathcal{X}^\ell_e$ on the ground set $C_e$ implied by bases $\mathcal{B}(\mathcal{X}^\ell_e) = \{x \in \mathbb{Z}^{C_e}_{\geq 0} : x(C_e) = y'_\ell(e)\}$ and let $\mathcal{X}^\ell$ denote the union of these polymatroids. Furthermore, let $\bhat{\PM}^\ell$ denote the union of the polymatroids $\bhat{\PM}^\ell_j$. The largest element in the intersection of $\mathcal{X}^\ell$ and $\bhat{\PM}^\ell$ gives us the decomposition. We can compute the largest element in the intersection in polynomial time using algorithms for polymatroid intersection (cf.~e.g.~\cite[Chapter 41]{schrijver2003combinatorial}).

The statement for \matroidsanta can be shown with the same reduction and proof; only the inequality $\opt(I') \le \opt(I)$ trivially changes to $\opt(I') \ge \opt(I)$.
\end{proof}

\section{Santa Claus with polynomially many configurations}\label{apx:polynomial-configs}

\santaConfigurationReduction*

\begin{proof}
    Let $\epsilon > 0$ be a sufficiently small constant and $\kappa = \ceil{1/ \epsilon^3}$.
    Given a \santa instance $I$ with the set $P$ of $m$ players, the set $R$ of $n$ resources and $\opt(I) \geq 1$, we construct the \santa instance $I'$ by executing the following steps. 
    
    \begin{enumerate}
        \item Use the same set of players and resources as in $I$.
        \item Round all resource values $v_{ij}$ down to the closest power of $1 / (1+\epsilon)$. That is, we round $1/(1+\epsilon)^{\ell-1} \ge v_{ij} \ge 1 / (1+\epsilon)^\ell$ to $\bar{v}_{ij} = 1 / (1+\epsilon)^\ell$.  If $v_{ij} \ge 1$, then we set $\bar{v}_{ij} = 1$. Furthermore, we round all $v_{ij}$ with $v_{ij} < 1 / ((1+\epsilon)n)$ to zero.
        In summary, each  $\bar{v} \in \cT$ is either a power of $1/(1+\epsilon)$ of value at least $1/((1+\epsilon)n)$ or $0$.
    \end{enumerate}

    Next, we construct the configurations for the rounded instance $I'$ by executing the following steps. 
    Since these steps will reduce the number of possible configurations per player to a polynomial, an algorithm creating the configurations can just compute them via enumeration.  
    
    \begin{enumerate}
         \setcounter{enumi}{2}
         \item For each player $i \in P$, we create the set $\cC_i$ of configurations and restrict the set to configurations $c$ such that, for every value type $\bar{v} \in \cT$, either $c(\bar{v}) = 0$, $c(\bar{v}) = \lceil(1+\epsilon)^\ell\rceil$ or $c(\bar{v}) = \lfloor(1+\epsilon)^\ell\rfloor$ for some $\ell \in \mathbb{N}_0$ with $(1+\epsilon)^\ell \le n$. 
        \item 
        Let $\bv_1 \geq \ldots \geq \bv_\tau$ be the rounded value types in $\cT$. We partition $\cT$ into $\kappa$ value classes $\cT_1,\ldots,\cT_\kappa$ where $\cT_\ell = \{\bv_{\ell + s \cdot \kappa} : s=0,1,\ldots\}$.
        We further restrict the set of configurations $\cC_i$ for a player $i \in P$ to configurations $c$ which satisfy
        for every $1 \leq \ell \leq \kappa$ and
        for every $\bv, \bv' \in \cT_\ell$ with $\bar{v} > \bar{v}'$ that either $c(\bar{v}) < c(\bar{v}')$ or $c(\bar{v}) = 0$ or $c(\bar{v}') = 0$. That is, the function values of value types $\bar{v} \in \cT_\ell$ of one value class that actually occur in a configuration (i.e., have $c(\bar{v}) > 0$) increase with decreasing value $\bar{v}\in \cT_\ell$.
    \end{enumerate}
  
    We first argue that, for each player $i$, the number of configurations in $\cC_i$ is polynomial in the input size. Because of the second step, the number of value types in $I'$ is in $\cO_\epsilon(\log n)$. By the third step, the number of distinct function values $c(\bar{v})$ over all configurations $c \in \cC_i$ and all  value types $\bar{v} \in \cT$ is in $\cO_\epsilon(\log n)$ as well.
    
    By step $4$, we can represent the entries of $c \in \cC_i$ which correspond to the same value class $\cT_\ell$ in terms of a vector with $\cO_\epsilon(\log n)$ entries such that all non-zero entries strictly increase in value (each entry of the vector corresponds to a value type $\bar{v} \in \cT_\ell$, in decreasing order, and the entry values represent the corresponding function values $c(\bar{v})$).
    We can represent such a vector by the set of entries that have a non-zero value and by the set of non-zero values that occur in the vector.
    Since there are at most $2^{\cO_\epsilon(\log n)}$ different sets of non-zero values that can occur in the vector and at most $2^{\cO_\epsilon(\log n)}$ different sets of non-zero entries, the number of such vectors is $2^{\cO_\epsilon(\log n)} \cdot 2^{\cO_\epsilon(\log n)} \subseteq n^{\cO_\epsilon(1)}$. 
    Since the total number of value classes is constant, and there is a simple rule to compose the vectors of every class to a vector for all value types $\cT$, the total number of vectors which represent every valid configuration is polynomial in the size of $I$.
    
    Next, we prove the approximation factor. Let $I'$ denote the rounded instance constructed by the first two steps.
    Furthermore, let $\cC'$ denote the collection of configurations that is created by only executing the third step of the construction and let $\cC$ denote the final collection of configurations.
    We separately prove $\opt_{\cC'}(I') \ge 1 / (1+\epsilon)^3$ and $\optc(I') \ge  \opt_{\cC'}(I') / (1+\epsilon)$. 
    Together, these inequalities imply $\optc(I') \ge 1 / (1+\epsilon)^4$. Then, for any sufficiently small $\epsilon' > 0$, we can choose $\epsilon = \epsilon'/5$ and conclude $\optc(I') \ge 1 / (1+\epsilon')$.

    We first show $\opt_{\cC'}(I') \ge 1 / (1+\epsilon)^3$. Consider some optimal solution for $I$. For a player $i$, let $A_i$ denote the resources that are assigned to $i$ in the optimal solution. Clearly $v(A_i) \ge 1$. Discarding all resources in $A_i$ with value smaller than $1/((1+\epsilon)n)$ reduces the value of $A_i$ by a factor of at most $1+\epsilon$. Rounding the remaining resource values down to powers of $1+\epsilon$ reduces the value by another factor of $1+\epsilon$.  To make sure that the remaining resources in $A_i$ with their rounded values match a configuration in $\cC'_i$, we might have to remove a $1+\epsilon$ fraction of the resources for each value type from $A_i$. This reduces the value of $A_i$ by another factor of $1+\epsilon$. 
    The remaining value is at least $1 / (1+\epsilon)^3$. By doing this for every player $i$, we obtain a solution to $I'$ that matches $\cC'$ with an objective value of at least $1 / (1+\epsilon)^3$, which implies $\opt_{\cC'}(I') \ge 1 / (1+\epsilon)^3$. 

    Finally, we prove $\optc(I') \ge  \opt_{\cC'}(I') / (1+\epsilon)$. To that end, fix an optimal solution for $I'$ among those solutions that match $\cC'$. Consider some player $i$ and let $c'_i$ denote the configuration that is selected for player $i$ in the optimal solution for $I'$. We argue that we can find a configuration $c_i \in \cC_i$ that 
    \begin{enumerate}[(i)]
        \item has a total value that is at least a $1 / (1+\epsilon)$ fraction of the value of $c'_i$ and 
        \item satisfies $c_i(\bar{v}) \le c'_i(\bar{v})$ for all $\bar{v} \in \cT$.
    \end{enumerate}
    This gives us a feasible solution for $I'$ that matches $\cC$ and has an objective value of at least $\opt_{\cC'}(I') / (1+\epsilon)$ and, thus, proves the statement. 

    We start by building a configuration $c_i$ independently for every value class $\cT_\ell$.
    
    First, we iteratively construct a subset $S_\ell \subseteq \cT_\ell$ of value types as follows.
    
    Start with the largest $\bv \in \cT_\ell$ such that $c'_i(\bv) > 0$ and add $\bv$ to $S_\ell$. Then, find the largest $\bv' \in \cT_\ell$ with $\bv > \bv'$ and $c'_i(\bv) < c'_i(\bv')$. Add $\bv'$ to $S_\ell$ and repeat from $\bv'$ until we do not find another value type to add. Based on $S_\ell$, define configuration $c_i$ as $c_i(\bar{v}) = c'_i(\bar{v})$ if $\bv \in S_\ell$ and  $c_i(\bar{v}) = 0$ otherwise. 
    By choice of the sets $S_\ell$, the configuration $c_i$ is contained in $\cC_i$ and satisfies (ii).
    
    It remains to prove that $c_i$ also satisfies (i).  
    Fix an arbitrary value type $\bv_j \in S_\ell$ and let $\bv_{j'}$ denote the next smaller value type in $S_\ell$. 
    Recall that the rounded value types are indexed in decreasing order and let $s$ be the integer such that $j' = j + \kappa \cdot (s + 1)$. If $\bv_j$ is already the smallest value type in $S_\ell$, we set $s$ to the largest integer such that $\ell + \kappa \cdot s \leq \tau$. 
    
    We show  
    \begin{equation}
        c_i(\bar{v}_j) \cdot \bar{v}_j \ge \frac{1}{1+\epsilon} \cdot \sum_{s' = 0}^{s} c'_i(\bar{v}_{j + s' \cdot \kappa}) \cdot \bar{v}_{j + s' \cdot \kappa}. \label{eq:inc-config-rounding}
    \end{equation}
    If this inequality holds for all  $\bv_j \in S_\ell$, and for all $1 \leq \ell \leq \tau$, then (ii) follows. 

    To prove the inequality, observe that, by choice of set $S_\ell$, all integers  $0 \leq s' \leq s$ satisfy $c'_i(\bar{v}_{j + s' \cdot \kappa}) \le c'_i(\bar{v}_{j})$. Furthermore, $\bar{v}_{j + s' \cdot \kappa} = \bar{v}_j / (1+\epsilon)^{\kappa \cdot s'}$ by the rounding of the value types. This gives us
    \begin{align}
        \sum_{s' = 0}^{s} c'_i(\bar{v}_{j + s' \cdot \kappa}) \cdot \bar{v}_{j + s' \cdot \kappa} 
        \leq c'_i(\bar{v}_{j}) \cdot \sum_{s' = 0}^{s} \bar{v}_{j + s' \cdot \kappa} 
        &= c_i(\bar{v}_{j}) \cdot  \bar{v}_{j} \cdot \sum_{s' = 0}^{s} \frac{1}{(1+\epsilon)^{\kappa \cdot s'}} \notag \\
        &\leq c_i(\bar{v}_{j}) \cdot \bar{v}_{j} \cdot \frac{1}{1 - \frac{1}{(1+\epsilon)^\kappa}} \label{eq:geometric-bound}
    \end{align}

    By further using 
    \[
        \kappa \geq \frac{1 + \epsilon}{\epsilon^2} \geq \frac{1}{\epsilon} \cdot \frac{1}{\ln(1+\epsilon)} \geq \frac{\ln(1 + \frac{1}{\epsilon})}{\ln(1+\epsilon)} = \log_{1+\epsilon} \left(\frac{1+\epsilon}{\epsilon} \right),
    \]
    it is not hard to see that \eqref{eq:geometric-bound} is at most $c_i(\bar{v}_{j}) \cdot  \bar{v}_{j} \cdot (1 + \epsilon)$, and thus, implies \eqref{eq:inc-config-rounding}. This concludes the proof of $\optc(I') \ge \opt_{\cC'}(I') / (1+\epsilon)$.
\end{proof}

\section{Approximation equivalence for the unrelated two-value case}
\label{apx:two-value}

We consider the two-value  \santa and \makespan variants, where all resource values and job sizes are in $\{\si,\bi,0\}$ and $\{\si,\bi, \infty \}$, respectively, with $\si,\bi \geq 0$. Assume w.l.o.g.~that $\si \le \bi$. We prove the following theorem.

\twoValueEquivalence*

To prove this theorem we prove two lemmas (\Cref{thm:unrelated-twovalue-makespan-to-santa} and \Cref{theorem:unrelated-twovalue-santa-to-makespan}) in the following subsections, one for each direction.
Then, a standard binary search argument completes the proof.

\subsection{Makespan to Santa Claus}

\begin{lemma}\label{thm:unrelated-twovalue-makespan-to-santa}
    Let $I$ be an instance of the two-value \makespan problem with $\opt(I) \leq 1$. For any $\alpha \geq 2$, 
    we can construct in polynomial time an instance $I'$ of two-value \santa such that, given an $\alpha$-approximate solution for $I'$, we can compute in polynomial time a solution for $I$ with an objective value of at most $2-1/\alpha$.
\end{lemma}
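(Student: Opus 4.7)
My plan is to mirror in reverse the Santa-to-Makespan reduction of Lemma~\ref{thm:reduction-polyconfig-santa-to-makespan}. Given the \makespan instance $I$ with $\opt(I)\le 1$, for each machine $i$ I would enumerate the set $\cC_i$ of candidate configurations---multisets $c$ over $\{\si,\bi\}$ with $c(\si)\si+c(\bi)\bi \le 1$ and each $c(v)$ bounded by the number of jobs of size $v$ assignable to $i$. Because only two sizes occur, $|\cC_i|=O(n^2)$, so the configuration-rounding step of Lemma~\ref{lemma:configuration-reduction} is unnecessary here and we avoid the $(1+\epsilon)$ factor that appears in the general direction.

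I would then construct a \santa instance $I'$ that is the symmetric mirror of the gadget in Lemma~\ref{thm:reduction-polyconfig-santa-to-makespan}. For each pair $(i,c)$ with $c\in\cC_i$, introduce a configuration-player $\pi^i_c$; for each job $j$, introduce a job-player $q_j$. Per-machine selection-resources play the analog of the player-jobs: they are valuable for the configuration-players of their machine, and in any $\alpha$-approximate \santa solution they saturate all but one $\pi^i_{c^\star_i}$ per machine, letting us read off $c^\star_i$ as the assignment of jobs to machine $i$. Per-configuration configuration-resources of values $\si$ and $\bi$, mimicking the configuration-jobs, either stay at $\pi^i_c$ (meaning $c$ is unused) or flow to a job-player $q_j$ with $p_{ij}=v$ (meaning $j$ is placed on $i$).

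Once the gadget is pinned down, the feasibility side $\opt(I')\ge 1$ is routine: given an optimal \makespan assignment, let each selected configuration $c^\star_i$ match the jobs on $i$, send the selection-resources to the remaining $\pi^i_c$'s, and route the $c^\star_i$-configuration-resources to the corresponding job-players. The approximation side mirrors Lemma~\ref{lemma:sc-to-makespan-approx}: in an $\alpha$-approximate \santa solution, the value on $\pi^i_{c^\star_i}$ must come entirely from configuration-resources and is at least $1/\alpha$; hence at most $1-1/\alpha$ of $c^\star_i$'s resources can have flowed to job-players, giving a load on machine $i$ of at most $1 + (1 - 1/\alpha) = 2 - 1/\alpha$.

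The principal obstacle is the two-value restriction. In Lemma~\ref{thm:reduction-polyconfig-santa-to-makespan} the configuration-jobs were sized $v/|c|$, but here every resource value must lie in $\{\si,\bi\}$, so we cannot normalize configurations analogously. Resolving this requires (i) a preprocessing/padding step that forces every $c\in\cC_i$ to have $|c|$ equal to a common value (essentially $1$) using only $\si$- and $\bi$-valued dummy resources, and (ii) a gadget that prevents a single job-player $q_j$ from being ``claimed'' by several machines simultaneously---for instance by pinning the target value at $q_j$ to exactly $p_j$ via auxiliary fillers so that $q_j$ receives exactly one flowing configuration-resource. Verifying that these dummy resources do not spuriously inflate the recovered \makespan solution, nor absorb the selection- or configuration-resource value intended for genuine players, will be the most delicate part of the argument.
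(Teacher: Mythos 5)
Your plan diverges substantially from the paper's proof, and as written it has genuine gaps. The paper does not mirror the configuration gadget at all. It uses a much more direct construction: after a preprocessing step assuming $\bi > \opt(I)/2$ (otherwise Lenstra--Shmoys--Tardos rounding already yields makespan $\opt(I)+\bi \le \tfrac{3}{2}\opt(I) \le 2-1/\alpha$), each machine $i$ becomes a single machine-player $q_i$ owning one resource of value $\bi$ and $k=\min\{\lfloor 1/\si\rfloor, n\}$ resources of value $\si$; that is, the machine's capacity is materialized as ``slot'' resources of total value $1+t$, where $t=\bi+k\si-1\le 1$. Each job $j$ becomes a job-player that values, at $\bi$ each, exactly those slot-resources it could occupy ($r_i$ if $p_{ij}=\bi$, the $r_i^\ell$ if $p_{ij}=\si$), so in any nonzero solution every job-player grabs a slot on some machine, which is read off as the job assignment. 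The machine-player must retain value at least $t/\alpha$, so the slots stolen from it have value at most $(1+t)-t/\alpha\le 2-1/\alpha$, and this is exactly the recovered load. No configurations, no selection gadget, and only the values $\{0,\si,\bi\}$ ever appear.

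Two concrete problems with your mirror. First, the selection mechanism does not dualize: in the \makespan direction a single player-job must land on exactly one configuration-machine, but in the max-min direction you would need selection-resources to saturate \emph{all but exactly one} configuration-player per machine, and nothing in an $\alpha$-approximate \santa solution prevents two selection-resources from landing on the same configuration-player, leaving several unsaturated configuration-players whose demands must all be met by configuration-resources. Second, and more seriously, your load accounting breaks: the configuration-resources of the \emph{non-selected} configuration-players of machine $i$ are free to flow to job-players (those players are already saturated by selection-resources), so either such flows count as assignments to machine $i$ --- in which case the recovered load is not bounded by the size of the one selected configuration and the $2-1/\alpha$ bound fails --- or they do not count, in which case a job-player fed only by non-selected configuration-resources corresponds to a job assigned to no machine. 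The paper's one-player-per-machine construction avoids both issues because there is nothing to select and every resource a job-player takes is unambiguously a capacity slot of a specific machine. (The two-value padding you defer is a third unresolved issue, but it is moot given the above.)
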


Fix an instance $I$ of the two-value \makespan problem with $\opt(I) \leq 1$. 
First, we observe that we can w.l.o.g. assume that $\bi > \opt(I) / 2$. Otherwise, the algorithm by~\cite{lenstra1990approximation} gives us a solution of objective value at most $\opt(I) + \bi \le 3/2 \cdot \opt(I)$. 
Since $\alpha \ge 2$, this solution satisfies the lemma for every possible $\alpha$. Thus, in the following we assume $\bi > \opt(I) / 2$. 

\paragraph{Construction.}
We construct an instance $I'$ of the two-value \santa problem.
Let $k = \min\{\lfloor 1 / \si \rfloor, n\}$, where $n$ is the number of jobs in $I$. That is, $k$ denotes the maximal number of small jobs that can be assigned to a single machine in an optimal solution with $\opt(I) \leq 1$. By our assumption on the size $\bi$,  we have that at most one big job can be placed on a single machine. 

For every machine $i$ we introduce a machine-player~$q_i$, one (large) resource~$r_i$, and~$k$ (small) resources~$r_i^1,\ldots,r_i^{k}$. 
The value of the large resource~$r_i$ for player~$q_i$ is equal to~$\bi$, the value of a small resource~$r^{\ell}_i$ for player~$q_i$ is equal to~$\si$.

For every job $j$, we introduce a job-player $\bhat{q}_j$. Furthermore, for every machine $i$, we set the value of resource $r_i$ for $\bhat{q}_j$ to $\bi$ if $p_{ij} = \bi$ and to $0$ otherwise.
For a small resource~$r_i^\ell$ we set the value for $\bhat{q}_j$ to $\bi$ if $p_{ij} = \si$ and to $0$ otherwise.

Note that in  $I'$, every machine-player $q_i$ has only values in $\{0, \si, \bi \}$, and every job-player $\bhat{q}_j$ has only values in $\{0, \bi\}$. Thus, $I'$ is a two-value \santa instance. Further, for every machine the number of introduced resources is at most the total number of jobs in $I$, asserting that $I'$ is of polynomial size.
An illustration of this construction is depicted in \Cref{fig:2-makespan-to-2-santa-construction}.

\begin{figure}
    \begin{subfigure}[b]{0.4\textwidth}
    \begin{center}
        \begin{tikzpicture}[xscale=1.2]
            \node (c1) at (-1.5,5) {\machine};
            \node (c2) at (0.5,5) {\machine};
            \node at (1.2,5) {$i$};
            
            \node (g1) at (-2,3) {\job};
            \node (g2) at (-1,3) {\job};
            \node (g3) at (0,3) {\job};
            \node (g4) at (1,3) {\job};
            \node at (1.5,3) {$j$};

            \draw[black]
                (c1) -- node[left] {$b$} (g1)
                (c1) -- node[right, pos=0.2] {$\si$} (g2)
                (c2) -- node[right, pos=0.8] {$\bi$} (g2)
                (c2) -- node[right, pos=0.8] {$\si$} (g3)
                (c2) -- node[right, pos=0.2] {$\si$} (g4)
                (c2) -- node[above,pos=0.2] {$b$} (g1);
        \end{tikzpicture}
    \end{center}
    \caption{Two-value \makespan instance $I$.}
    \end{subfigure}
    \begin{subfigure}[b]{0.6\textwidth}
    \begin{center}
    \begin{tikzpicture}
    \node (c1) at (0,7.5) {\child};
    \node (g1) at (-0.8,6) {\gift};
    \node at (-1.5,6) {$r_{i'}$};
    \node (c1g1) at (0.4,6) {\gift};
    \node (c1g2) at (1.2,6) {\gift};
    \draw[black] (c1) --  node[left] {$\bi$} (g1);
    \draw[black] (c1) -- (c1g1);
    \draw[black] (c1) -- node[right] {$\si$}  (c1g2);

    \node (c2) at (4.5,7.5) {\child};
    \node at (5.2,7.5) {$q_i$};
    \node (g2) at (3,6) {\gift};
    \node (c2g1) at (4.4,6) {\gift};
    \node (c2g2) at (5.2,6) {\gift};
    \node (c2g3) at (6,6) {\gift};
    \node at (6.6,6) {$r_i^\ell$};
    \draw[black] (c2) -- node[left] {$\bi$} (g2);
    \draw[black] (c2) -- (c2g1);
    \draw[black] (c2) -- (c2g2);
    \draw[black] (c2) -- node[right] {$\si$} (c2g3);

    \node (cj1) at (-0.5,4) {\child};
    \node (cj2) at (1.5,4) {\child};
    \node (cj3) at (3.5,4) {\child};
    \node (cj4) at (5.5,4) {\child};
    \node at (6.3,4) {$\bhat{q}_j$};

    \draw[black] (cj1) -- (g1);
    \draw[black] (cj1) -- (g2);
    \draw[black] (cj2) -- (g2);
    \draw[black] (cj2) -- (c1g1);
    \draw[black] (cj2) -- (c1g2);
    \draw[black] (cj3) -- (c2g1);
    \draw[black] (cj3) -- (c2g2);
    \draw[black] (cj3) -- (c2g3);
    \draw[black] (cj4) -- (c2g1);
    \draw[black] (cj4) -- (c2g2);
    \draw[black] (cj4) -- node[right] {$\bi$}(c2g3);

    \end{tikzpicture}
    \end{center}
    \caption{Two-value \santa instance $I'$.}
    \end{subfigure}
    \caption{The construction used in \Cref{thm:unrelated-twovalue-makespan-to-santa}.}
    \label{fig:2-makespan-to-2-santa-construction}
\end{figure}
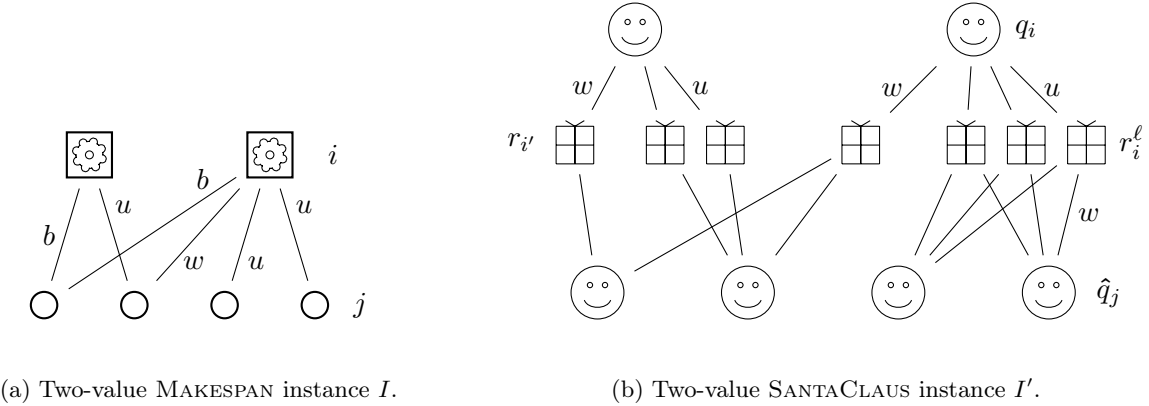

Let $t = \bi + k \cdot \si - 1$ and note that $t \le 1$ holds by construction. Also, observe that  $$t= \bi + k \cdot \si - 1 \le \bi + k\cdot \si - k \cdot \si = \bi,$$
which implies $\bi \ge t$. We first prove the following lemma.

\begin{lemma}
$\opt(I') \ge t$.
\end{lemma}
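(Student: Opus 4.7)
The plan is to produce the solution for $I'$ by directly mirroring the structure of an optimal schedule for $I$. Fix an assignment of $I$ with makespan at most $1$. For each machine $i$, let $B_i$ denote the (at most one) big job assigned to $i$ and let $S_i$ denote the set of small jobs assigned to $i$. Note that $|S_i| \le k$, since $|S_i| \cdot \si \le 1$ and $k = \min\{\lfloor 1/\si\rfloor, n\}$. Moreover, at most one big job can fit on a machine because $\bi > \opt(I)/2 \ge 1/2$ would force $2\bi > 1$ to exceed the makespan bound.

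Given this, I construct an allocation for $I'$ as follows. If a big job $B_i$ exists, I assign $r_i$ to the job-player $\bhat{q}_{B_i}$, otherwise I assign $r_i$ to the machine-player $q_i$. For each small job $j \in S_i$, I assign one of the small resources $r_i^\ell$ (all distinct, possible because $|S_i| \le k$) to the job-player $\bhat{q}_j$. Finally, every remaining small resource $r_i^\ell$ is given to the machine-player $q_i$.

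It remains to verify that every player receives value at least $t = \bi + k\si - 1$. Each job-player $\bhat{q}_j$ receives exactly one resource: if $j$ is a big job on $i$, then $r_i$ has value $\bi$ for $\bhat{q}_j$; if $j$ is a small job on $i$, then the resource $r_i^\ell$ assigned to it has value $\bi$ for $\bhat{q}_j$ by construction, because $p_{ij} = \si$. In both cases $\bhat{q}_j$ receives value $\bi \ge t$. For a machine-player $q_i$, distinguish two cases. If a big job $B_i$ exists, then $q_i$ receives only the $k - |S_i|$ leftover small resources, for a total value of $(k-|S_i|)\si$; the schedule constraint $\bi + |S_i|\si \le 1$ gives $|S_i|\si \le 1 - \bi$, hence $(k-|S_i|)\si \ge k\si + \bi - 1 = t$. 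If no big job exists, then $q_i$ also keeps $r_i$ and receives $\bi + (k-|S_i|)\si$, and $|S_i|\si \le 1$ immediately yields $\bi + (k-|S_i|)\si \ge \bi + k\si - 1 = t$.

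There is no real obstacle here beyond bookkeeping: once the $\si,\bi$ regime is set up (in particular the observations $|S_i| \le k$ and at most one big job per machine), the assignment is forced by the construction and the bound $t \le 1$ follows directly from the makespan constraint on each machine. The main thing to keep track of is that the value of a small resource $r_i^\ell$ for a job-player matches exactly the case where that job is small on machine~$i$, which is what makes the job-player side of the analysis trivial.
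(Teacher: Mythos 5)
Your proof is correct and mirrors the paper's argument exactly: both construct the $I'$ allocation from an optimal schedule for $I$ and use the makespan bound to show each machine-player loses value at most $1$ (the paper does this uniformly via the load of machine $i$, you via a case split on whether a big job is present, which amounts to the same computation). One small slip worth fixing: the inequality $\opt(I)/2 \ge 1/2$ is backwards (we only know $\opt(I) \le 1$); the correct justification that at most one big job sits on a machine is that two big jobs would give load $2\bi > \opt(I)$, which exceeds the makespan of the \emph{optimal} schedule you should fix (rather than an arbitrary schedule of makespan at most $1$) --- this is exactly how the paper argues it in the construction paragraph preceding the lemma.
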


\begin{proof}
Fix an optimal solution of $I$ and recall that we assume $\opt(I) \leq 1$. In the following we construct a solution for $I'$.

Fix a machine $i$ of instance $I$. If the given solution assigns a job $j$ of size $\bi$ to $i$, we assign resource~$r_i$ to job-player $\bhat{q}_j$. If the given solution assigns a job $j$ of size $\si$ to machine $i$, we assign an arbitrary unassigned small resource $r_i^{\ell}$ to job-player $\bhat{q}_j$. All yet unassigned resources are assigned to their corresponding machine-players.

We continue by separately proving that each player in instance $I'$ receives a value of at least $t$, which implies the lemma.

First, consider the job-players. Since every job $j$ is assigned to exactly one machine in $I$, the job-player $\bhat{q}_j$ receives exactly one resource in the constructed solution for $I'$. These resources have value $\bi$ for those players, giving them a sufficiently large value of $\bi \ge t$.

Next, we consider the machine-players. Fix a machine $i$. By our assumption that~$\opt(I) \leq 1$, every machine $i$ receives jobs of total size at most $1$ in the solution for instance $I$. For our constructed solution to instance $I'$, this means that the subset $N_i$ of resources $r_i, r_i^1,\ldots, r_i^{k}$ that are \emph{not} assigned to machine-player $q_i$ satisfies $v_i(N_i) = \sum_{r \in N_i} v_{r,q_i} \le 1$. This implies that the value assigned to machine-player $i$ is a least
\[
\bi + k \cdot \si - v_i(N_i) \ge \bi + k \cdot \si - 1 = t,
\]
which implies $\opt(I') \ge t$.
\end{proof}

\begin{lemma}\label{lemma:makespan-to-santa-approx}
For any $\alpha \geq 2$, given an $\alpha$-approximate solution for $I'$, we can construct in polynomial time a solution for $I$ where every machine has a makespan of at most $2 - 1 / \alpha$.
\end{lemma}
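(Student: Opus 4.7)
The plan is to read the makespan assignment directly off the approximate Santa Claus solution for $I'$. For each job $j$ in $I$, the corresponding job-player $\bhat{q}_j$ has only non-zero values equal to $\bi$, and by the approximation guarantee it received total value at least $\opt(I')/\alpha \ge t/\alpha$. Since $t = \bi + k\si - 1 \le \bi$ (because $k\si \le \lfloor 1/\si \rfloor \cdot \si \le 1$), we have $\bi \ge t/\alpha$, so $\bhat{q}_j$ must have been allocated at least one resource $r(j)$ of positive value. This resource belongs to a unique machine $i^*(j)$, and by construction of $I'$ the inequality $v_{r(j), \bhat{q}_j} > 0$ forces $p_{i^*(j), j} \in \{\si,\bi\}$ with $p_{i^*(j), j} = v_{r(j), q_{i^*(j)}}$. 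I would then assign $j$ to $i^*(j)$ in the makespan solution.

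To bound the load of an arbitrary machine $i$, I would observe that $\{r(j) : i^*(j) = i\}$ is a set of \emph{distinct} resources of machine $i$ (each resource is allocated to at most one player in $I'$), and none of them is assigned to the machine-player $q_i$. Hence
\begin{equation*}
\mathrm{load}(i) \;=\; \sum_{j:\, i^*(j)=i} p_{ij} \;=\; \sum_{j:\, i^*(j)=i} v_{r(j), q_i} \;\le\; \bi + k\si - v(q_i),
\end{equation*}
where $v(q_i)$ denotes the value that $q_i$ received in the Santa Claus solution. Combining the approximation guarantee $v(q_i) \ge t/\alpha$, the identity $\bi + k\si = 1 + t$, and $t \le 1$ (which holds because $\bi \le \opt(I) \le 1$ and $k\si \le 1$) yields
\begin{equation*}
\mathrm{load}(i) \;\le\; (1+t) - t/\alpha \;=\; 1 + t\bigl(1 - 1/\alpha\bigr) \;\le\; 2 - 1/\alpha,
\end{equation*}
as required.

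The main subtlety will be to commit to exactly \emph{one} resource $r(j)$ per job-player: a single $\bhat{q}_j$ may be allocated several positive-value resources coming from distinct machines, and if we were to interpret all of them as assignments then $p_{ij}$ would be charged to multiple machines, breaking the load bound. Picking a unique $r(j)$ sidesteps this and turns the lower bound on each machine-player's share into the desired upper bound on the total load of its machine, with the clean arithmetic relying on the two inequalities $t \le 1$ and $\bi \ge t$ established from the choice of $k$ and the assumption $\opt(I) \le 1$.
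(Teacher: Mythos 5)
Your proposal is correct and follows essentially the same route as the paper: select one positive-value resource per job-player to define the job assignment, then bound each machine's load by the total value $\bi + k\si$ of its resources minus the value $\ge t/\alpha$ retained by its machine-player, and finish with $t \le 1$. The only cosmetic difference is that the paper explicitly reassigns a job-player's surplus resources back to machine-players before reading off the schedule, whereas you simply ignore them via an inequality.
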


\begin{proof}
Consider an $\alpha$-approximate solution for instance $I'$. Such a solution must assign resources of value at least $\opt(I') / \alpha$ to each player. By the previous lemma, $\opt(I') / \alpha \ge t / \alpha$.

Clearly, in the given solution for $I'$ every job-player $\bhat{q}_j$ receives at least one resource, as otherwise the objective value would be zero. We modify the given solution in a way such that each job-player receives exactly one resource. 
If a job-player receives more than one resource, we select an arbitrary resource and reassign all other resources to their corresponding machine-players. 
Now, we can construct a solution for $I'$ as follows. If a resource belonging to machine $i$ has been assigned to a job-player $\bhat{q}_j$, we assign job $j$ to machine $i$. By the above assumption this assignment is well-defined.

It remains to argue about the load of every machine in $I$. Thus, fix a machine $i$. In the solution to instance $I'$, the corresponding machine-player receives resources of value at least $t / \alpha$.
This means that the subset $N_i$ of resources $r_i, r_i^1,\ldots, r_i^{k}$ that are \emph{not} assigned to machine-player $q_i$ satisfies 
\[
v_i(N_i) \le \bi + k \cdot \si - \frac{t}{\alpha} =  1+t - \frac{t}{\alpha}.
\]
Since by construction $t \le 1$, we have $t-t/\alpha \le 1-1/\alpha$, which implies $v_i(N_i) \le 2-1/\alpha$. We conclude the proof by observing that, by construction, the makespan of machine $i$ is exactly equal to $v_i(N_i) \le 2-1/\alpha$.

\end{proof}

A visualization of the argument used in \Cref{lemma:makespan-to-santa-approx} is given in \Cref{fig:2-makespan-to-2-santa-approx}.

\begin{figure}
    \begin{subfigure}[b]{0.42\textwidth}
    \begin{center}
        \begin{tikzpicture}[xscale=1.2]
            \node (c1) at (-1.5,5) {\machine};
            \node (c2) at (0.5,5) {\machine};
            \node (g1) at (-2,3) {\job};
            \node (g2) at (-1,3) {\job};
            \node (g3) at (0,3) {\job};
            \node (g4) at (1,3) {\job};

             \draw[black]   (c1) --  (g1);
             \draw[black]   (c2) --  (g2);
             \draw[asg,col2]   (g2) -- (c1);
             \draw[asg,col1]   (g1) -- (c2);
             \draw[asg,col1]  (g3) -- (c2);
             \draw[asg,col1]  (g4) -- (c2);

             \draw [thick,col1, dotted] (-0.3,2.5) rectangle (1.3,3.5);
            \node[col1] at (2.1,4) {\small $\leq (1+t) - \frac{t}{\alpha} - \bi$};
            \node[col1] at (-0.5,4.5) {\small $\bi$};
        \end{tikzpicture}
    \end{center}
    \caption{$(2-1/\alpha)$-approximation in instance $I$.}
    \end{subfigure}
    \hfill
    \begin{subfigure}[b]{0.57\textwidth}
    \begin{center}
    \begin{tikzpicture}
    \node (c1) at (0,7.5) {\child};
    \node (g1) at (-0.8,6) {\gift};
    \node (c1g1) at (0.4,6) {\gift};
    \node (c1g2) at (1.2,6) {\gift};
    \draw[asg] (g1) -- (c1);
    \draw[asg]  (c1g1) -- (c1);
    \draw[black] (c1) --  (c1g2);

    \node (c2) at (4.5,7.5) {\child};
    \node (g2) at (3,6) {\gift};
    \node (c2g1) at (4.4,6) {\gift};
    \node (c2g2) at (5.2,6) {\gift};
    \node (c2g3) at (6,6) {\gift};
    \draw[black] (c2) -- (g2);
    \draw[black] (c2) -- (c2g1);
    \draw[black] (c2) --  (c2g2);
    \draw[asg] (c2g3) --  (c2);

    \node (cj1) at (0.5,4) {\child};
    \node (cj2) at (2.5,4) {\child};
    \node (cj3) at (4.5,4) {\child};
    \node (cj4) at (6.5,4) {\child};

    \draw[black] (cj1) -- (g1);
    \draw[black] (cj2) -- (g2);
    \draw[black] (cj2) -- (c1g1);
    \draw[black] (c2g2) -- (cj3);
    \draw[black] (c2g1) -- (cj4);
    \draw[black] (c2g3) -- (cj3);
    \draw[black] (c2g3) -- (cj4);
    \draw[asg, col2] (c1g2) -- (cj2);
    \draw[asg,col1] (c2g1) -- node[left] {\small $\bi$}  (cj3);
    \draw[asg,col1] (c2g2) -- (cj4);
    \draw[asg, col1] (g2) --  (cj1);

    \draw [thick,col1, dotted] (4,5.5) rectangle (5.5,6.5);
    \draw [thick,red, dotted] (5.6,5.5) rectangle (6.4,6.5);
    \node[col1] at (2.65,7.2) {\small $\leq  (1 + t) - \frac{t}{\alpha} - \bi$};
    \draw[dotted,col1] (3.5,7) -- (4,6.5);
    \node[red] at (7,6.5) {\small $\geq \frac{t}{\alpha}$};

    \end{tikzpicture}
    \end{center}
    \caption{$\alpha$-approximation in instance $I'$.}
    \end{subfigure}
    \caption{Visualization of the argument for translating approximate solutions used in \Cref{lemma:makespan-to-santa-approx}.}
    \label{fig:2-makespan-to-2-santa-approx}
\end{figure}
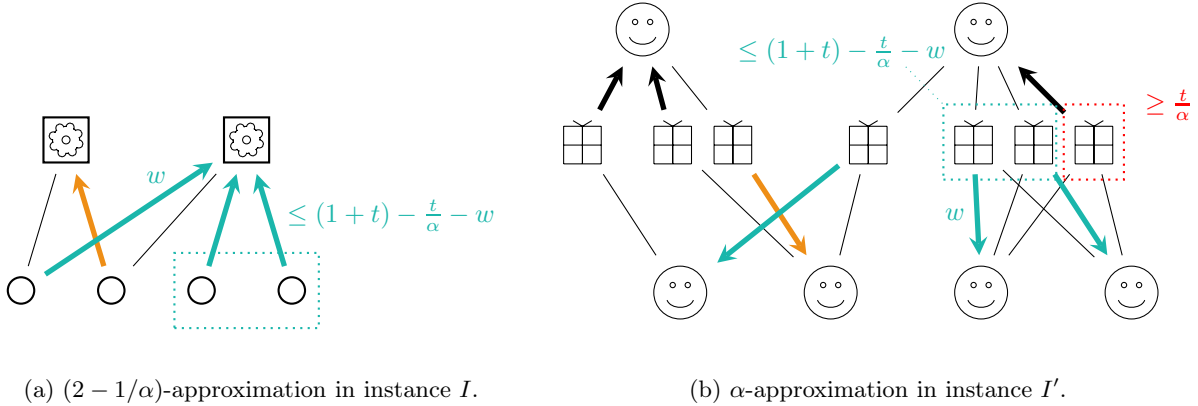

\subsection{Santa Claus to makespan}

\begin{lemma}\label{theorem:unrelated-twovalue-santa-to-makespan}
    Let $I$ be an instance of the two-value \santa problem with $\opt(I) \geq 1$. For any $\alpha \geq 2$, 
    we can construct in polynomial time an instance $I'$ of two-value \makespan such that, given an $(2-1/\alpha)$-approximate solution for $I'$, we can compute in polynomial time a solution for $I$ with an objective value of at least $1 / \alpha$.
\end{lemma}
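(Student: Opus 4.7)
The plan is to follow the two-step approach of \Cref{thm:santa-to-makespan} (first to configuration \santa with polynomially many configurations, then to \makespan) adapted to the two-value setting. The overall goal is twofold: to avoid the $(1+\epsilon)$-loss that comes from configuration rounding in \Cref{lemma:configuration-reduction}, and to ensure that the resulting \makespan instance uses only two finite job sizes.

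After standard scaling we may assume that $\opt(I)\ge 1$ and values lie in $\{0,\si,\bi\}$ with $\si\le\bi$. Because there are only two value types, each configuration $c$ is uniquely determined by a pair $(n_{\si},n_{\bi})\in\{0,\dots,n\}^2$ of multiplicities, so the collection $\cC$ of all configurations with $|c|\ge 1$ is naturally polynomial in size. No rounding is required, and hence $\optc(I)=\opt(I)\ge 1$, in contrast to the general case handled in \Cref{lemma:configuration-reduction}; the saving of the $(1+\epsilon)$ factor is exactly what allows us to land at the tight ratio $\alpha$ rather than $\alpha+\epsilon$.

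Next, we apply the reduction of \Cref{thm:reduction-polyconfig-santa-to-makespan}, with a modification to stay within two finite sizes. In that reduction, the only sizes outside $\{\si,\bi,\infty\}$ are the fractional values $v/|c|$ of configuration-jobs on the configuration-machines; these reduce to $\si$ or $\bi$ exactly when $|c|=1$. We therefore preprocess the \santa instance so that every configuration in the collection has total value equal to a common target $T=1$. A natural approach is to first rescale so that $\bi=1$, and then, for each player $i$ and each configuration $c$ that would not already satisfy $|c|=T$, to append auxiliary single-player resources (accessible only to player~$i$) with values in $\{\si,\bi\}$ that pad $|c|$ up to exactly $T$; such auxiliary resources can only help the player they are assigned to and hence do not increase $\opt(I)$ beyond $1$. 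The two-value property of the \santa instance is preserved, and afterwards \Cref{thm:reduction-polyconfig-santa-to-makespan} produces an instance $I'$ whose finite job sizes all lie in $\{\si,\bi\}$.

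The correctness and approximation analysis then follow \Cref{thm:reduction-polyconfig-santa-to-makespan} and \Cref{lemma:sc-to-makespan-approx} essentially verbatim: $\opt(I')\le 1$, and from a $(2-1/\alpha)$-approximate \makespan solution for $I'$ we reconstruct a \santa solution for $I$ of value at least $1/\alpha$. The hardest part is the preprocessing that normalizes $|c|$ to the uniform value $T$. The delicate point is that $\si$ and $\bi$ may be arbitrary incommensurable reals, so padding each configuration's value exactly up to $T$ using only $\{\si,\bi\}$-valued auxiliary resources is non-trivial and may require a case distinction based on the ratio $\bi/\si$. A possible workaround is to allow $T$ to range over a small polynomial-size set of values achievable by every player simultaneously, running the construction once per value of $T$ and combining the outputs; either way, guaranteeing that no approximation factor is lost in this normalization step is the crux of the argument.
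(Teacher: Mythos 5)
The central difficulty you yourself flag --- normalizing every configuration's total value to a common target $T$ using only $\{\si,\bi\}$-valued padding resources --- is a genuine gap, not a technicality, and your proposal does not close it. For incommensurable $\si,\bi$ no exact integer combination reaching $T$ need exist, and the ``run the construction for polynomially many values of $T$'' workaround is not specified in a way that could work (different players' configurations need a \emph{common} $T$, and it is unclear why any such $T$ would be exactly reachable for all of them). Worse, even where exact padding is possible, the auxiliary resources break the back-translation: \Cref{lemma:sc-to-makespan-approx} only guarantees each player value at least $1/\alpha$ in the \emph{padded} instance, and that value may be carried entirely by the phantom padding resources, which contribute nothing in $I$. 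So the argument is incomplete precisely at the step you call the crux.

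The paper sidesteps all of this with a three-way case distinction that makes general configurations unnecessary. If $\bi < \opt(I)/\alpha$, the additive guarantee of Bezakova--Dani already gives every player $\opt(I)-\bi \ge \opt(I)/\alpha$ and no reduction is needed. If $\bi \ge \opt(I)/\alpha$ and some optimal solution gives every player a big resource, bipartite matching suffices. Only in the remaining case is \Cref{thm:reduction-polyconfig-santa-to-makespan} invoked, and there each player needs only \emph{two} configurations: one big resource, or $b=\ceil{1/\si}$ small ones. Instead of padding, the paper \emph{modifies the values} to $\bi'=1$ and $\si'=1/b$, so both configurations have total value exactly $1$; the finite job sizes in the constructed \makespan instance are then exactly $\{1,1/b\}$, and the inequalities $\si\ge\si'$ and $\bi\ge\opt(I)/\alpha$ let the reconstructed solution transfer back to $I$ with value at least $1/\alpha$. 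Your observation that the two-value case has polynomially many configurations and needs no rounding is correct, but enumerating all of them is what forces you into the unresolvable normalization problem; the missing idea is the case analysis that reduces to two unit-value configurations per player.
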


\begin{proof}
Let $I$ be an instance of the two-value \santa problem with $\opt(I) \geq 1$ and $v_{ij} \in \{0,\si,\bi\}$. We assume w.l.o.g. that $\si \leq \bi$. We consider three exhaustive cases.

In the first case we assume that $\bi < \opt(I) / \alpha$. Then, we can use the algorithm of Bezakova and Dani~\cite{bezakova2005allocating} to compute in polynomial time a solution in which every player receives a total value of at least $\opt(I) - v_{\max} = \opt(I) - \bi > (1 - 1/\alpha) \opt(I) \geq \opt(I) / \alpha$, using $\alpha \geq 2$.

In the second case we assume that $\bi \geq \opt(I) / \alpha$ and that there is an optimal solution for $I$ in which every player $i$ receives a resource $j$ of value $v_{ij} = \bi$. Then, we can essentially set $\si$ to $0$ and compute a solution where every player receives a resource of value $\bi$ by solving a bipartite matching problem.  Since every player receives a value of at least $\bi \geq \opt(I) / \alpha \geq 1 / \alpha$, we are done. 

In the last case we assume that $\bi \geq \opt(I) / \alpha$ and that in every optimal solution for $I$ there is some player $i$ which does not receive a resource $j$ of value $v_{ij} = \bi$. We can conclude that such a player must receive at least $b = \ceil{1/\si}$ many resources $j'$ for which she has a value of $v_{ij'} = \si$, because $\opt(I) \geq 1$. 
Then, we construct a new instance $I'$ by copying $I$ and adjusting the resource values to $\bi' = 1$ and $\si' = 1 / b$. 
Observe that $\opt(I') \geq 1$ and that any solution for $I'$ in which a player either receives a resource of value $1$ or $b$ resources of value $1/ b$ gives an objective value of at least 1.
We can therefore define a collection of configurations $\cC$ in which every player has these two configurations. Then, we can use \Cref{thm:reduction-polyconfig-santa-to-makespan} (by noting that in the constructed \makespan instance every job has either size $1$ or $1/b$) to compute a solution for $I'$ in which every player receives a total value of at least $1/\alpha$.
Since $\si \geq \si'$ and $\bi \geq \opt(I) / \alpha$, this means that we can use the same solution for instance $I$ and guarantee that every player receives a total value of at least $1/\alpha$.
\end{proof}

\section{Reductions for matroid allocation problems}
\label{apx:restr-matroid}

\lemMatroidReductionSM*

\begin{proof}
Let $I$ be an instance of the restricted \matroidsanta problem with two resources with associated polymatroids $\PM_1,\PM_2$ and values $v_1,v_2 \geq 0$ such that $\opt(I) \geq 1$.
By \Cref{lemma:two-value-santa-to-general-matroid} we can assume that we are given a simplified case. For convenience,
we slightly reformulate it as follows:
given $v_1 = 1$ (instead of $\infty$), $v_2 = 1/b \leq v_1$ (instead of $1$) for $b\in\mathbb N$, and $\opt(I) \ge 1$, find a solution of value at least $1 / \alpha$.

Consider the polymatroids $\PM'_1 = \{x \in \PM_1 : x(e) \leq 1 \; \forall e \in E \}$ and $\PM'_2 = \{x \in \PM_2 : x(e) \leq b \; \forall e \in E \}$. 
Let $\dPM_1$ be the dual polymatroid of $\PM'_1$ with respect to the vector $1 \cdot E$, and let $\dPM_2$ be the dual polymatroid of $\PM'_2$ with respect to the vector $b \cdot E$.
We compose an instance $I'$ of \matroidmakespan using machines $E$, one job of size $p_1 = 1$  with polymatroid $\dPM_1$ and one job of size $p_2 = 1/b$ with polymatroid $\dPM_2$.

We now show that $\opt(I') \leq 1$. Fix an optimal solution for $I$ which selects bases $x_1 \in \cB(\PM_1)$ and $x_2 \in \cB(\PM_2)$. Since $\opt(I) \geq 1$, we have $v_1 \cdot x_1(e) + v_2 \cdot x_2(e) \geq 1$ for every $e \in E$. 
Further observe that our choice of $b$ implies $x_1(e) + (1/b) \cdot x_2(e) \geq 1$.
Our goal is to dualize bases of  $\PM'_1$ and $\PM'_2$ to obtain bases of  $\dPM_1$ and $\dPM_2$, which are feasible for $I'$. To this end, we first construct vectors $x'_1 \in \PM'_1$ and $x'_2 \in \PM'_2$ such that $x'_1(e) + (1/b) \cdot x'_2(e) \geq 1$ for all $e \in E$. This can be done by restricting $x_1$ to values of at most 1 and $x_2$ to values of at most $b$, and then selecting any bases which dominate these intermediate vectors. 
Now, we define $\overline{x}_1(e) = 1 - x'_1(e)$ and $\overline{x}_2(e) = b - x'_2(e)$ for all $e \in E$.
By the construction of $\dPM_1$ and $\dPM_2$, this solution is feasible for $I'$, i.e., $\overline{x}_j \in\cB(\dPM_j)$ for $j \in \{1,2\}$. We further have for every machine $e \in E$
\[
    \overline{x}_1(e) + \frac{1}{b} \cdot \overline{x}_2(e) = 2 -  \left(x'_1(e) + \frac{1}{b}  \cdot x'_2(e)\right) \leq 1,
\]
showing that $\opt(I') \leq 1$.

We finally prove the stated bound on the objective value of an approximate solution. Fix a $(2 - 1 / \alpha)$-approximate solution for $I'$ which selects bases $\overline{y}_1 \in \cB(\dPM_1)$ and $\overline{y}_2 \in \cB(\dPM_2)$. 
We construct an approximate solution for $I$ by defining $y'_1(e) = 1 - \overline{y}_1(e)$ and $y'_2(e) = b - \overline{y}_2(e)$ for every $e \in E$, meaning that $y'_j \in \cB(\PM'_j)$, and then choose an arbitrary basis $y_j \in \cB(\PM_j)$ which dominates $y'_j$, for $j \in \{1,2\}$.
We have for every player $e \in E$
\[
y_1(e) + \frac{1}{b} \cdot y_2(e) \geq  y'_1(e) + \frac{1}{b} \cdot y'_2(e)  = 2 - \left( \overline{y}_1(e) + \frac{1}{b} \cdot \overline{y}_2(e) \right) \geq 2 - \left( 2 - \frac{1}{\alpha} \right)  \cdot  \opt(I') \geq \frac{1}{\alpha}.
\]

Since $v_2 = 1/b$ and $v_1 = 1$,
we conclude $v_1 \cdot y_1(e) + v_2 \cdot y_2(e) \geq 1 / \alpha$ for every $e \in E$, which implies that the value of the constructed solution $y_1$ and $y_2$ is at least~$1 / \alpha$.
\end{proof}

\section{Reduction from restricted matroid Santa Claus 
to two-value case}
\label{apx:matroid-2}
Before proving the main result in this section, we start this section with a rounding theorem which can be obtained by a variant of standard arguments in the scheduling literature (see \cite{DBLP:journals/mp/ShmoysT93}). In this section, we will often refer to a standard relaxation of the problem, called the \textit{assignment LP}. In this LP, we have one variable $x_j(i)$ for each pair job/resource $j$ and player/machine $i$. In the restricted job-matroid \makespan problem, the relaxation can be written as follows.
\begin{align*}
    \sum_{j\in J} x_j(i) \cdot p_j \le T   &\ \ \ \ \forall i\in M\\
     x_j\in \mathcal B(\mathcal P_j)   &\ \ \ \ \forall j\in J\\
    x\ge 0\ , &
\end{align*}
where $T\ge 0$ is some given makespan bound. Similarly in the restricted resource-matroid \santa problem, the assignment LP becomes the following.
\begin{align*}
    \sum_{j\in R} x_j(i) \cdot v_j \ge T   &\ \ \ \ \forall i\in P\\
     x_j\in \mathcal B(\mathcal P_j)   &\ \ \ \ \forall j\in R\\
    x\ge 0\ . &
\end{align*}

\begin{theorem}
    \label{thm:rounding}
    Given a fractional assignment $x$ of resources to players which is feasible for the assignment LP (with parameter $T$) of some an instance $I$ of the restricted \matroidsanta problem, we can obtain, in polynomial time, a feasible integral assignment of resources to players of value at least $T-\max_{j\in R} v_j$.
\end{theorem}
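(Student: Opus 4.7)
My approach is to follow a Shmoys--Tardos style rounding argument, extended from the usual bipartite assignment setting to polymatroid base constraints. First I would replace the given fractional $x$ by an extreme-point solution $x^{*}$ of the assignment LP still satisfying $\sum_{j \in R} v_{j}\, x^{*}_{j}(i) \geq T$ for every $i \in P$; this can be done by standard LP pivoting. The goal is to round $x^{*}$ into integer bases $\bar{x}_{j} \in \cB(\PM_{j})$ such that every player's value decreases by at most $v_{\max} := \max_{j \in R} v_{j}$.

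At the vertex $x^{*}$, for each $j \in R$ the tight polymatroid constraints $x^{*}_{j}(S) = f_{j}(S)$ form a laminar family $\mathcal{L}_{j}$ on $P$, a classical fact about vertices of polymatroid base polytopes. Together with the tight non-negativity constraints $x^{*}_{j}(i)=0$, the family $\mathcal{L}_{j}$ partitions the positive support of $x^{*}_{j}$ into \emph{atoms}, within each of which the entries of $x^{*}_{j}$ must sum to a fixed integer (prescribed by $f_{j}$ and $\mathcal{L}_{j}$) while individual entries can still be fractional. A degrees-of-freedom count at $x^{*}$, matching free fractional variables against linearly independent tight constraints, then bounds the total number of fractional entries by the number of tight player constraints plus a laminar correction.

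I would then construct a bipartite graph $G$ whose left side is the set of pairs $(j,A)$ for which atom $A$ contains a fractional entry of $x^{*}_{j}$, and whose right side is the set of players $i$ appearing in such atoms, with an edge $\{(j,A),i\}$ whenever $i \in A$ and $x^{*}_{j}(i) \notin \mathbb{Z}$. The counting step implies that $G$ is a pseudoforest, i.e.\ each connected component contains at most one cycle. On each component I would run a matching-style rounding that integrally reallocates the fractional entries of each atom, preserving its integer total (hence $x_{j} \in \cB(\PM_{j})$) while guaranteeing that each player is the ``fractional partner'' of at most one atom. Since such an atom contributes at most one missing unit of one resource to a player, every player loses at most $v_{\max}$ of total value, yielding the claimed $T - v_{\max}$ bound.

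The main obstacle is the structural step: verifying the pseudoforest property of $G$ and carrying out the atom-respecting rounding without violating any polymatroid base constraint. In the classical Shmoys--Tardos argument this reduces to K\"onig's theorem on a bipartite forest; here the laminar structure has to be threaded through, which can be handled by treating each atom as a single slot of known integer capacity and then applying Hall's theorem atom by atom along each tree, with the (at most one) odd cycle in a pseudoforest component handled by a standard parity argument. The vertex solution itself can be computed in polynomial time by the ellipsoid method, with submodular function minimization as a separation oracle for the polymatroid constraints, and the rounding step is purely combinatorial, giving a polynomial-time algorithm overall.
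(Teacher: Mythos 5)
Your overall strategy (extreme point, tight-constraint counting, local rounding) is genuinely different from the paper's proof, which never takes a vertex solution: the paper sorts resources by decreasing value, builds one integer-capacity ``slot'' per player--resource pair via a floor-and-carry-the-remainder process, lets each resource feed its own slot or the next (cheaper) one, and then invokes integrality of the polymatroid \emph{intersection} polytope to get the integral assignment in one shot, with the per-player loss telescoping to $\sum_j (v_j - v_{j+1}) R_j \le v_{\max}$. That route gets feasibility in $\cB(\PM_j)$ for free from the intersection theorem, which is precisely where your argument has a gap.

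The gap is in the atom-level rounding. You assert that redistributing the fractional entries of an atom $A$ of resource $j$ while ``preserving its integer total'' yields $x_j \in \cB(\PM_j)$, and that the only remaining task is to choose, via Hall's theorem, which players round down. But preserving the sums on all tight sets only keeps you in the affine hull of the minimal face containing $x^*_j$; it does not prevent you from violating a \emph{non-tight} constraint $x_j(S) \le f_j(S)$ with $S$ strictly inside an atom. Concretely, take an atom $A=\{1,\dots,5\}$ whose contracted function is $g(Y)=\min(|Y|,2)$ except $g(\{1,2\})=1$, and $x^*_j|_A=(0.4,\dots,0.4)$: no proper subset is tight, the atom total is $2$, exactly two coordinates must round up, yet rounding up coordinates $1$ and $2$ gives $x_j(\{1,2\})=2>g(\{1,2\})$. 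So the admissible ``up-sets'' of an atom are not all subsets of the prescribed cardinality $t_A$ --- they are the bases of a matroid determined by $\PM_j$ restricted to the unit box around $x^*_j$. Your global selection problem is therefore not a bipartite matching/Hall problem but a simultaneous choice of bases of these matroids whose complements hit each player at most once, i.e.\ a Rado/matroid-union type statement that still has to be proved (and is, morally, what the paper's polymatroid intersection step accomplishes). Until that is supplied, the claim ``$x_j \in \cB(\PM_j)$'' and the bound of one lost unit per player do not follow.
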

\begin{proof}
    We assume w.l.o.g.\ that $v_1\ge v_2\ge \cdots \ge v_n$ and that $v_n=0$. We denote the polymatroids associated to the resources as $\mathcal P_1,\mathcal P_2,\dotsc,\mathcal P_n$, corresponding to the submodular functions $f_1,f_2,\ldots, f_n$. Given the fractional assignment $x$, we will create a feasible fractional solution $x'$ to a certain polymatroid intersection problem. We define the two polymatroids using a bipartite graph as follows. On the left-hand side, we have a set of vertices $W$ with one vertex $w_j$ for each resource $j$, and on the right-hand side we have a set of vertices $U$ with one vertex $u_{ij}$ for each player $i$ and resource $j$. The edge set will be denoted by $E$, and both polymatroids will have $E$ as a ground set. We set $E=\{(w_j,u_{ij})\}_{i\in P, j\in R}\cup \{(w_j,u_{i(j+1)})\}_{i\in P, j\in R\setminus \{n\}}$. For some edge $e\in E$, we denote by $e_w$ the resource corresponding to its left-hand side vertex, and by $e_u$ the player corresponding to its right-hand side endpoint.
    The first polymatroid $\PM'_1$ will be associated with the submodular function 
    \begin{equation*}
        f_1(S):=\sum_{j=1}^n f_j\left(\bigcup_{e\in S:e_w=j} e_u\right) \ .
    \end{equation*}
    The second polymatroid $\PM'_2$ will be defined using the right-hand side vertices in our graph. Each vertex $u \in U$ will have some degree constraint $d(u)$ and the submodular function $f_2$ is simply defined as
    \begin{equation*}
        f_2(S):=\sum_{e=(w,u)\in S} d(u) \ .
    \end{equation*}
    We define the degree constraints using the following process for each player $i$ in instance $I$. Let us fix one player $i$. We start with 
    \begin{equation*}
        d(u_{i1})=\left\lfloor x_1(i)\right\rfloor\ ,
    \end{equation*}
    and we define the \textit{remainder} $R_1:=x_1(i)-d(u_{i1})$ (for ease of notation we define $R_0=0$). Then we define recursively the degree constraint $d(u_{ij})$ and remainders as follows.
    \begin{align*}
        d(u_{ij}) &:=\left\lfloor R_{j-1}+x_j(i)\right\rfloor\ , \mbox{ and} \\
        R_j &:=\{R_{j-1}+x_j(i)\}= R_{j-1}+x_j(i)-\left\lfloor R_{j-1}+x_j(i)\right\rfloor = R_{j-1}+\{x_j(i)\}-\left\lfloor R_{j-1}+\{x_j(i)\}\right\rfloor\ .
    \end{align*}
    By construction, it is clear that the fractional assignment $x$ in instance $I$ can be transformed into a feasible fractional solution to our polymatroid intersection problem. To see this, fix a player $i$. Then the first resource $j=1$ can be assigned to vertex $u_{i1}$ up to an extent of $\left\lfloor x_1(i)\right\rfloor$, which is represented in our graph as taking $\left\lfloor x_1(i)\right\rfloor$ copies of the edge $(w_1,u_{i1})$. The remaining fraction of $x_1(i)$ can be assigned to player $i$ by taking the edge $(w_1,u_{i2})$ fractionally by some amount $\{x_1(i)\}$. Then we move on to job $j=2$, and we take the edge $(w_2,u_{i2})$ by the maximal amount possible until the degree constraint on vertex $u_{i2}$ becomes tight. By construction, we see that there might be some leftover of the value $x_2(i)$ which is precisely equal to the number $R_2$ in our construction. We continue this assignment until the last job $j=n$. Note that our definition of remainder $R_s$ is precisely this small leftover of $x_j(i)$ that carries over to the edge going to vertex $u_{i(j+1)}$ (note that this remainder is always less than $1$). There is one slight caveat at the end is that some small amount of the fractional assignment $x_n(i)$ might be thrown away, but as we will see, this does not hurt our purpose because we assume that $v_n=0$.

    Note that this fractional solution to our polymatroid intersection problem makes all the degree constraints on the right-hand side tight. By integrality of the polymatroid intersection polytope (see Chapters 46-47 in \cite{schrijver2003combinatorial}) there exists an integral solution which also makes all the degree constraints on the right-hand side tight (and we can find it in polynomial time by finding the maximum cardinality multiset of edges which belongs to the polymatroids intersection).
    
    It is easy to see that this integral solution corresponds to an integral assignment of resources to players in the instance $I$, in which each player $i$ receives a value of at least
    \begin{equation*}
        \sum_{j=1}^n \left\lfloor R_{j-1}+x_j(i)\right\rfloor v_j\ .
    \end{equation*}
    Let us compute the difference in objective $\Delta_i$ with the fractional solution. We have that 
    \begin{equation*}
        \Delta_i\le \sum_{j=1}^n (x_j(i)-\left\lfloor R_{j-1}+x_j(i)\right\rfloor) v_j + v_n= \sum_{j=1}^n (\{x_j(i)\}-\left\lfloor R_{j-1}+\{x_j(i)\}\right\rfloor)v_j\ .
    \end{equation*}
    Looking at each term individually, we notice that either $\left\lfloor R_{j-1}+\{x_j(i)\}\right\rfloor=0$, or that $\left\lfloor R_{j-1}+\{x_j(i)\}\right\rfloor=1$. In the first case the next remainder $R_{j}$ is equal to $R_{j-1}+\{x_j(i)\}<1$. In the second case, we have that 
    \begin{equation*}
        R_{j} = R_{j-1}+\{x_j(i)\}-\left\lfloor R_{j-1}+\{x_j(i)\}\right\rfloor = R_{j-1}+\{x_j(i)\} - 1\ .
    \end{equation*}
    Let us denote by $R'$ the set of all indices where the second case happens. Then we can write
    \begin{equation*}
        \Delta_i\le \sum_{j=1}^n \{x_j(i)\} v_j - \sum_{j\in R'} v_j\ .
    \end{equation*}

    Now note that if $j\notin R'$, then $\{x_j(i)\}=R_j-R_{j-1}$, and that if $j\in R'$ then $\{x_j(i)\}=1+R_j-R_{j-1}$. Using this observation, we obtain 
    \begin{multline*}
        \Delta_i\le \sum_{j=1}^n (R_{j}-R_{j-1}) v_j + \sum_{j\in R'}v_j- \sum_{j\in R'} v_j\le  \sum_{j=1}^n (R_{j}-R_{j-1}) v_j = \sum_{j=1}^{n-1} (v_{j}-v_{j+1})R_j \le v_1  \ ,
    \end{multline*}
    where we use the fact that $R_0=v_n=0$, and $R_j\le 1$ for all $j$.
\end{proof}

\begin{lemma}\label{lemma:two-value-santa-to-general-matroid}
    For any $\alpha \ge 2$,
    if there is a polynomial-time algorithm that, given an instance of restricted two-value \matroidsanta problem with one matroid of value $v_1 = \infty$ and one polymatroid of value $v_2 = 1$ and a number $b\in\mathbb N$, finds a solution of value at least $b$ or determines that there is no solution of value $\alpha b$,
    then there is also:
    \begin{enumerate}
        \item a polynomial-time $\alpha$-approximation algorithm for (any instance of) the restricted two-value \matroidsanta problem and
        \item a polynomial-time $2\alpha$-approximation algorithm for the restricted \matroidsanta problem.
    \end{enumerate}
\end{lemma}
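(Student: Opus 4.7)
The plan is to use the standard binary-search framework sketched at the start of Section~\ref{sec:santa-to-makespan}: for each guess $T$ of $\opt$, design a subroutine that either returns a Santa solution of value $\geq T/\alpha$ (Part~1) or $\geq T/(2\alpha)$ (Part~2), or certifies $\opt < T$.

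For Part~1, I would rescale the two-value instance so that $\si = 1$, and split by the magnitude of $\bi$. If $\bi \leq T(1 - 1/\alpha)$, I would invoke Theorem~\ref{thm:rounding} on the assignment LP: its rounding loss is at most $v_{\max}=\bi$, so the integer Santa value is at least $T - \bi \geq T/\alpha$. Otherwise $\bi > T(1-1/\alpha) \geq T/\alpha$ (using $\alpha \geq 2$), so a single big resource already meets the target; here I reduce to the hypothesis by forming the matroid $\cM$ on $P$ with $\cI = \{I \subseteq P : 1\cdot I \in \PM_\bi\}$ (the multiplicity-$1$ truncation of the polymatroid $\PM_\bi$, which is a matroid by standard polymatroid theory) and calling the hypothesis on $(\cM, \PM_\si, b)$ with $b = \lceil T/\alpha \rceil$. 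A successful pair $(I_M, y)$ extends to bases of $\PM_\bi, \PM_\si$ and gives Santa value $\geq T/\alpha$ to every player; a failure is certified as $\opt < T$ by observing that any Santa solution of value $\geq T$ would yield a structured solution of value $\geq T$ (players with a big lie in $\cI(\cM)$; the others must have $x_\si(i) \geq T$).

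For Part~2, I would partition resources by the threshold $T/(2\alpha)$ into bigs ($v_j \geq T/(2\alpha)$) and smalls, and feed the hypothesis the input $(\cM_B, \PM_S, b)$ with $b = \lceil T/(2\alpha) \rceil$. Here $\cM_B$ is the multiplicity-$1$ truncation matroid of $\sum_{j\text{ big}} \PM_j$, and $\PM_S = \sum_{j\text{ small}} v_j \PM_j$ is the weighted sum of the small-resource polymatroids (an integer polymatroid after rescaling values to integers). Hypothesis success is decomposed into an individual-resource Santa solution by polymatroid intersection in the style of Proposition~\ref{prop:matroid-makespan-reduce-jobs}: each $i \in I_M$ receives one big of value $\geq T/(2\alpha)$, and each $i \notin I_M$ receives smalls of total value $y(i) \geq T/(2\alpha)$. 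The same structural argument as in Part~1 (bigs realize $I_M$, smalls realize $y$) certifies $\opt < T$ on failure.

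The main obstacle is aligning the hypothesis's ``no-solution-of-value-$\alpha b$'' certificate with the Santa threshold $T$, i.e., ensuring $\alpha b \leq T$ so that hypothesis failure actually rules out a Santa solution of value $\geq T$. In Part~2 the factor-$2$ slack built into $b = \lceil T/(2\alpha) \rceil$ makes this automatic, since $\alpha b \leq T/2 + \alpha \leq T$ whenever $T \geq 2\alpha$ (very small $T$ is handled trivially by brute force); this slack is precisely the source of the weaker $2\alpha$-approximation. In Part~1 there is no such slack, so the binary search is done on a discretized grid (e.g., $T$ over multiples of $\alpha$), which aligns $\alpha b = T$ exactly and yields $\alpha$-approximation for integer-valued instances. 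Verifying the two polymatroid facts invoked above---that the multiplicity-$1$ truncation of a polymatroid is a matroid, and that $\sum_j v_j \PM_j$ is an integer polymatroid whose bases decompose into bases of the summands---is standard but must be stated carefully.
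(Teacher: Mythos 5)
Your Part~1 follows essentially the same route as the paper. The case $\bi \le T(1-1/\alpha)$ corresponds to the paper's LP-rounding case (the paper obtains the required fractional solution explicitly, via the combined submodular function $\si f_{\si}+\bi f_{\bi}$ and submodular minimization; you should not leave this step implicit), and the case $\bi > T(1-1/\alpha)$ is the paper's reduction to the $(\infty,1)$-instance via the rank-one truncation matroid of the big polymatroid. The ceiling/certificate-alignment issue you flag is glossed over in the paper as well and is ultimately absorbed by the $\epsilon$ in the final constants; note, though, that your ``grid of multiples of $\alpha$'' does not remove the additive one-unit loss, it only relocates it.

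The genuine gap is in Part~2. The polymatroid $\PM_S$ associated with $g(S)=\sum_{j\in L} v_j f_j(S)$ does \emph{not} have the property that its integral elements decompose as $\sum_{j\in L} v_j x_j$ with integral $x_j\in\PM_j$: take a single light resource $j$ with $v_j=2$ and $f_j$ the rank function of the rank-one uniform matroid on two players; then $(1,1)$ lies in $\PM_S$, but realizing it would require $x_j=(1/2,1/2)$. Proposition~\ref{prop:matroid-makespan-reduce-jobs} only covers the \emph{unweighted} sum of polymatroids of equal-value resources, which is why the paper instead interprets $y\in\PM_S$ as a \emph{fractional} assignment of the light resources and rounds it with Theorem~\ref{thm:rounding}, losing an additive $\max_{j\in L}v_j$, which can be as large as (just under) $T/(2\alpha)$. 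That rounding loss --- not the alignment $\alpha b\le T$ --- is the actual source of the factor $2$. Since you set $b=\lceil T/(2\alpha)\rceil$, a player outside $I_M$ is only guaranteed fractional light value about $T/(2\alpha)$, and after the unavoidable rounding loss this can drop to essentially $0$, so your argument does not deliver a $2\alpha$-approximation as written. The repair is the paper's accounting: run the hypothesis at $b\approx T/\alpha$ (failure remains conclusive, since a value-$T$ Santa solution gives every matroid-uncovered player light value at least $T=\alpha b$), and spend the factor $2$ on the rounding loss, so that matroid-covered players get one heavy of value at least $T/(2\alpha)$ and the remaining players get at least $T/\alpha-\max_{j\in L}v_j\ge T/(2\alpha)$.
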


\begin{proof}
    We start by proving the first result of the lemma. Let $\si,\bi$ (with $\bi\ge \si$) be the two sizes of the instance $I$, and let $f_\si,f_\bi$ be the associated submodular functions.

    Now, we have three possible cases. If $\opt(I)/\alpha \le \si$, then it suffices to give at least one resource to any player to obtain an $\alpha$-approximate solution. This can be checked in polynomial time (see Chapter 42 in \cite{schrijver2003combinatorial}). If $\si < \opt(I)/\alpha\le \bi$, then, in an $\alpha$-approximate solution, it suffices to give to each player either one resource of value $\bi$ or $\opt(I)/(\alpha\si)$ resources of value $\si$. We define a new instance $I_2$ with one matroid of value $v_1=\infty$ and one polymatroid of value $v_2=1$. The independent sets of the matroid are the sets of players which can be covered by at least one resource of value $\bi$ each in the original instance. The polymatroid is defined by the submodular function $f_2(S):= f_\si(S)$. Clearly, in this case, we have that $\opt(I_1)\ge \opt(I)/\si$, and a solution of value $t$ in instance $I_1$ can immediately be translated into a solution of value $\min\{\opt(I)/\alpha,t \si\}$ in the original instance. So an $\alpha$-approximation on instance $I_1$ gives us an $\alpha$-approximation on the original instance. In the last case where $\opt(I)/\alpha > \bi$, we
    first assume without loss of generality that $u, w$ are integers (by appropriate scaling).
    Then we define a polymatroid $\PM_3$ with the submodular function $f_3(S):=\si\cdot f_\si(S)+\bi\cdot f_\bi(S)$; this should be thought of splitting the resources of value $\bi$ (respectively $\si$) into $\bi$ (respectively $\si$) individual resources of value $1$ each. The biggest $b$ such that $b\cdot E\in \PM_3$ (where $E$ is the set of all players and $b\cdot E$ is the $|E|$ dimensional vector with all entries equal to $b$) can be found in polynomial time, since we simply need to minimize a submodular function (see Chapter 45 in \cite{schrijver2003combinatorial}). This gives us a fractional solution to the original instance of objective value $\opt(I)$. Using Theorem~\ref{thm:rounding}, we can round (in polynomial time) this fractional solution into an integral solution of objective value $\opt(I)-\max\{\si,\bi\}\ge \opt(I)\cdot (1-1/\alpha)\ge \opt(I)/\alpha$ (using $\alpha \ge 2$), which concludes the proof of the first point of the lemma.

    For the second point, by a standard guessing strategy (as explained in the beginning of Section $\ref{sec:santa-to-makespan}$), we can assume that we know the optimum value $\opt(I)$. We call a resource $j$ \textit{heavy} if $v_j\ge \opt(I)/(2\alpha) $, and \textit{light} otherwise (let $H$ and $L$ be the set of heavy and light resources respectively). We then define an instance $I'$ with one matroid of value $\infty$, whose independent sets are the sets of players which can be covered by at least one heavy resource each (this is a matroid by the matroid union theorem see Chapter 42 in \cite{schrijver2003combinatorial}). Again, assuming that all values $v_j$ are integers, we define one polymatroid of value $1$ associated to the submodular function $f'(S):=\sum_{j\in L} v_jf_j(S)$. In this new instance, it is clear that $\opt(I')\ge \opt(I)$. Hence an $\alpha$-approximate solution to instance $I'$ can be transformed into an $\alpha$-approximate solution to instance $I$, in which the heavy resources are assigned integrally, and the light resources fractionally. Using Theorem~\ref{thm:rounding}, we can round this fractional assignment into and integral assignment of value at least $\opt(I)/\alpha-\max_{j\in L}v_j\ge \opt(I)/\alpha-\opt(I)/(2\alpha)=\opt(I)/(2\alpha)$.
\end{proof}

\begin{theorem}
    \label{thm:rounding2}
     Given a fractional assignment $x$ of jobs to machines which is feasible for the assignment LP (with parameter $T$) of some an instance $I$ of the restricted \makespan problem, we can obtain, in polynomial time, a feasible integral assignment of jobs to machines of makespan at most $T+\max_{j\in J} p_j$. This implies that we can find in polynomial time a feasible integral solution of makespan at most $\opt(I)+\max_{j\in J} p_j \le 2\opt(I).$
\end{theorem}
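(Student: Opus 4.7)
My plan is to mirror the argument used for \Cref{thm:rounding} essentially verbatim, replacing the Santa Claus objective with the makespan objective and swapping the ordering direction. Order the jobs by decreasing processing time so that $p_1 \ge p_2 \ge \cdots \ge p_n$. Build exactly the same bipartite auxiliary graph: for each job $j$ introduce a left vertex $w_j$, for each machine/job pair $(i,j)$ a right vertex $u_{ij}$, and include the edges $(w_j,u_{ij})$ together with the \emph{shift} edges $(w_j,u_{i(j+1)})$ for $j<n$. Keep the first polymatroid $\PM'_1$ defined from the job-side submodular functions via $f_1(S) = \sum_j f_j(\{e_u : e\in S,\ e_w=j\})$. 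The second polymatroid $\PM'_2$ is given by per-right-vertex degree caps $d(u_{ij})$, constructed by sweeping the LP mass on machine $i$ in the same way as before: set $d(u_{i1})=\lfloor x_1(i)\rfloor$ and recursively $d(u_{ij})=\lfloor R_{j-1}+x_j(i)\rfloor$ with residual $R_j=R_{j-1}+x_j(i)-d(u_{ij})\in[0,1)$.

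The LP assignment $x$ yields a feasible fractional element of $\PM'_1\cap\PM'_2$ that makes every right-side degree bound tight (the mass $x_j(i)$ for job $j$ and machine $i$ is split into its integer part going onto $(w_j,u_{ij})$ and its fractional part carried along the shift edge to $u_{i(j+1)}$). Integrality of the polymatroid intersection polytope then guarantees an integer element with the same right-degree totals, which can be found in polynomial time. Reading this integer solution back produces an integral assignment of each job $j$ to a basis of $\PM_j$: job $j$ is placed on machine $i$ with multiplicity $d(u_{ij})$ plus the (integral) contribution from the shift edge into $u_{ij}$.

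The main step to verify is the additive loss $\max_j p_j$ per machine. Fix $i$; the integral load minus the fractional load is at most
\[
\Delta_i \;\le\; \sum_{j=1}^n \bigl(\{x_j(i)\} - \lfloor R_{j-1}+\{x_j(i)\}\rfloor\bigr)\, p_j.
\]
Letting $R'$ denote the indices $j$ where the floor equals $1$, the recurrence gives $\{x_j(i)\}=R_j-R_{j-1}$ if $j\notin R'$ and $\{x_j(i)\}=1+R_j-R_{j-1}$ if $j\in R'$, so the $\sum_{j\in R'}p_j$ terms cancel and
\[
\Delta_i \;\le\; \sum_{j=1}^n (R_j-R_{j-1})\, p_j \;=\; \sum_{j=1}^{n-1}(p_j-p_{j+1})\, R_j + R_n\, p_n \;\le\; p_1,
\]
where the last inequality uses $R_j<1$ together with $p_j\ge p_{j+1}\ge 0$, making the sum telescope to at most $p_1 = \max_j p_j$. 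This is the only place where the \emph{decreasing} ordering of sizes matters and where the argument differs substantively from the Santa Claus rounding, which instead relied on $v_n=0$ to absorb the boundary term.

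Applied to any LP optimum we obtain makespan at most $T+\max_j p_j$ as claimed. For the second half of the statement, take $T$ to be the optimal LP value so $T\le \opt(I)$, and note $\max_j p_j\le \opt(I)$ since every job must be processed somewhere; combining gives makespan at most $2\,\opt(I)$. I do not anticipate a real obstacle beyond carefully verifying the telescoping bound above; the polymatroid intersection machinery is reused wholesale from the proof of \Cref{thm:rounding}.
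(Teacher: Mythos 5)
Your plan to transplant the Santa Claus rounding wholesale does not survive the change of objective, and the two places where it breaks are exactly the two places where the paper's proof of Theorem~\ref{thm:rounding2} deliberately deviates from Theorem~\ref{thm:rounding} (the paper even flags ``note the slight change here compared to Theorem \ref{thm:rounding}''). First, \emph{feasibility}: for \makespan, unlike \santa, every job $j$ with $p_j>0$ must be assigned to a full basis of $\PM_j$; a partial assignment cannot be completed for free, since extending it increases loads. With your floor-based degrees, the total capacity you install on machine $i$'s right-hand vertices is $\sum_j \lfloor R_{j-1}+x_j(i)\rfloor = \sum_j x_j(i) - R_n^{(i)}$, so whenever some residual $R_n^{(i)}$ is positive the polymatroid $\PM'_2$ cannot accommodate a basis of $\PM'_1$ at all. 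The maximum common independent vector then has total size strictly below $\sum_j f_j(E)$, and nothing forces the deficit onto a zero-size job (you do not even assume $p_n=0$); a job of positive size may come back under-assigned, and patching it to a basis afterwards has no load guarantee. The paper avoids this by using \emph{ceilings}, $d(u_{ij})=\lceil x_j(i)-R_{j-1}\rceil$, which over-provisions the right-hand side so that the fractional common point is a basis of $\PM'_1$ and integrality of the intersection polytope returns an integral basis, i.e., every job fully assigned.

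Second, the \emph{load bound}: with decreasing sizes and your forward shift edges $(w_j,u_{i(j+1)})$, the two edges entering $u_{ij}$ come from $w_j$ and $w_{j-1}$, and the carry-over edge carries the \emph{larger} size $p_{j-1}\ge p_j$. The degree cap therefore only yields load at $u_{ij}$ at most $p_{j-1}\cdot d(u_{ij})$, so the machine load is bounded by $\sum_j p_{j-1}\lfloor R_{j-1}+x_j(i)\rfloor$, not by $\sum_j p_j\lfloor R_{j-1}+x_j(i)\rfloor$; the difference from the fractional load picks up terms like $\sum_j p_{j-1}R_{j-1}$, which can be of order $n\cdot p_{\max}$. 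Your telescoping computation of $\Delta_i$ is arithmetically fine but bounds the wrong quantity. (This asymmetry is precisely why the forward shift is harmless for \santa: there one needs a \emph{lower} bound on the value at $u_{ij}$, and the incoming edge of larger value only helps.) The paper's fix is to reverse the shift edges to $(w_j,u_{i(j-1)})$, so that $u_{ij}$ receives only jobs of size at most $p_j$ and the load is at most $\sum_j p_j\lceil x_j(i)-R_{j-1}\rceil \le \sum_j p_j x_j(i) + p_1 \le T+\max_j p_j$. Your concluding derivation of the $2\opt(I)$ bound is fine once the rounding itself is corrected.
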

\begin{proof}
    The proof here is very similar to the proof of Theorem \ref{thm:rounding}. We repeat it here for completeness.

    We assume w.l.o.g.\ that $p_1\ge p_2\ge \cdots \ge p_n$ and that $p_n=0$. We denote the polymatroids associated to the jobs as $\mathcal P_1,\mathcal P_2,\dotsc,\mathcal P_n$, corresponding to the submodular functions $f_1,f_2,\ldots,f_n$. Given the fractional assignment $x$, we will create a feasible fractional solution $x'$ to a certain polymatroid intersection problem. We define the two polymatroids using a bipartite graph as follows. On the left-hand side, we have a set of vertices $W$ with one vertex $w_j$ for each job $j$, and on the right-hand side we have a set of vertices $U$ with one vertex $u_{ij}$ for each machine $i$ and job $j$. The edge set will be denoted by $E$, and both polymatroids will have $E$ as a ground set. We set $E=\{(w_j,u_{ij})\}_{i\in M, j\in J}\cup \{(w_j,u_{i(j-1)})\}_{i\in M, j\in J\setminus \{1\}}$ (note the slight change here compared to Theorem \ref{thm:rounding}). For some edge $e\in E$, we denote by $e_w$ the job corresponding to its left-hand side vertex, and by $e_u$ the machine corresponding to its right-hand side endpoint.
    The first polymatroid $\PM'_1$ will be associated with the submodular function 
    \begin{equation*}
        f_1(S):=\sum_{j=1}^n f_j\left(\bigcup_{e\in S:e_w=j} e_u\right) \ .
    \end{equation*}
    The second polymatroid $\PM'_2$ will be defined using the right-hand side vertices in our graph. Each vertex $u \in U$ will have some degree constraint $d(u)$ and the submodular function $f_2$ is simply defined as
    \begin{equation*}
        f_2(S):=\sum_{e=(w,u)\in S} d(u) \ .
    \end{equation*}
    We define the degree constraints using the following process for each machine $i$ in instance $I$. We start with 
    \begin{equation*}
        d(u_{i1})=\left\lceil x_1(i)\right\rceil\ ,
    \end{equation*}
    and we define the \textit{remainder} $R_1:=d(u_{i1})-x_1(i)$. Then we define recursively the degree constraint $d(u_{ij})$ ($j\ge 2$) and remainders as follows.
    \begin{align*}
        d(u_{ij}) &:=\left\lceil x_j(i)-R_{j-1}\right\rceil\ , \mbox{ and} \\
        R_j &:=\left\lceil x_j(i)- R_{j-1}\right\rceil - (x_j(i)-R_{j-1}) = \left\lceil \{x_j(i)\}- R_{j-1}\right\rceil - (\{x_j(i)\}-R_{j-1})\ .
    \end{align*}
    It is easy to see that the solution $x$ to the assignment LP can be transformed into a feasible fractional solution to our polymatroid intersection problem. Similar to the proof of Theorem \ref{thm:rounding}, the remainder $R_j$ is defined to be exactly the quantity by which we can select the edge $(w_j,u_{i(j-1)})$ in our fractional solution to the polymatroid intersection problem.

    Note that this fractional solution to our polymatroid intersection problem is such that we have a basis of the polymatroid $\mathcal P'_1$ (the left-hand side polymatroid). By integrality of the polymatroid intersection polytope (see Chapters 46-47 in \cite{schrijver2003combinatorial}) there exists an integral solution which is also a basis in $\mathcal P'_1$ (and we can find it in polynomial time by finding the maximum cardinality multiset of edges which belongs to the polymatroids intersection).
    
    It is easy to see that this integral solution corresponds to an integral assignment of jobs to machines in the instance $I$, in which each machine $i$ receives a load of at most
    \begin{equation*}
        \sum_{j=1}^n \left\lceil x_j(i)-R_{j-1}\right\rceil p_j \ .
    \end{equation*}
    Let us compute the difference in objective $\Delta_i$ with the fractional solution. We have that 
    \begin{multline*}
        \Delta_i\le \sum_{j=1}^n (\left\lceil x_j(i)-R_{j-1}\right\rceil -x_j(i)) p_j= \sum_{j=1}^{n-1} (\left\lceil x_j(i)-R_{j-1}\right\rceil -x_j(i))p_j \\
        = \sum_{j=1}^{n-1} (\left\lceil \{x_j(i)\}-R_{j-1}\right\rceil -\{x_j(i)\})p_j\ ,
    \end{multline*}
    using $p_n=0$. Looking at each term individually, we notice that either $\left\lceil \{x_j(i)\}-R_{j-1}\right\rceil=0$, or that $\left\lceil \{x_j(i)\}-R_{j-1}\right\rceil=1$. In the first case the next remainder $R_{j}$ is equal to $R_{j-1}-\{x_j(i)\}$. In the second case, we have that 
    \begin{equation*}
        R_{j} = 1+R_{j-1}-\{x_j(i)\} \iff \{x_j(i)\} =  1+R_{j-1}-R_j\ .
    \end{equation*}
    Let us denote by $J'$ the set of all indices where the second case happens. Then we can write
    \begin{multline*}
        \Delta_i\le \sum_{j\in J'} p_j - \sum_{j=1}^n \{x_j(i)\}p_j = \sum_{j\in J'} p_j-\sum_{j\in J'} p_j - \sum_{j=1}^n (R_{j-1}-R_j)p_j = \sum_{j=1}^n (R_{j}-R_{j-1})p_j\\
        =  \sum_{j=1}^{n-1} (p_{j}-p_{j+1})R_j \le p_1\ ,
    \end{multline*}
    where we use the fact that $R_0=p_n=0$, and $R_j\le 1$ for all $j$.
\end{proof}

\section{Analysis of local search algorithm}
\label{apx:ana-localsearch}

\subsection{General properties of matroids and submodular functions}
\begin{lemma}\label{lem:decr-marg}
    Let $g: E \rightarrow \mathbb Z_{\ge 0}$ be monotone submodular with $g(\emptyset) = 0$, $X' \subseteq X\subseteq E$, and $h' < h$.
    Then 
    \begin{align*}
      g(Y \mid h \cdot X') &\ge g(Y \mid h \cdot X) \text{ for all } Y\subseteq E\setminus X \text{ and} \\
      g(Y \mid h' \cdot X) &\ge g(Y \mid h \cdot X) \text{ for all } Y\subseteq E\setminus X \ . \\
    \end{align*}
\end{lemma}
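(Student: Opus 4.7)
My plan is to derive both inequalities from the explicit description of the submodular function $g'_{h,X}$ associated with the truncated polymatroid $\PM_g\cap\{y:y(i)\le h\text{ for }i\in X\}$, where $\PM_g$ denotes the polymatroid of $g$, so that by definition $g(Y\mid h\cdot X)=g'_{h,X}(Y\cup X)-g'_{h,X}(X)$. Applying the polymatroid-intersection theorem to $\PM_g$ and the box polymatroid gives the formula
\begin{equation*}
    g'_{h,X}(S)\;=\;\min_{T\subseteq S\cap X}\bigl[\,g(S\setminus T)+h|T|\,\bigr],
\end{equation*}
and $g'_{h,X}$ is itself monotone submodular as a rank function. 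I will use these two facts throughout.

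For the first inequality I would observe that whenever $Z:=X\setminus X'$ is disjoint from a set $S$, the constraint $T\subseteq S\cap X$ coincides with $T\subseteq S\cap X'$, so the two min-formulas agree and $g'_{h,X}(S)=g'_{h,X'}(S)$. Both $X'$ and $Y\cup X'$ are disjoint from $Z$, so this rewrites the left-hand side as $g(Y\mid h\cdot X')=g'_{h,X}(Y\cup X')-g'_{h,X}(X')$. The inequality then reduces to the diminishing-marginal-returns form of submodularity of the \emph{single} function $g'_{h,X}$ applied to the nested pair $X'\subseteq X$ and the disjoint set $Y\subseteq E\setminus X$, and is immediate.

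For the second inequality I would study $\phi(h):=g(Y\mid h\cdot X)$ as a function of $h$ and show it is non-increasing. Setting $f_S(T):=g(S\setminus T)+h|T|$, submodularity of $g$ (combined with modularity of $|\cdot|$) makes $T\mapsto f_S(T)$ submodular in $T$, and the excess $f_{Y\cup X}(T)-f_X(T)=g(Y\mid X\setminus T)$ is monotone non-decreasing in $T\subseteq X$ by submodularity of $g$ (smaller base, larger marginal). A standard submodular-minimization argument then shows that, given any minimizers $T^*_{Y\cup X}$ and $T^*_X$ of $f_{Y\cup X}$ and $f_X$, the intersection $T^*_{Y\cup X}\cap T^*_X$ still minimizes $f_{Y\cup X}$; hence one can pick optimizers with $T^*_{Y\cup X}\subseteq T^*_X$, and in particular $|T^*_{Y\cup X}|\le|T^*_X|$. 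Since $\alpha_S(h):=g'_{h,X}(S)$ is concave piecewise linear in $h$ with right-derivative equal to $|T^*_{S,h}|$, the cardinality bound yields $\alpha'_{Y\cup X}(h)\le\alpha'_X(h)$ almost everywhere, so $\phi=\alpha_{Y\cup X}-\alpha_X$ is non-increasing, which is the desired inequality. The main technical step will be this optimizer-containment, which rests on combining the exchange inequality for submodular $f_X$ with the monotonicity of the excess to bound $f_{Y\cup X}(T^*_{Y\cup X}\cap T^*_X)$ from above by $f_{Y\cup X}(T^*_{Y\cup X})$.
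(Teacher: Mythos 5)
Your proof is correct. For the first inequality you follow essentially the same route as the paper: both arguments reduce to diminishing marginal returns of the single truncated rank function $g'_{h,X}$, after observing that truncating on $X$ versus on $X'$ makes no difference on sets disjoint from $X\setminus X'$ (the paper only needs $g''\ge g'$ pointwise together with equality at $X'$, whereas your min-formula yields equality on both relevant sets; the difference is immaterial). For the second inequality your route is genuinely different. The paper works directly with elements of the polymatroid: it picks maximal vectors $y'''$ and $y'$ supported on $X$ and bounded by $h'$ and $h$ respectively, nests them via the augmentation property, and compares the two contracted polymatroids, whose defining functions differ by the constant $y'(X)-y'''(X)\ge 0$. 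You instead invoke the explicit restriction formula $g'_{h,X}(S)=\min_{T\subseteq S\cap X}\bigl[g(S\setminus T)+h|T|\bigr]$, establish optimizer containment via the standard lattice argument for minimizers of two submodular functions whose difference is monotone, and conclude by comparing slopes of the concave piecewise-linear functions $h\mapsto g'_{h,X}(Y\cup X)$ and $h\mapsto g'_{h,X}(X)$. This is longer and leans on the (standard but nontrivial) truncation formula for polymatroid rank functions, but it is arguably more self-contained in its logic than the paper's rather terse element-based argument; both are valid. One minor point to make precise when writing it up: the right-derivative of a minimum of linear functions is the \emph{smallest} slope among the active pieces, so you should take $T^*_X$ to be a minimum-cardinality minimizer of $f_X$; the containment $T^*_{Y\cup X}\cap T^*_X\subseteq T^*_X$ then bounds the minimum cardinality of a minimizer of $f_{Y\cup X}$ by that of $f_X$, which is exactly the slope comparison you need.
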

\begin{proof}
    Recall that $g(Y \mid h\cdot X)$ is derived by first 
    constructing a new monotone submodular function $g'$ corresponding to the
    polymatroid defined by $g$ but with entries of $X$ bounded
    by $h$. 
    For the first inequality,
    let $g''$ be the corresponding function with $X'$ instead of $X$.
    Then
    \begin{multline*}
        g(Y \mid h \cdot X') = g''(Y \mid X')
        = g''(Y \cup X') - g''(X') \\
        \ge g'(Y \cup X') - g'(X') = g'(Y \mid X') \ge g'(Y \mid X) = g(Y \mid h \cdot X) \ . 
    \end{multline*}
    Here we use that $g''(X') = g'(X')$.
    For the second inequality,
    let $g'''$ be the submodular function for $h'$ instead of $h$. Let $\mathcal Q$ be the polymatroid corresponding to $g$. Then
    there exists some $y'''\in \mathcal Q$ with $\mathrm{supp}(y''') \subseteq X$, $y'''(i)\le h'$ for all $i\in E$, and $y'''(X) = g'''(X)$. We define $y'$ accordingly, except for $g'$ instead of $g'''$. Here we may assume, by augmentation property of the polymatroid that $y'(i)\ge y'''(i)$ for all $i\in E$. 
    The value $g(Y \mid h' \cdot X) = g'''(Y \mid X)$
    is simply the largest element (by sum) in the polymatroid defined by restricting $\mathcal Q$ to $E\setminus X$ and replacing the submodular function by $g'''(Y') = g(Y' \cup X) - y'''_i(X)$.
    It is clear that this is at least as big as
    $g(Y \mid h \cdot X) = g'(Y \mid X)$, since
    the corresponding polymatroid here has a submodular function $g'(Y') = g(Y' \cup X) - y'_i(X)$ that is smaller or equal to $g'''$ everywhere.
\end{proof}

\begin{lemma}\label{lem:less-d}
    Let $g: E \rightarrow \mathbb Z_{\ge 0}$ be monotone
    submodular with $g(\emptyset) = 0$ and $X \subseteq E$.
    Define $Y = \{i\in X : g(i \mid h \cdot (X - i)) < h\}$.
    Then for every $i\in X$ we have $g(i\mid h \cdot (Y - i)) < h$ if and only if $i\in Y$.
\end{lemma}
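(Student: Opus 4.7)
The plan is to replace the opaque marginal $g(i \mid h \cdot (X-i))$ with a purely combinatorial condition on the auxiliary set function $\psi(A) := g(A) - h|A|$, which is submodular since $g$ is. My first step would be to apply the polymatroid--box intersection formula (LP duality for $\max y(S)$ over $\PM_g$ intersected with $y \le h$ on the truncation set) to expand the definition $g(i \mid h \cdot (X-i)) = g''(X) - g''(X-i)$, where $g''$ is the rank function of the truncated polymatroid. After a short cardinality bookkeeping (using $|(X-i)\setminus A| = |X|-|A|$ when $i \in A \subseteq X$), this yields the clean identity
\[
g(i \mid h \cdot (X - i)) \;=\; h \;+\; \min_{i \in A \subseteq X}\psi(A) \;-\; \min_{A \subseteq X - i}\psi(A),
\]
so $g(i \mid h \cdot (X-i)) < h$ is equivalent to every $\psi$-minimizer over $2^X$ containing $i$. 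Invoking the standard fact that the argmin of a submodular function is closed under intersection, I define $A^*_X := \bigcap \arg\min_{A \subseteq X}\psi(A)$, which is itself a minimizer, and obtain the crisp reformulation $Y = A^*_X$.

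Given this reformulation, the ``only if'' direction ($i \notin Y \Rightarrow g(i \mid h \cdot (Y-i)) \ge h$) is immediate from Lemma~\ref{lem:decr-marg}: for $i \notin Y$ one has $Y - i = Y \subseteq X - i$, so the marginal can only increase when we shrink the conditioning set, and it is already at least $h$ at $X - i$.

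For the ``if'' direction ($i \in Y \Rightarrow g(i \mid h \cdot (Y-i)) < h$), I would re-apply the reformulation, this time with $Y$ in the role of $X$: it suffices to show $i \in A^*_Y := \bigcap \arg\min_{A \subseteq Y}\psi(A)$, which reduces to proving the equality $A^*_Y = A^*_X$. For one inclusion, note that $A^*_X = Y \subseteq Y$ is an admissible candidate over $2^Y$ and attains the minimum of $\psi$ over the larger family $2^X \supseteq 2^Y$, hence it lies in $\arg\min_{A \subseteq Y}\psi$, so $A^*_Y \subseteq A^*_X$. Conversely, $A^*_Y \subseteq Y \subseteq X$ is admissible over $2^X$ and attains the same minimum value (since the two minima coincide by the previous step), so $A^*_Y \in \arg\min_{A \subseteq X}\psi$ and $A^*_X \subseteq A^*_Y$. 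Combined, $A^*_Y = A^*_X = Y$, and then every $i \in Y$ trivially lies in $A^*_Y$, as required.

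The step I expect to require the most care is the derivation of the $\psi$-identity for the truncated marginal. The polymatroid--box intersection formula is standard, but the paper's convention for $g(\cdot \mid h \cdot W)$ goes through the intermediate polymatroid $\{y \in \PM_g : y(k) \le h \text{ on } W\}$, where $i$ is the single ``unbounded'' coordinate in the duality; one has to be careful that the dual cuts are restricted to sets containing $i$ (on the side with unit cost $\infty$) and that cardinalities on the bounded side are counted correctly. Everything else---the submodularity of $\psi$, the lattice property of its argmin, and the two symmetric inclusions for $A^*_Y = A^*_X$---is structural and does not require further calculation.
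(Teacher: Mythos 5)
Your proof is correct, but it takes a genuinely different route from the paper's. For the easy direction (if the marginal at $Y-i$ is below $h$ then $i\in Y$) both arguments coincide: they invoke Lemma~\ref{lem:decr-marg} on $Y-i\subseteq X-i$. For the harder direction the paper argues constructively inside the polymatroid: starting from a maximal $y$ supported on $Y-i$ and bounded by $h$, it uses the assumption $g(i\mid h\cdot(Y-i))\ge h$ and the augmentation property to pump every coordinate of $X\setminus(Y-i)$ up to $h$, concluding that $g(i\mid h\cdot(X-i))\ge h$. You instead pass through the box-intersection rank formula to get the identity $g(i\mid h\cdot(X-i)) = h + \min_{i\in A\subseteq X}\psi(A) - \min_{A\subseteq X-i}\psi(A)$ with $\psi(A)=g(A)-h|A|$, which I verified is correct (the coordinate $i$ being unbounded forces $i\in A$ in the dual cut for $g''(X)$, and the cardinality bookkeeping works out). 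This turns the lemma into the statement that $Y$ equals the unique minimal minimizer $A^*_X=\bigcap\arg\min_{A\subseteq X}\psi(A)$, and your two-inclusion argument that $A^*_Y=A^*_X$ (using that the minima over $2^Y$ and $2^X$ coincide because $A^*_X\subseteq Y$) is sound. What your route buys is a crisp structural characterization of $Y$ that makes its stability under restriction transparent and would also yield Lemma~\ref{lem:large-subm-value} essentially for free; what the paper's route buys is self-containedness, since it only uses the augmentation property already set up in Section~\ref{sec:notation} rather than the (standard but external) formula for the rank function of a polymatroid intersected with a box.
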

\begin{proof}
    For any $i\in X$ with $g(i \mid h \cdot (Y - i)) < h$ we have $g(i \mid h \cdot (X - i)) < h$ by Lemma~\ref{lem:decr-marg}. This proves one direction. For the other direction,
    let $i\in X$ with $g(i \mid h \cdot (Y - i)) \ge h$.
    Further, let $\mathcal Q$ be the polymatroid corresponding to $g$ and
    let $y\in \mathcal Q$ with $\mathrm{supp}(y) \subseteq Y - i$ and $y(j) \le h$ for all $j\in E$.
    Further, choose $y$ such that $y(Y-i)$ is maximized.
    Since $g(i \mid h \cdot (Y - i)) \ge h$, we can extend $y$ to $y'$ which is equal for all $j\neq i$ and has $y'(i) = h$.
    Now for any $i' \in X \setminus Y$,
    since $g(i' \mid h\cdot Y) \ge g(i' \mid h \cdot (X - i')) \ge h$ we can repeat the same trick and increase the value of $y'(i')$ to $h$ as well. It is easy
    to see that this can be continued to obtain
    $y''\in \mathcal Q$ with $y''(j) = h$ for all $j\in X\setminus (Y - i)$ and $y''(j) = y(j)$ for all $j\in Y - i$.

    It is clear that $y''$ when restricted to $X - i$
    maximizes $y''(X - i)$ over all elements of $\mathcal Q$
    with support $X - i$ and upper bound $h$:
    It maximizes the sum already on $Y - i$ and all
    other components are obviously the largest possible.
    Together with the fact that $y''$ with $y''(i) = h$ is in $\mathcal Q$, this implies that $g(i \mid h \cdot (X - i)) \ge h$.
\end{proof}

\begin{lemma}\label{lem:large-subm-value}
    Let $g: E\rightarrow \mathbb Z_{\ge 0}$ be monotone submodular with $g(\emptyset) = 0$ and $X'\subseteq X\subseteq E$. Further,
    let $g(i \mid h \cdot (X - i)) < h$ for all $i\in X'$.
    Then $g(X') \le h |X|$ and strict inequality holds whenever $X'\neq \emptyset$.
\end{lemma}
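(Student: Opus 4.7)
The plan is to work with the restricted polymatroid $\tilde{\PM} := \{y \in \PM : y(i) \le h \; \forall\, i \in X\}$ and its rank function $\tilde g$. By Edmonds' polymatroid intersection theorem applied to $\PM$ and the uniform polymatroid $\{y \in \mathbb Z_{\ge 0}^E : y(i) \le h \; \forall\, i \in X\}$, one obtains, for every $S \subseteq X$,
\[
    \tilde g(S) \;=\; \min_{A \subseteq S}\bigl(g(A) + h\,|S \setminus A|\bigr).
\]
Setting $A = \emptyset$ yields the trivial bound $\tilde g(S) \le h|S|$ for every $S \subseteq X$.

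I would next translate the hypothesis into a bound on the marginals of $\tilde g$. Observe that $\tilde g(X-i)$ coincides with the rank of $X-i$ in the polymatroid that restricts only $X-i$ (the bound on $y(i)$ becomes irrelevant once $y(i) = 0$), while $\tilde g(X)$ is at most the rank of $X$ in that less restrictive polymatroid (which has strictly fewer constraints). Hence
\[
    \tilde g(X) - \tilde g(X - i) \;\le\; g\bigl(i \mid h \cdot (X - i)\bigr) \;<\; h \qquad \forall\, i \in X',
\]
where the final strict inequality is the hypothesis.

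I would then fix a minimizer $A^* \subseteq X$ in the intersection formula so that $\tilde g(X) = g(A^*) + h|X \setminus A^*|$, and argue $X' \subseteq A^*$. Indeed, if some $i \in X'$ lay outside $A^*$, then $A^* \subseteq X - i$ would itself be feasible in the min defining $\tilde g(X-i)$, yielding
\[
    \tilde g(X-i) \;\le\; g(A^*) + h\bigl(|X \setminus A^*| - 1\bigr) \;=\; \tilde g(X) - h,
\]
which contradicts $\tilde g(X) - \tilde g(X-i) < h$ from the previous step.

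Using $X' \subseteq A^*$, monotonicity of $g$, and the trivial bound $\tilde g(X) \le h|X|$, the final calculation is
\[
    g(X') \;\le\; g(A^*) \;=\; \tilde g(X) - h|X \setminus A^*| \;\le\; h|X| - h|X \setminus A^*| \;=\; h|A^*| \;\le\; h|X|.
\]
For the strict inequality when $X' \neq \emptyset$, picking any $i \in X'$ gives $\tilde g(X) < \tilde g(X-i) + h \le h(|X|-1) + h = h|X|$, which turns the middle $\le$ in the chain above into $<$ and yields $g(X') < h|X|$. The only nontrivial ingredient is the polymatroid intersection identity; the rest is bookkeeping, so I do not expect a serious obstacle beyond carefully separating the two restricted polymatroids (the one that bounds all of $X$ versus the one that bounds only $X-i$).
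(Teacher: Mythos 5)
Your proof is correct, but it takes a genuinely different route from the paper's. The paper argues by induction on $|X'|$: for $X'\neq\emptyset$ it picks $i\in X'$ and shows, via an exchange argument on box-bounded vectors of the polymatroid, that some $Y\subseteq X$ with $i\in Y$ must satisfy $g(Y)<h|Y|$ (otherwise a maximal vector with $z(i)=0$ and all entries at most $h$ could be pushed up to $z(i)=h$, contradicting the marginal hypothesis); it then contracts $Y$ and applies the induction hypothesis to $g(\cdot\mid Y)$ on $X\setminus Y$, concluding $g(X')\le g(Y)+g(X'\setminus Y\mid Y)<h|Y|+h|X\setminus Y|=h|X|$. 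You instead invoke the truncation/intersection duality $\tilde g(S)=\min_{A\subseteq S}\bigl(g(A)+h|S\setminus A|\bigr)$ once, correctly translate the hypothesis into $\tilde g(X)-\tilde g(X-i)<h$ for all $i\in X'$ (the step distinguishing the box on $X$ from the box on $X-i$ is the only delicate point, and you handle it properly: equality at $X-i$, inequality at $X$), deduce that any minimizer $A^*$ for $S=X$ must contain $X'$, and read off $g(X')\le g(A^*)=\tilde g(X)-h|X\setminus A^*|\le h|A^*|\le h|X|$, with strictness for $X'\neq\emptyset$ following from $\tilde g(X)<h|X|$. Your argument is shorter, makes the structure more transparent (the minimizer $A^*$ plays the role of the union of the paper's deficient sets $Y$), and even yields the marginally sharper bound $g(X')\le h|A^*|$; the trade-off is that it imports Edmonds' intersection formula as a black box, whereas the paper's induction is self-contained and uses only the augmentation property already deployed in the neighboring lemmas.
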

\begin{proof}
    We use an induction over $|X'|$. For $X' = \emptyset$ the claim obviously holds.
    Now consider $X' \neq \emptyset$ and let $i\in X'$. There must be some $Y\subseteq X$ with $i\in Y$
    such that $g(Y) < h |Y|$: assume otherwise and let $\mathcal Q$ be the polymatroid corresponding to $g$.
    Let $z\in \mathcal Q$ with $z(j) \le h$ for all $j\in E$ and $z(i) = 0$, maximizing $z(E)$.
    It can easily be checked that $z'$ with $z'(j) = z(j)$ for $j\neq i$ and $z'(i) = h$ is also in $\mathcal Q$.
    This however implies that $g(i \mid h \cdot (X - i)) \ge h$.

    Having established that $g(Y) < h |Y|$ for some $Y\ni i$, we use the induction hypothesis
    on $X\setminus Y$, $X'\setminus Y$ and $g'(Y') := g(Y' \mid Y)$. It follows that $g(X'\setminus Y \mid Y) \le h |X\setminus Y|$ and therefore
    \begin{equation*}
        g(X') \le g(Y) + g(X'\setminus Y \mid Y) < h |Y| + h |X\setminus Y| = h |X| \ . \qedhere
    \end{equation*}
\end{proof}

\subsection{Basic properties and invariants of the data structures}
Note that after $A, C, B$, and $A_I$ are
initially created, we never change $A, C, I_M\cap C$ or $I_P\cap C$.
The only dynamic sets are $B$, $A_I$, $I_M\setminus C$, and $I_P\setminus C$.
Hence, for properties that rely solely on the fixed elements, it suffices
to verify them at the time they were created.
Furthermore, the only place that makes potentially
dangerous changes to $B$, $I_M\setminus C$, and $I_P\setminus C$ is the recursion.
In the remainder, ``at all times'' means that a property should hold between any two of the four main operations that are performed repeatedly.

We will now verify the properties of the set of addable elements.
Notice that Properties~(1) and~(2), that is, $2b \cdot A \in \PM$
and $f(i \mid 2b \cdot A) < 2b$ for all $i\in C\setminus A$,
hold trivially by construction.
\begin{lemma}\label{lem:addable4}
    For the set of addable elements $A$ and set $C$ we have that
    $r(B_0 \mid C) \le 2\epsilon |B_0|$.
\end{lemma}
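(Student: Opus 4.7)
The bound is actually immediate from the \emph{definition} of $C$ and does not require invoking any structural properties of the greedy construction of $A$. The set $C$ is defined as
\[
C \;=\; B_0 \,\cup\, A \,\cup\, \{\,i \in I_M \setminus A : f(i \mid 2b\cdot A) < 2b\,\}\,,
\]
so $B_0 \subseteq C$ by inspection of the first term of the union. Applying the paper's convention for the contracted rank function, $r(Y \mid X) = r(Y\cup X) - r(X)$, with $Y = B_0$ and $X = C$, we get $B_0 \cup C = C$ and therefore $r(B_0 \mid C) = r(C) - r(C) = 0$, which is trivially at most $2\epsilon|B_0|$. Concretely the plan is just: (i) unfold the definition of $C$ to observe $B_0 \subseteq C$; (ii) apply the definition of contracted rank; (iii) conclude. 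There is no real obstacle to overcome.

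Conceptually, the inclusion of $B_0$ as a term of the union defining $C$ is not accidental: it \emph{bakes in} the intended property ``$C$ covers $B_0$ in the matroid,'' so that later uses of~$C$ as a certificate set can appeal to this bound without a separate closure argument. The substantive matroid content behind the informal statement that $C$ contains the ``blockers'' of $B_0$ in $I_M$—namely that an element of $I_M$ is added to $C$ exactly when $f(i \mid 2b\cdot A) < 2b$ (so that the high-marginal elements in $D := I_M\setminus C$ are not needed by any polymatroid-feasible solution)—is not what is being asserted by this particular property, and is instead used in conjunction with Properties~1--3 and with the extended set $B'_0 = B_0 \cup B$ inside the recursion. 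The bound $r(B_0 \mid C) \le 2\epsilon|B_0|$ is then what later feeds into the $|Y| - r(Y)$ estimate in the certificate of infeasibility, where it is crucial that the inequality is available in its ``trivial'' form at every recursion level.
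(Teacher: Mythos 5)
There is a genuine gap: your proof rests entirely on the term ``$B_0\,\cup$'' in the displayed definition of $C$, and that term is an inconsistency in the paper rather than a load-bearing feature. Everywhere else the paper treats $C$ as a subset of $I_M$, hence disjoint from $B_0$: the overview of the construction states explicitly ``$A\subseteq C\subseteq I_M$'' (and $I_M$, $I_P$, $B_0$ are disjoint by the input specification); the recursion sets $E'=E\setminus C$ and then requires $B_0'=B_0\cup B\subseteq E'$ (Lemma~\ref{lem:recursive-feasible}), which is impossible if $B_0\subseteq C$; and the certificate of infeasibility sets $Z_1=C\cup B$ and requires $Z_1\subseteq E\setminus B_0$, with Property~(1) of the certificate being exactly the inequality of this lemma. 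If $B_0\subseteq C$ the lemma is vacuous and, more importantly, useless downstream: the whole point is that a set \emph{disjoint from} $B_0$ (namely $C$, living inside $I_M$) already spans almost all of $B_0$ in the matroid, so that any solution covering many elements of $B_0$ with the matroid must cover a substantial part of $C$ with the polymatroid. Your closing paragraph essentially concedes that the substantive content lies elsewhere; under the intended definition $C=A\cup\{i\in I_M\setminus A: f(i\mid 2b\cdot A)<2b\}$, your argument establishes nothing.

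With the correct reading, the statement genuinely needs the structure of the greedy construction. The paper's proof goes as follows: at the time $C$ is built the algorithm has not terminated in the trivial case, so $r(B_0\mid I_M)<\epsilon^2|B_0|$; the final layer $A_\ell$ (which is discarded) satisfies $|A_\ell|<\epsilon|B_0|$; and every $i\in I_M\setminus(C\cup A_\ell)$ was rejected because $r(i\mid B_0\cup I_M\setminus A_\ell - i)=1$. Assuming for contradiction $r(B_0\mid C)>2\epsilon|B_0|$, one picks $X\subseteq B_0$ with $C\cup X\in\cI$ and $|X|=r(B_0\mid C)$ (this step already presupposes $C\in\cI$, i.e.\ $C\subseteq I_M$), grows $C\cup X$ by the elements of $I_M\setminus(C\cup A_\ell)$ one at a time to get $X\cup I_M\setminus A_\ell\in\cI$, and then augments against $I_M$ to find $Y\subseteq B_0$ with $Y\cup I_M\in\cI$ and $|Y|\geq |X|-|A_\ell|>\epsilon|B_0|$, contradicting $r(B_0\mid I_M)<\epsilon^2|B_0|$. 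None of these ingredients appear in your proposal, so as written it does not prove the lemma in the form the rest of the analysis needs.
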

\begin{proof}
    Let $A_1,A_2,\dotsc,A_{\ell}$ be the sets
    created by the procedure.
    Consider
    the time that $C$ is created. Here,
    it holds that $r(B_0 \mid I_M) < \epsilon^2 |B_0| \le \epsilon |B_0|$.
    Now assume towards contradiction that $r(B_0 \mid C) > 2 \epsilon |B_0|$.
    Thus, we can find some $X \subseteq B_0$ with $|X| = r(B_0 \mid C)$ and $C \cup X \in \cI$.
    After finalizing $A$, for all $i\in I_M\setminus (C \cup A_{\ell})$ we have
    $r(i \mid B_0 \cup I_M\setminus A_{\ell} - i) = 1$.
    Thus, $X \cup I_M\setminus A_{\ell} \in \cI$, which can be seen by adding to $C \cup X$ the
    elements from $I_M\setminus (A_{\ell} \cup C)$ one at a time (each having
    marginal value $1$). Applying the matroid augmentation property on $I_M$,
    we can find a set $Y\subseteq B_0$ such that $Y\cup I_M\in \cI$ and
    \begin{equation*}
        |Y| = |X \cup I_M\setminus A_{\ell}| - |I_M| \ge |X| - |A_{\ell}| > \epsilon |B_0| \ ,
    \end{equation*}
    a contradiction.
\end{proof}
\begin{lemma}\label{lem:addable3}
    For the set of addable elements $A$ and set $C$ we have that
    $r(B_0 \mid I_M \setminus R) \ge \epsilon^2 |B_0| - |B_0 \cap I_M|$
    for every $R\subseteq A$ with $|R| \ge \epsilon |A|$.
\end{lemma}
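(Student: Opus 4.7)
The plan is a pigeonhole-style reduction to a single block $A_{k^*}$ of the partition $A = A_1 \sqcup \cdots \sqcup A_{\ell-1}$, followed by exploiting the rank identity that the construction of $A_{k^*}$ is designed to enforce, plus an invariance argument that accounts for recursive updates.

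First I would apply pigeonhole. Since the construction of $A$ stops as soon as a block has fewer than $\epsilon |B_0|$ elements, every block included in $A$ satisfies $|A_k| \ge \epsilon |B_0|$. Combining this with $|R| = \sum_k |R\cap A_k| \ge \epsilon |A| = \epsilon \sum_k |A_k|$ by averaging yields an index $k^*$ with
\[
|R^*| := |R \cap A_{k^*}| \;\ge\; \epsilon |A_{k^*}| \;\ge\; \epsilon^2 |B_0|.
\]
The diminishing-marginals property of the rank gives $r(B_0 \mid I_M\setminus R) \ge r(B_0 \mid I_M\setminus R^*)$ (using $R\supseteq R^*$), so it suffices to prove $r(B_0 \mid I_M \setminus R^*) \ge |R^*| - |B_0 \cap I_M|$.

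Second, I would establish the initial-state version of this. By induction on the order in which elements are added to $A_{k^*}$, one shows that at the end of the construction the identity $r(A_{k^*} \mid B_0 \cup I_M \setminus A_{k^*}) = 0$, equivalently $r(B_0 \cup I_M \setminus A_{k^*}) = r(B_0 \cup I_M)$, holds: each newly added element $i$ was chosen with $r(i \mid B_0 \cup I_M \setminus A_{k^*}^{\mathrm{prev}} - i) = 0$, and by the inductive hypothesis $B_0 \cup I_M \setminus A_{k^*}^{\mathrm{prev}}$ already has the same rank as $B_0 \cup I_M$. Since $I_M$ is independent, $r(B_0 \cup I_M) \ge |I_M|$; monotonicity together with $R^* \subseteq A_{k^*}$ gives $r(B_0 \cup I_M \setminus R^*) \ge |I_M|$, and because $I_M \setminus R^*$ is independent the standard rank calculation yields $r(B_0 \mid I_M \setminus R^*) = r(B_0 \cup I_M \setminus R^*) - (|I_M|-|R^*|) \ge |R^*|$. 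At the moment $A$ is finalized the interface disjointness forces $|B_0 \cap I_M| = 0$, so the claim holds with slack.

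Finally, I would argue that the inequality $r(B_0 \mid I_M \setminus R^*) + |B_0 \cap I_M| \ge |R^*|$ is preserved under the only non-trivial update to $I_M$, namely the recursive call. The recursion keeps $I_M \cap C \supseteq A$ (and hence $R^*$) fixed and rewrites only $I_M \setminus C$; moreover $B_0 \subseteq C$, so no $B_0$-element can appear inside $I_M'' \subseteq E\setminus C$ through the recursion. The main obstacle will lie here: after the recursion the clean identity $r(B_0 \cup I_M \setminus A_{k^*}) = r(B_0 \cup I_M)$ from Step~2 need not persist verbatim because $I_M \setminus C$ can have been entirely overwritten by the recursion's output $I_M''$. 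Controlling the resulting loss of $r(B_0 \mid I_M \setminus R^*)$ will require combining submodularity of the rank, the recursion's augmentation guarantee $I_M'' \cup I_P'' \supseteq I_M' \cup I_P'$, and the fact that $R^*$ sits entirely in the fixed part $I_M \cap C$; the additive slack $|B_0 \cap I_M|$ on the right-hand side is introduced precisely to absorb this loss.
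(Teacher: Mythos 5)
Your Steps 1 and 2 reproduce the first half of the paper's argument correctly: pigeonhole to a block $A_j$ with $|R\cap A_j|\ge\epsilon|A_j|\ge\epsilon^2|B_0|$, the greedy construction's invariant $r(B_0\cup I_M\setminus A_j)=r(B_0\cup I_M)$, and the rank computation giving $r(B_0\mid I_M\setminus R)\ge\epsilon^2|B_0|$ at the time $A$ is created. The problem is Step 3, which you yourself flag as ``the main obstacle'' and then do not carry out. That step is the actual content of the lemma beyond the initial state, and it cannot be dispatched by the ingredients you list. The paper's argument there is a careful exchange argument: it decomposes the entire evolution of $I_M$ (including everything done inside recursive calls) into a sequence $I_1,\dotsc,I_k$ of single-element insertions and deletions, and proves that a witness set $S\subseteq B_0$ with $I_h\setminus R\cup S\in\cI$ survives each step, \emph{provided} no element of $S$ is individually addable to $I_h$. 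The insertion case of that claim relies crucially on the guarantee that whenever an element $i$ is inserted, no element $j\prec i$ could have been inserted instead --- this is exactly where the partial order $\prec$ and the greedy addition order enter, and your proposal never mentions either. Without that guarantee the augmentation step in the exchange argument fails (one could otherwise conclude that some $s\in S$ was addable to $I_h$, contradicting nothing).

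There is also a misconception in your sketch of Step 3. You assert that since $B_0\subseteq C$, ``no $B_0$-element can appear inside $I_M''$ through the recursion,'' and that the slack $|B_0\cap I_M|$ absorbs the loss from overwriting $I_M\setminus C$. Both points are off: the entire purpose of the algorithm (including its recursive calls, whose greedy final step adds elements of $B_0'\supseteq B_0$ to the matroid side) is to move elements of $B_0$ into $I_M$, and the term $-|B_0\cap I_M|$ exists precisely because each such insertion can decrease $r(B_0\mid I_M\setminus R)$ by one --- the paper tracks exactly this: the witness set $S$ shrinks by at most one element per $B_0$-insertion, yielding $|S_k|\ge\epsilon^2|B_0|-|B_0\cap I_M|$ at the end. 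Rewrites of $I_M\setminus C$ by non-$B_0$ elements, on the other hand, are shown \emph{not} to cost anything, via the exchange claim. So the mechanism you would need to supply is quite different from the one you anticipate, and as written the proposal does not constitute a proof of the lemma.
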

\begin{proof}
    We will first prove the statement for
    the time when $A$ was created, but then we need
    to show that it also holds later.
    Let $A_1,A_2,\dotsc,A_{\ell}$ be the sets
    created by the procedure and recall that $A = A_1\cup \cdots \cup A_{\ell - 1}$ and $A_{j} \ge \epsilon B_0$ for all $j < \ell$.
    Thus, there must be some $A_j$, $j < \ell$, with $|R\cap A_j| \ge \epsilon |A_j| \ge \epsilon^2 |B_0|$.
    For $I_M$ at the time of construction it
    holds that
    \begin{equation*}
    r(B_0 \cup I_M\setminus (R \cap A_j)) \ge r(B_0 \cup I_M\setminus A_j) = r(B_0 \cup I_M) \ge |I_M| \ .
    \end{equation*}
    This is because $A_j$ is constructed greedily from elements that do not decrease the rank. Hence,
    \begin{multline*}
        r(B_0 \mid I_M \setminus R) \ge r(B_0 \mid I_M \setminus (R \cap A_j)) \\
        = r(B_0 \cup I_M \setminus (R \cap A_j)) - r(I_M\setminus (R \cap A_j))
        \ge |I_M| - |I_M| + |R \cap A_j| \ge \epsilon^2 |B_0| \ .
    \end{multline*}
    We will now study the effects of $I_M$ changing
    throughout the algorithm.
    Essentially, when an element of $B_0$ is added to $I_M$ then $r(B_0 \mid I_M\setminus R)$ may decrease by $1$, otherwise it does not change.
    Note that we can
    view the changes made by the algorithm (or its recursive calls) to $I_M$ as a sequence of single insertions or deletions.
    More precisely, there exists a sequence of sets
    Let $I_1,I_2,\dotsc,I_{k} \in \cI$,
    where $I_{k}$ is the current state of $I_M$ that we want to analyze,
    $I_1$ is the initial state, and
    $I_{h+1}$ is derived from $I_h$ either by
    deletion or addition of a single element.
    Further, whenever $I_{h+1} = I_{h} + i$ for some $i\notin I_h$, then we know that
    $I_{h} + j \notin \cI$ for all $j\notin I_h$ with $j\prec i$. Finally, once an element of $B_0$ is
    added to some $I_h$, it remains in $I_M$, i.e., it is also in $I_{h+1}, \dotsc, I_{k}$. All of these properties are observations that
    follow easily from the definition of the algorithm.

    Let $1 \le h < k$, $R\subseteq C \subseteq I_h, S\subseteq E\setminus I_h$ such that $|R| \ge |S|$ and $I_h \setminus R \cup S \in \cI$. Assume further that $I_h + s\notin \cI$ for all $s\in S$. Then $I_{h+1} \setminus R \cup S \in \cI$ as well: if $I_{h+1}$ is derived by deletion of an element, this follows immediately. Now assume that $I_{h+1} = I_h + i$.
    Since $I_h + i \in \cI$ and $I_h \setminus R \cup S \in \cI$, by matroid augmentation property there
    exists some $j\in (I_h + i) \setminus (I_h \setminus R\cup S) = R + i$ with $I_h \setminus R \cup S + j \in \cI$. If $j = i$ we are done, otherwise we get a contradiction:
    suppose that $I_h \setminus (R - j) \cup S\in \cI$.
    Then we can apply the matroid augmentation property
    to $I_h$ to find some $s\in S$ with $I_h + s\in \cI$.

    Now consider a subsequence $I_h,I_{h+1},\dotsc,I_{g}$ where no element of $B_0$ is added. Then if $I_h \setminus R\cup S\in \cI$ it follows that $I_g \setminus R\cup S\in \cI$. Notice that as we have shown earlier, for every $R\subseteq A \subseteq C$ with $|R| \ge \epsilon |A|$ there exists some $S_1\subseteq B_0$ such that $|S_1| \ge \epsilon^2 |B_0|$ and $I_1 \setminus R \cup S_1 \in \cI$. Let $I_h$ be the first time that an element of $B_0$ is inserted into $I_M$. Then by previous arguments $I_{h-1} \setminus R \cup S_1 \in \cI$. Since $I_h$ extends $I_{h-1}$ by only one element,
    by matroid augmentation property
    $I_{h} \setminus R \cup S_2\in \cI$
    for some $S_2\subseteq S_1$ with $|S_2| = |S_1| - 1$.
    We can repeat this argument and since only $|I_k \cap B_0|$ many times
    an element from $B_0$ is added, we will finally obtain a set $S_k$
    with $|S_k| = |S_1| - |B_0 \cap I_k| \ge \epsilon^2 |B_0| - |B_0 \cap I_k|$
    and $I_k \setminus R \cup S_k\in \cI$; thus proving the lemma.
    
\end{proof}

\begin{lemma}\label{lem:inv-feasible}
At all times, $I_M$ and $I_P$ are disjoint, $I_M\in \cI$, and $b \cdot I_P\in \PM$.
\end{lemma}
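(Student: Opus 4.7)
The plan is to prove the lemma by induction on the sequence of operations performed by the algorithm, with the induction hypothesis applied recursively to nested calls. In the base case the invariants hold by the input specification (disjointness of $I_M$, $I_P$, $B_0$, matroid independence of $I_M$, and $b \cdot I_P \in \PM$ given as the algorithm's interface). For the inductive step we examine each of the four operations performed in the main loop, plus the trivial greedy branch when $r(B_0 \mid I_M) \ge \epsilon^2 |B_0|$.

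Operations~1 (inserting a single element into $A_I$) and~3 (returning failure) do not modify $I_M$ or $I_P$, so preservation is immediate. For operation~2 (the successful termination), we move $A_I$ from $I_M$ to $I_P$ and then greedily enlarge $I_M$ with elements of $B_0$. Matroid independence of the new $I_M$ is preserved by deletion from an independent set followed by greedy augmentation; the condition $b \cdot I_P \in \PM$ holds because elements were added to $A_I$ only under the guarantee $f(i \mid b \cdot (I_P \cup A_I)) \ge b$, iteratively yielding $b \cdot (I_P \cup A_I) \in \PM$; and disjointness carries over since $A_I \subseteq A \subseteq I_M$ was disjoint from $I_P$, while the $B_0$-elements added to $I_M$ are disjoint from $I_P$ by the input invariant.

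The substantive case is operation~4 (the recursion). By the inductive hypothesis the recursive call returns $I_M'' \in \cI' = \cI / C$ and $I_P''$ with $b \cdot I_P'' \in \PM'$, together with disjointness of $I_M''$, $I_P''$, $B_0'$, where $\PM'$ is defined via $f'(X) = f(X \mid b \cdot (A \cup B))$. The update is $I_M \gets C \cup I_M''$ and $I_P \gets (B \setminus I_M'') \cup I_P''$. Matroid independence of the new $I_M$ follows by unfolding the definition of $\cI / C$: together with the portion of $C$ that was already independent in $\cM$ (which by construction of $C$ consists of elements from the prior $I_M$ together with $B_0$-elements whose matroid contribution is exactly the one captured by the contraction), one obtains a set in $\cI$. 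For the polymatroid condition, the definition of contraction gives that $b \cdot I_P'' \in \PM'$ is equivalent to $b \cdot (I_P'' \cup A \cup B) \in \PM$ once we know $b \cdot (A \cup B) \in \PM$; the latter holds because $2b \cdot A \in \PM$ by the construction of $A$ (Property~1 of addable elements), and $b \cdot B \in \PM$ since $B \subseteq I_P$ and by induction $b \cdot I_P \in \PM$, these two pieces combining via the polymatroid augmentation property. Downward-closedness then yields $b \cdot ((B \setminus I_M'') \cup I_P'') \in \PM$. Disjointness holds because $C \subseteq B_0 \cup I_M$ is disjoint from the prior $I_P \supseteq B$, whence $C \cap (B \setminus I_M'') = \emptyset$, while $I_P'' \subseteq E \setminus C$ by construction of the recursive ground set, and $I_M'' \cap I_P'' = \emptyset$ by induction.

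The main obstacle will be carefully justifying the polymatroid combination in operation~4: one must show that contracting $b \cdot (A \cup B)$ from $\PM$ and then re-adding the leftover elements $(B \setminus I_M'') \cup I_P''$ indeed lands back in $\PM$, for which the exact definition of $f(Y \mid b \cdot X)$ and the decreasing-marginals behaviour (Lemma~\ref{lem:decr-marg}) are essential. Analogous care is required to confirm matroid independence after splicing $C$ back into the solution. Finally, the updating of $B$ at the end of operation~4 affects neither $I_M$ nor $I_P$, so it does not disturb the invariants.
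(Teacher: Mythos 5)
Your handling of operations 1--3 and of the greedy additions matches the paper, and maintaining the stronger invariant $b\cdot(I_P\cup A_I)\in\PM$ is the right idea. The recursion case, however, contains a genuine error. You reduce the polymatroid condition to the claim that $b\cdot(A\cup B)\in\PM$, which you justify by combining $2b\cdot A\in\PM$ with $b\cdot B\in\PM$ ``via the polymatroid augmentation property.'' Polymatroids (like matroids) are not closed under taking the union of two independent (multi)sets, and the augmentation property only lets you increment \emph{some} coordinate of the smaller vector -- it does not let you merge two independent vectors. Worse, the claim is false in exactly the regime where operation 4 is executed: $B$ consists of the elements of $I_P$ whose marginal value drops below $b$ once $b\cdot A$ is reserved, i.e., the elements that \emph{conflict} with $A$ in the polymatroid. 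A toy instance with $f(\{a\})=2b$, $f(\{i_1\})=f(\{i_2\})=b$, $f(\{i_1,i_2\})=f(\{a,i_1,i_2\})=2b$, $A=\{a\}$, $I_P=B=\{i_1,i_2\}$ satisfies $2b\cdot A\in\PM$ and $b\cdot B\in\PM$ but $b\cdot(A\cup B)\notin\PM$. Since your equivalence ``$b\cdot I_P''\in\PM'$ iff $b\cdot(I_P''\cup A\cup B)\in\PM$'' explicitly presupposes $b\cdot(A\cup B)\in\PM$, the whole chain collapses.

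The paper's proof avoids this entirely: it never adds $b\cdot A$ to anything. It starts from $b\cdot(B\cup A_I)\in\PM$ (which does follow from the invariant, since $B\subseteq I_P$ and $A_I$ is carried in the invariant) and then inserts the elements of $I_P''$ one at a time, using that each $i\in I_P''$ has marginal $f(i\mid b\cdot(I_P''\cup A\cup B - i))\ge b$ in the \emph{contracted} polymatroid, together with the decreasing-marginals lemma (Lemma~\ref{lem:decr-marg}) to transfer this bound to the smaller conditioning set actually built up. This yields $b\cdot(A_I\cup B\cup I_P'')\in\PM$, and downward closure gives the invariant for the updated $I_P=(B\setminus I_M'')\cup I_P''$. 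Note also that re-establishing only $b\cdot I_P\in\PM$ after the recursion, as your writeup does, would not suffice for your own operation-2 argument: you need $b\cdot(I_P\cup A_I)\in\PM$ with the \emph{new} $I_P$, which is exactly why the paper carries $A_I$ through the recursion step. To repair your proof you should replace the $b\cdot(A\cup B)$ step with this incremental marginal argument anchored at $b\cdot(B\cup A_I)$.
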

\begin{proof}
The only modifications to $I_M$
are $I_M \gets C \cup I_M''$ through recursion,
where $I_M''$ is independent in the contracted matroid $\cM \slash C$ and greedy additions of elements before terminating. Both of these
operations clearly maintain $I_M\in \cI$.

For $I_P$, we will argue the stronger statement that
at all times $b \cdot (I_P \cup A_I) \in \PM$, which also implies
that the operation of adding elements from $A_I$ to $I_P$ will maintain $I_P \in \PM$.
The property that $b \cdot (I_P \cup A_I) \in \PM$ is by definition of the algorithm maintained when adding elements to $A_I$.
Consider now the operation $I_P \gets (B\setminus I_M'') \cup I_P''$ performed by the recursion, where
each element $i\in I_P''$ satisfies $f(i\mid b \cdot (I_P''\cup A \cup B - i)) \ge b$. We assume that before the recursive call we have
$b \cdot (I_P \cup A_I) \in \PM$ and, in particular, $b \cdot (B \cup A_I) \in \PM$.
Thus, because of the marginal values of all elements in $I''_P$, it follows immediately that
$b \cdot (A_I \cup B \cup I_P'') \in \PM$ and, in particular,
$b \cdot (A_I \cup (B\setminus I_M'')  \cup I_P'') \in \PM$.
Note that $(B\setminus I_M'')  \cup I_P''$ is equal to $I_P$ after the
recursion.
Since these are the only places where $I_P$ is changed, this completes the proof.
\end{proof}

\begin{lemma}\label{lem:recursive-feasible}
    The input $E', B'_0, I_M', I_P', \cM', \PM'$ created for the recursion is feasible.
    More concretely,
    \begin{enumerate}
        \item $B'_0, I_M', I_P' \subseteq E'$ are disjoint,
        \item $I_M'\in \cI'$, and
        \item $b \cdot I_P'\in \PM'$.
    \end{enumerate}
\end{lemma}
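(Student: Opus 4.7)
My plan is to verify the three conditions in sequence, leaning on the invariants already established, the polymatroid and matroid contraction identities introduced in the notation preamble, and Lemmas~\ref{lem:inv-feasible},~\ref{lem:less-d},~\ref{lem:decr-marg}. The disjointness condition is essentially set-theoretic: $I_M' = I_M \setminus C$ and $I_P' = I_P \setminus B$ inherit disjointness from the invariant $I_M \cap I_P = \emptyset$; moreover $B \subseteq I_P$ is disjoint from both, and $B \cap C = \emptyset$ because the $B_0$-part of $C$ is disjoint from $I_P$ and the $I_M$-part of $C$ is too. Thus $B, I_M', I_P' \subseteq E' = E \setminus C$ and all three are pairwise disjoint.

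For matroid independence $I_M' \in \cI'$, we must show $r(I_M \cup C) - r(C) = |I_M \setminus C|$. The upper bound is immediate from the submodular inequality $r(I_M) + r(C) \ge r(I_M \cup C) + r(I_M \cap C)$ together with $r(I_M) = |I_M|$ and $r(I_M \cap C) = |I_M \cap C|$ (using $I_M \in \cI$). The matching lower bound is the crux. Equivalently, we must find a basis $B_C$ of $C$ that extends the independent set $I_M \cap C = A \cup S$ using elements of $C \setminus I_M = B_0 \cap C$ so that $B_C \cup (I_M \setminus C) = I_M \cup (B_C \setminus I_M)$ remains independent in $\cM$. Here the key inputs are the hard-case assumption $r(B_0 \mid I_M) < \epsilon^2 |B_0|$ together with the layered construction of $A_1, \dots, A_{\ell-1}$: each $i \in A_j$ was chosen precisely because $r(i \mid B_0 \cup I_M \setminus A_j - i) = 0$, which collapses the rank contribution of $A$ over $B_0 \cup (I_M \setminus A)$. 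Combined with $S \subseteq I_M$ contributing no extra rank, these yield $r(B_0 \mid A \cup S) = r(B_0 \mid I_M)$, which is exactly the content of $I_M \setminus C \in \cI'$.

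For polymatroid feasibility $b \cdot I_P' \in \PM'$, using the contraction identity $f'(S) = f(S \mid b \cdot (A \cup B))$, the claim is equivalent to $b \cdot (I_P \cup A) \in \PM$ in the capped variant where entries on $A \cup B$ are bounded by $b$. We have $b \cdot I_P \in \PM$ from Lemma~\ref{lem:inv-feasible}, the property $2b \cdot A \in \PM$ from Property~1 of the addable elements, and by definition of $B$ every $i \in I_P \setminus B$ admits $f(i \mid b \cdot (I_P \cup A - i)) \ge b$. Lemma~\ref{lem:less-d} then ensures that $B$ captures precisely the elements whose marginals fall below $b$ with respect to $I_P \cup A$; after removing $B$ from $I_P$ and reserving $A \cup B$ via the contraction, the polymatroid augmentation property lets us build $b \cdot (I_P \setminus B)$ inside $\PM'$ one coordinate at a time.

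The main obstacle is the second part, establishing matroid independence. General matroid contraction does not preserve independence of subsets, so one cannot simply rely on $I_M \setminus C \subseteq I_M \in \cI$. The proof must leverage both the greedy-layered construction of $A$ and the hard-case rank bound on $B_0$ to show that no new dependency is created when $C$ is contracted, and in particular that $B_0$ contributes no extra rank over $I_M \cap C$ beyond what it contributes over the entire $I_M$; extracting this cleanly from the construction is the technical heart of the argument.
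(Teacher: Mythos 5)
Parts (1) and (3) of your outline are essentially sound: the disjointness checks are the right ones, and your argument for $b \cdot I_P' \in \PM'$ is the paper's argument (for $X \subseteq I_P'$ and $i \in I_P' \setminus X$ one has $f'(i \mid b\cdot X) = f(i \mid b\cdot(X \cup A \cup B)) \ge f(i \mid b\cdot(I_P \cup A - i)) \ge b$ by Lemma~\ref{lem:decr-marg} and $i \notin B$, so the elements of $I_P'$ can be added one at a time). The genuine gap is in part (2). You reduce $I_M' \in \cI'$ to the identity $r(B_0 \mid C \cap I_M) = r(B_0 \mid I_M)$ and then merely assert that the layered construction of $A$ ``yields'' it, calling its extraction the technical heart; you never prove it. It also does not follow from the construction: the only constraints on the elements of $I_M \setminus C$ are polymatroid constraints ($f(i \mid 2b\cdot A) \ge 2b$), not rank constraints, and in fact the discarded last layer $A_\ell$ consists of elements that end up in $I_M \setminus C$ while each being spanned by $B_0 \cup I_M \setminus A_\ell$ --- exactly the configuration under which $r(B_0 \mid C \cap I_M)$ can exceed $r(B_0 \mid I_M)$. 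So as written, the step you identify as the crux is both unproven and unreliable.

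The root of the problem is your reading of $C$. If $B_0$ were really removed from the ground set by the contraction, then $E' = E \setminus C$ would not contain $B_0$ and condition (1) of the lemma ($B_0' = B_0 \cup B \subseteq E'$) would already fail; likewise Lemma~\ref{lem:addable4} would be vacuous and the certificate requirement $Z_1 \subseteq E \setminus B_0$ with $Z_1 = C \cup B$ would be violated. Throughout the analysis the set contracted from the matroid is $C \cap I_M = A \cup \{i \in I_M \setminus A : f(i \mid 2b\cdot A) < 2b\}$, which is an \emph{independent subset of $I_M$}. That makes part (2) immediate: $C \cap I_M$ is a basis of itself, and $(I_M \setminus C) \cup (C \cap I_M) = I_M \in \cI$, hence $I_M \setminus C$ is independent in the contracted matroid with no appeal to $B_0$, to the hard-case bound $r(B_0 \mid I_M) < \epsilon^2|B_0|$, or to the layering. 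This is why the paper declares only part (3) non-obvious. Your general caution that contraction does not preserve independence of arbitrary subsets is correct, but here the contracted set lies inside the independent set $I_M$, which is precisely what trivializes the step.
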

\begin{proof}
 The only non-obvious statement is $b \cdot I_P' \in \PM'$. Here, notice that $f'(X) = f(X \mid b \cdot (A\cup B))$ and for each $i\in I'_P$ we have $f(i \mid b \cdot (A\cup I_P - i) \ge b$, since $i\notin B$.
 Starting with $X = \emptyset$ and adding each element of $I'_P$ one at a time, it is easy to see that the marginal values $f'(i \mid b\cdot X)$ are always least $b$.
\end{proof}

\begin{lemma}\label{lem:ABmarg}
    For all $i\in A\cup B\setminus A_I$ we have
    $f(i \mid b \cdot (A\cup B - i)) < b$.
\end{lemma}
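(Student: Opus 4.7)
The goal is to show $f(i \mid b \cdot (A \cup B - i)) < b$ for all $i \in A \cup B \setminus A_I$. The natural first attempt---using the defining property of $B$ (namely $f(i \mid b \cdot (I_P \cup A - i)) < b$ for $i \in B$) together with the invariant $f(i \mid b \cdot (I_P \cup A_I)) < b$ for $i \in A \setminus A_I$ (which holds because otherwise operation~1 would apply)---gives the inequality in the wrong direction. The reason is that $A \cup B - i$ is a \emph{subset} of $I_P \cup A - i$, and by Lemma~\ref{lem:decr-marg} marginals with respect to a smaller conditioning set are only larger, so knowing $f(i \mid b \cdot (I_P \cup A - i)) < b$ says nothing about $f(i \mid b \cdot (A \cup B - i))$.

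The plan is to insert an intermediate set produced by Lemma~\ref{lem:less-d}. Apply that lemma to $f$ with $X := I_P \cup A$ and $h := b$, yielding $Y := \{i \in X : f(i \mid b \cdot (X - i)) < b\}$ with the property that $f(i \mid b \cdot (Y - i)) < b$ iff $i \in Y$. Then establish the sandwich $(A \setminus A_I) \cup B \subseteq Y \subseteq A \cup B$. Once both containments are in place, for any $i \in (A \setminus A_I) \cup B$ Lemma~\ref{lem:less-d} gives $f(i \mid b \cdot (Y - i)) < b$, and the containment $Y - i \subseteq A \cup B - i$ combined with Lemma~\ref{lem:decr-marg} yields $f(i \mid b \cdot (A \cup B - i)) \le f(i \mid b \cdot (Y - i)) < b$, which is the claim (noting $A_I \subseteq A$, hence $A \cup B \setminus A_I = (A \setminus A_I) \cup B$).

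The lower containment $(A \setminus A_I) \cup B \subseteq Y$ is where the two invariants from the first paragraph are used. Membership of $B$ in $Y$ is immediate from the definition of $B$. For $i \in A \setminus A_I$, since $A_I \subseteq A - i$ we have $I_P \cup A_I \subseteq I_P \cup A - i$, and Lemma~\ref{lem:decr-marg} gives $f(i \mid b \cdot (I_P \cup A - i)) \le f(i \mid b \cdot (I_P \cup A_I)) < b$, so $i \in Y$.

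The upper containment $Y \subseteq A \cup B$ is the key observation making the whole argument work. Since $Y \subseteq X = I_P \cup A$, it suffices to verify $Y \cap I_P \subseteq B$, and this holds literally by definition: any $i \in Y \cap I_P$ satisfies exactly the condition $f(i \mid b \cdot (I_P \cup A - i)) < b$ that characterizes $B$. I expect this identification to be the main conceptual step---choosing the larger ambient set $X = I_P \cup A$ rather than $A \cup B$ itself, so that Lemma~\ref{lem:less-d} hands back a $Y$ that happens to be sandwiched inside $A \cup B$ yet still contains $(A \setminus A_I) \cup B$; a direct application of Lemma~\ref{lem:less-d} to $A \cup B$ would offer no invariant with which to seed membership in the resulting set.
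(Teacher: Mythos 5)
Your proof is correct and follows essentially the same route as the paper: both define the intermediate set $Y=\{i\in I_P\cup A: f(i\mid b\cdot(I_P\cup A-i))<b\}$, sandwich it via $(A\setminus A_I)\cup B\subseteq Y\subseteq A\cup B$, and conclude with Lemma~\ref{lem:less-d} followed by Lemma~\ref{lem:decr-marg}. Your handling of the $A\setminus A_I$ case via the invariant $f(i\mid b\cdot(I_P\cup A_I))<b$ is in fact slightly more careful than the paper's, which asserts the stronger bound $f(i\mid b\cdot I_P)<b$.
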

\begin{proof}
    Consider the time $B$ was last updated.
    By definition we have
    $f(i\mid b\cdot (A\cup I_P - i)) < b$ if and only if $i\in B$ for all $i\in I_P$. Further,
    for every $i\in A\setminus A_I$ we have $f(i\mid b \cdot (A \cup I_P - i)) \le f(i\mid b \cdot I_P) < b$.

    Let $X = \{i\in A \cup I_P : f(i \mid b\cdot (A \cup I_P - i) < b\}$.
    Then by the previous observations, $A \cup B \setminus A_I \subseteq X \subseteq A\cup B$.
    By Lemma~\ref{lem:less-d} it follows that
    $f(i \mid b \cdot (A\cup B - i)) \le f(i \mid b \cdot (X - i)) < b$ for all $i\in A\cup B\setminus A_I$.
\end{proof}
We will now prove that a recursion significantly decreases the number of blocking elements.
\begin{lemma}\label{lem:decr-B}
Let $B'$ be the set of blocking elements after a recursion has returned and assume that the algorithm
did not immediately terminate. Denote
by $B$ the blocking elements before the recursion. Then $B'\subseteq B$.
Moreover, $|B'| \le (1 - \epsilon^2) |B|$.
\end{lemma}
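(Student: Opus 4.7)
The plan is to establish the two claims separately: first the containment $B'\subseteq B$, and then the cardinality bound $|B'|\le (1-\epsilon^2)|B|$.

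For the containment, I would decompose the post-recursion $I_P$ as $(B\setminus I_M'')\cup I_P''$. Elements of $B\setminus I_M''$ trivially lie in $B$, so the only task is to show that no element of $I_P''$ is blocking under the updated $I_P$. Here I would invoke the invariant maintained across the recursive call, and explicitly extracted in the proof of Lemma~\ref{lem:inv-feasible}, that every $i\in I_P''$ satisfies
\[
    f(i\mid b\cdot(I_P''\cup A\cup B - i))\ge b.
\]
Since the conditioning set that matters for computing the new $B'$, namely $((B\setminus I_M'')\cup I_P''\cup A) - i$, is a subset of $(A\cup B\cup I_P'') - i$, Lemma~\ref{lem:decr-marg} implies that shrinking the conditioning set only increases the marginal. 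Hence $f(i\mid b\cdot(\text{new }I_P\cup A - i))\ge b$, so $i\notin B'$. Together this yields $B'\subseteq B\setminus I_M''\subseteq B$.

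For the cardinality bound, I would combine the recursive call's success specification with the non-termination assumption. Since $B_0$ and $B$ are disjoint inside $B_0'=B_0\cup B$, the recursion's output guarantee gives
\[
    |I_M''\cap B_0|+|I_M''\cap B|\;=\;|I_M''\cap B_0'|\;\ge\;\epsilon^2(|B_0|+|B|),
\]
whereas the fact that the algorithm did not terminate at the successful-exit test forces $|I_M''\cap B_0|<\epsilon^2|B_0|$. Subtracting yields $|I_M''\cap B|>\epsilon^2|B|$, and combining with the containment $B'\subseteq B\setminus I_M''$ from the first step produces
\[
    |B'|\;\le\;|B|-|I_M''\cap B|\;<\;(1-\epsilon^2)|B|,
\]
as required.

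The main conceptual step is the first one: correctly identifying the invariant on $I_P''$ that survives the recursion (the marginal lower bound formulated in terms of the \emph{original} submodular function $f$ with $A\cup B$ included in the conditioning) and then transporting that inequality to the smaller conditioning set relevant to the new $B'$ via Lemma~\ref{lem:decr-marg}. Once this is in place, the counting part reduces to straightforward arithmetic using the success specification of the recursive call.
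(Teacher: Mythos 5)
Your proposal is correct and follows essentially the same route as the paper: the containment $B'\subseteq B$ comes from the marginal invariant $f(i\mid b\cdot(I_P''\cup A\cup B - i))\ge b$ for $i\in I_P''$ (inherited from the recursion's polymatroid $f'(X)=f(X\mid b\cdot(A\cup B))$) together with monotonicity of marginals under shrinking the conditioning set, and the cardinality bound comes from combining the recursion's guarantee $|I_M''\cap(B_0\cup B)|\ge\epsilon^2(|B_0|+|B|)$ with the non-termination condition $|I_M''\cap B_0|<\epsilon^2|B_0|$ to get $|I_M''\cap B|>\epsilon^2|B|$. The paper's proof is just a terser version of the same argument.
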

\begin{proof}
    Let $i\in I_P'\setminus B$, where again the prime denotes the state after the recursion. Then $f(i \mid b \cdot (A \cup I_P' - i)) \ge b$ (using the definition of the submodular function $f'$ of the recursion).
    Therefore, $i\notin B'$.
    Since $|I_M' \cap B| \ge \epsilon^2 |B|$ and such elements will not appear in $I_P'$ it follows immediately that
    $|B'| \le (1 - \epsilon^2) |B|$.
\end{proof}

\begin{lemma}\label{lem:size-B}
    At all times we have
    $|B| > (1 - 2\epsilon)|A|$
\end{lemma}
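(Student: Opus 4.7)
The plan is to sandwich the quantity $f((A\setminus A_I)\cup B)$ between a lower bound coming from Property~1 of the addable set $A$ and an upper bound coming from Lemma~\ref{lem:ABmarg} fed into Lemma~\ref{lem:large-subm-value}.

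For the lower bound, I will use $2b\cdot A\in\PM$ (Property~1 of $A$). Downward-closedness of the polymatroid then yields $2b\cdot(A\setminus A_I)\in\PM$, and hence $f(A\setminus A_I)\ge 2b\,|A\setminus A_I|$. At any time the main loop evaluates its case split, operation~2 has not fired, so $|A_I|<\epsilon|A|$; combined with $A_I\subseteq A$ this gives $|A\setminus A_I|>(1-\epsilon)|A|$, and therefore $f(A\setminus A_I)>2b(1-\epsilon)|A|$. Monotonicity of $f$ lifts this to $f((A\setminus A_I)\cup B)>2b(1-\epsilon)|A|$.

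For the upper bound, I will apply Lemma~\ref{lem:large-subm-value} with $g=f$, $h=b$, $X=A\cup B$, and $X'=(A\cup B)\setminus A_I=(A\setminus A_I)\cup B$. The hypothesis $f(i\mid b\cdot (X-i))<b$ for every $i\in X'$ is exactly the content of Lemma~\ref{lem:ABmarg}. Note that $A$ and $B$ are disjoint, since $A\subseteq I_M$ and $B\subseteq I_P$ are disjoint by Lemma~\ref{lem:inv-feasible}, so $|X|=|A|+|B|$. The lemma then delivers $f((A\setminus A_I)\cup B)<b(|A|+|B|)$, provided $X'\neq\emptyset$; if instead $X'=\emptyset$ then $B=\emptyset$ and $A_I=A$, but then $|A_I|=|A|\ge\epsilon|A|$ would already have triggered operation~2, contradicting the fact that we are examining the state between operations.

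Chaining the two bounds yields $2b(1-\epsilon)|A|<b(|A|+|B|)$, which rearranges to $|B|>(1-2\epsilon)|A|$, as desired. The only conceptual obstacle is identifying the correct witness set $X'$ so that Lemma~\ref{lem:large-subm-value} applies cleanly and the resulting upper bound lines up with the polymatroid-based lower bound; after that, only elementary arithmetic and the invariant $|A_I|<\epsilon|A|$ maintained by the main loop are needed.
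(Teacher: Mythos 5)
Your proof is correct and takes essentially the same route as the paper: both sandwich $f((A\setminus A_I)\cup B)$ between the lower bound $2b\,|A\setminus A_I|\ge(2-2\epsilon)b\,|A|$ from $2b\cdot A\in\PM$ and the upper bound $b\,|A\cup B|$ from Lemma~\ref{lem:ABmarg} combined with Lemma~\ref{lem:large-subm-value}, then rearrange. Your explicit treatment of the degenerate case $X'=\emptyset$ is a small bonus the paper omits.
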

\begin{proof}
    Recall that for all $i\in A\cup B$ we have $f(i \mid b \cdot (A\cup B - i)) < b$ unless $i\in A_I$, see Lemma~\ref{lem:ABmarg}.
    Thus, from Lemma~\ref{lem:large-subm-value} it follows that $f((A\setminus A_I) \cup B) < b |A\cup B|$.
    Furthermore, $2b\cdot A\in \PM$ and $|A_I| < \epsilon |A|$, which implies that $f((A\setminus A_I) \cup B) \ge f(A\setminus A_I) \ge 2b |A\setminus A_I| \ge (2 - 2\epsilon) b |A|$.
    Putting both inequalities together, we get
    $|A\cup B| > (2 - 2\epsilon) |A|$, which simplifies to
    $|B| > (1 - 2\epsilon)|A|$.
\end{proof}

\subsection{Termination with failure}
In the case that our algorithm returns failure, we need to prove that there
does not exist a (slightly stronger) solution.
This proof comes in the form of a certificate that we will define here.
\begin{definition}{\rm
 A \emph{certificate of infeasibility} consists of two (possibly empty) sets $Z_2 \subseteq Z_1 \subseteq E\setminus B_0$ with
\begin{enumerate}
    \item $r(B_0 \mid Z_1) < 2 \epsilon |B_0|$,
    \item $r(Z_1) \le |Z_1| - (0.5 - 2\epsilon) |Z_2| + \epsilon |B_0|$,
    \item $f(i \mid b \cdot (Z_2 - i)) < b$ for at least a $(1 - \epsilon)$ fraction of elements in $Z_2$,\label{eq:margins}
    \item $f(i \mid 2b \cdot Z_2) < 2b$ for all $i\in Z_1 \setminus Z_2$
\end{enumerate}}
\end{definition}
The intuition for the certificate is that the first property states that a significant amount of $Z_1$ cannot be covered by the polymatroid if we want to cover many elements of $B_0$.
However, the other properties imply that this is not possible.
\begin{lemma}
    If there exists a certificate of infeasibility then there cannot be
    two sets $I_M^* \cup I_P^* \supseteq E\setminus B_0$ with $I_M^*\in \cI$, $\alpha b \cdot I_P^* \in \PM$,
    and $|B_0 \cap I_M^*| \ge 3\epsilon |B_0|$, where
    $\alpha = 4 + \cO(\epsilon)$.
\end{lemma}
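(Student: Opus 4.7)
The plan is to suppose toward contradiction that such $I_M^*, I_P^*$ exist, set $T := I_P^* \cap Z_1$, and derive a matroid lower bound and a polymatroid upper bound on $|T|$ that clash once $\alpha = 4 + \cO(\epsilon)$. Conceptually this mirrors the miniature argument from the introduction; the jump from the miniature's constant~$3$ to the present constant $4 + \cO(\epsilon)$ is the price paid for weakening the bound on $f$ from the full set $X$ to a well-behaved subset $Z_2'$ of $Z_2$, and for replacing the clean marginal inequality $f(i \mid X) < b$ by the truncated version $f(i \mid 2b\cdot Z_2) < 2b$ of Property~4.

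\textbf{Matroid lower bound on $|T|$.} Write $M_0 = I_M^* \cap B_0$ and $M_1 = I_M^* \cap Z_1$. Since $M_0 \cup M_1 \subseteq I_M^* \in \cI$ and $Z_1, B_0$ are disjoint, submodularity of the rank function together with Property~1 yields
\[
 |M_0| + |M_1| \le r(Z_1 \cup B_0) \le r(Z_1) + r(B_0\mid Z_1) < r(Z_1) + 2\epsilon|B_0|.
\]
Substituting $|M_0|\ge 3\epsilon|B_0|$ and Property~2 then gives $|M_1| < |Z_1| - (\tfrac12 - 2\epsilon)|Z_2|$. Because $I_M^* \cup I_P^* \supseteq Z_1$, the remainder of $Z_1$ must lie in $I_P^*$, so $|T| > (\tfrac12 - 2\epsilon)|Z_2|$.

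\textbf{Polymatroid upper bound on $|T|$.} Split $T = T_1 \cup T_2$ with $T_1 = T \cap Z_2$ and $T_2 = T \cap (Z_1 \setminus Z_2)$, and let $Z_2' := \{i \in Z_2 : f(i \mid b\cdot (Z_2 - i)) < b\}$, which by Property~3 satisfies $|Z_2'|\ge (1-\epsilon)|Z_2|$. Lemma~\ref{lem:less-d} propagates the marginal condition from $Z_2$ to $Z_2'$ itself, and Lemma~\ref{lem:large-subm-value} then gives $f(Z_2') < b|Z_2'|$. Instantiating the polymatroid inequality $\alpha b\cdot I_P^* \in \PM$ at $S = Z_2'$ yields $|T \cap Z_2'| < |Z_2'|/\alpha$, hence $|T_1| < |Z_2|/\alpha + \epsilon|Z_2|$. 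For a second bound, truncate $\alpha b\cdot T$ to a vector $z'$ with $z'(i) = 2b$ on $T_1$, $z'(i) = \alpha b$ on $T_2$, and $z'(i) = 0$ elsewhere; then $z' \le \alpha b\cdot T \in \PM$ and $z'(e)\le 2b$ on $Z_2$, placing $z'$ in the truncated polymatroid $\PM_1 := \{y\in\PM : y(e)\le 2b\ \forall e\in Z_2\}$ with submodular function $f_1$. Evaluating $z'$ at $S = Z_2 \cup T_2$, Property~4 (which is precisely $f_1(i\mid Z_2) < 2b$ for $i\in Z_1\setminus Z_2$) together with the crude estimate $f_1(Z_2) \le f_1(Z_2') + f_1(Z_2\setminus Z_2') < b|Z_2| + 2b\epsilon|Z_2|$ yields
\[
 2|T_1| + (\alpha-2)|T_2| < (1 + 2\epsilon)|Z_2|.
\]

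\textbf{Finishing; main obstacle.} Maximizing $|T_1| + |T_2|$ subject to the two derived constraints is a two-variable linear program whose optimum, for $\alpha > 4$, is attained at the boundary $|T_1| = |Z_2|/\alpha + \epsilon|Z_2|$; a short calculation then gives $|T| < (2/\alpha + \epsilon)|Z_2|$. This contradicts the matroid lower bound $|T| > (\tfrac12 - 2\epsilon)|Z_2|$ whenever $2/\alpha + \epsilon < \tfrac12 - 2\epsilon$, i.e.\ whenever $\alpha \ge 4/(1 - 6\epsilon) = 4 + \cO(\epsilon)$. The central technical point, and the main obstacle, is controlling $f_1(Z_2)$ in the presence of the bad set $Z_2\setminus Z_2'$, on which the original $f$ is uncontrolled by the certificate: passing to the truncated polymatroid $\PM_1$ caps each bad coordinate at $2b$, keeping the overall $f_1$-error to $\cO(\epsilon b|Z_2|)$. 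It is exactly this truncation that couples the $2b$ in Property~4 to the factor~$4$ in the conclusion and prevents us from matching the constant~$3$ of the miniature.
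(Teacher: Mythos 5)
Your proof is correct and follows essentially the same route as the paper: both use Lemmas~\ref{lem:less-d} and~\ref{lem:large-subm-value} together with certificate Properties~3 and~4 to bound $|Z_1 \cap I_P^*|$ by roughly $(2/\alpha + \cO(\epsilon))|Z_2|$, and then Properties~1 and~2 to show $I_M^*$ cannot cover $3\epsilon|B_0|$ elements of $B_0$. The only difference is bookkeeping — you split $T$ into $T_1, T_2$ and solve a two-variable LP where the paper folds everything into a single inequality chain on $I_P^* \cap (Z_1\setminus Z_2')$ — and the resulting threshold $\alpha \ge 4/(1-6\epsilon)$ matches the paper's.
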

\begin{proof}
    Let $I_M^* \cup I_P^* \supseteq E\setminus B_0$ with $I_M^*\in \cI$, $\alpha b \cdot I_P^* \in \PM$.
Let $Z'_2 \subseteq Z_2$ be the elements with $f(i\mid b\cdot (Z_2 - i)) \ge b$.
Then by~\eqref{eq:margins} we have $|Z'_2| \le \epsilon \cdot |Z_2|$ and from Lemma~\ref{lem:large-subm-value} it follows that $f(Z_2 \setminus Z'_2) \le b \cdot |Z_2|$.
We first bound
\begin{align*}
    \alpha b |I_P^* \cap (Z_1 \setminus Z_2')|
    &\le f(I_P^* \cap (Z_1 \setminus Z_2')) \\
    &\le f(Z_2 \setminus Z_2') + f(I^*_P \cap (Z_1 \setminus Z_2) \mid Z_2 \setminus Z_2') \\
    &\le f(Z_2 \setminus Z_2') + f(I^*_P \cap (Z_1 \setminus Z_2) \mid 2b \cdot (Z_2 \setminus Z_2')) \\
    &\le f(Z_2 \setminus Z_2') + 2 b |Z'_2| + f(I^*_P \cap (Z_1 \setminus Z_2) \mid 2b \cdot Z_2) \\
    &\le f(Z_2 \setminus Z_2') + 2 b |Z'_2| + \sum_{i\in I^*_P \cap (Z_1 \setminus Z_2)} f(i \mid 2b \cdot Z_2) \\
    &\le f(Z_2 \setminus Z_2') + 2 b |Z'_2| + \sum_{i\in I^*_P \cap (Z_1 \setminus Z'_2)} f(i \mid 2b \cdot Z_2) \ .
\end{align*}
From this it follows that
\begin{equation*}
  b (\alpha - 2) |(Z_1 \setminus Z'_2) \cap I_P^*| \le \sum_{i\in I_P^* \cap (Z_1 \setminus Z'_2)} (\alpha b - f(i \mid 2b \cdot (Z_2 - i))) \le 2\epsilon  b |Z_2| + f(Z_2 \setminus Z'_2) \le (1 + 2\epsilon) b |Z_2| \ .
\end{equation*}
Consequently, $|(Z_1 \setminus Z'_2) \cap I_P^*| \le (1 + 2\epsilon) / (\alpha - 2) \cdot |Z_2|$.
Thus,
\begin{equation*}
    |Z_1 \cap I_M^*| \ge |Z_1| - |(Z_1 \setminus Z'_2)\cap I_P^*| - |Z'_2| \ge |Z_1| - \frac{1 + 2\epsilon + (\alpha - 2)\epsilon}{\alpha - 2} |Z_2| \ge r(Z_1) - \epsilon |B_0| \ ,
\end{equation*}
where the last inequality holds because of Property~(2) for $\alpha = 4 + \cO(\epsilon)$ and $\epsilon$ sufficiently small.
We conclude
\begin{multline*}
    |B_0 \cap I_M^*| \le r(B_0 \mid I_M^* \cap Z_1) = r(B_0 \cup (I_M^* \cap Z_1)) - r(I_M^* \cap Z_1) \\
    \le r(B_0 \cup Z_1) - r(Z_1) + \epsilon |B_0| = r(B_0 \mid Z_1) + \epsilon |B_0| < 3\epsilon |B_0| \ . \qedhere
\end{multline*}
\end{proof}
Next we will prove that whenever the algorithm returns failure, there exists
such a certificate.

\begin{lemma}
    If $|B| < \epsilon |B_0|$ then there exists a certificate that
    proves infeasibility.
\end{lemma}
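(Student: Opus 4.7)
The plan is to build the certificate by taking
\[
Z_1 = C \cup B \qquad \text{and} \qquad Z_2 = A \cup B,
\]
and to verify each of the four required properties using the invariants already established for $A$, $C$, $B$, and $A_I$. The ambient containments $Z_2 \subseteq Z_1 \subseteq E \setminus B_0$ are immediate: $A \subseteq C$ gives the first, and the fact that $A, C \subseteq I_M$ and $B \subseteq I_P$, together with the assumption $I_M \cup I_P \cup B_0 = E$ (a disjoint partition), gives the second.

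For Property~1, I would use submodularity of the matroid rank function, which yields $r(B_0 \mid Z_1) \le r(B_0 \mid C)$ since $C \subseteq Z_1$; Lemma~\ref{lem:addable4} then supplies the bound $2\epsilon |B_0|$. For Property~4, the only elements of $Z_1 \setminus Z_2$ are those in $C \setminus A$, and the construction of $C$ guarantees $f(i \mid 2b \cdot A) < 2b$ for each such $i$. Enlarging the conditioning set from $A$ to $Z_2 = A \cup B$ can only reduce the marginal by Lemma~\ref{lem:decr-marg}, so $f(i \mid 2b \cdot Z_2) < 2b$. For Property~3, Lemma~\ref{lem:ABmarg} provides the marginal bound $f(i \mid b \cdot (Z_2 - i)) < b$ for every $i \in (A \cup B) \setminus A_I$; and because the algorithm only enters the failure branch after the ``terminate successfully'' branch has failed, we have $|A_I| < \epsilon |A| \le \epsilon |Z_2|$, so at least a $(1-\epsilon)$ fraction of $Z_2$ meets the required condition.

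The main obstacle, and the only place the failure condition $|B| < \epsilon |B_0|$ is actually invoked, is Property~2. Since $C \subseteq I_M \in \cI$ is independent we have $r(C) = |C|$, hence $r(Z_1) = r(C \cup B) \le |C| + r(B \mid C) \le |C| + |B| = |Z_1|$, which reduces Property~2 to the purely cardinality inequality
\[
(0.5 - 2\epsilon)(|A| + |B|) \le \epsilon |B_0|.
\]
To close this, I would combine the failure condition $|B| < \epsilon |B_0|$ with Lemma~\ref{lem:size-B}, which gives $|A| < |B|/(1 - 2\epsilon)$, to obtain $|A| + |B| < \epsilon |B_0| \cdot (2 - 2\epsilon)/(1 - 2\epsilon)$. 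A short constant-chasing computation then reduces everything to checking $(0.5 - 2\epsilon)(2 - 2\epsilon) \le 1 - 2\epsilon$, which simplifies to $4\epsilon^2 \le 3\epsilon$ and therefore holds for all sufficiently small $\epsilon$. The tightness of the $(1 - 2\epsilon)$ coefficient in Lemma~\ref{lem:size-B} is precisely what makes this work: it forces $|A|$ to be only marginally larger than $|B|$, so once $|B|$ is much smaller than $|B_0|$ the whole of $A \cup B$ is small enough to absorb into the additive $\epsilon |B_0|$ slack of the certificate.
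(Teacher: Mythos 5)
Your proposal is correct and uses the same certificate as the paper: $Z_1 = C\cup B$, $Z_2 = A\cup B$, with Property~1 from Lemma~\ref{lem:addable4}, Property~3 from the marginal bound on $(A\cup B)\setminus A_I$ together with $|A_I|<\epsilon|A|$, and Property~4 from the definition of $C$ plus Lemma~\ref{lem:decr-marg}. The only place you diverge is Property~2: the paper keeps the savings $r(Z_1)\le |C| + r(B) = |Z_1| - |B| + r(B)$ and then splits $|B|$ into two halves (one absorbed directly, one converted into $|A|$ via Lemma~\ref{lem:size-B}), whereas you discard that savings entirely via the crude bound $r(Z_1)\le|Z_1|$ and instead show the whole correction term $(0.5-2\epsilon)|Z_2|$ fits under $\epsilon|B_0|$ by bounding $|A|+|B| < \epsilon|B_0|\cdot(2-2\epsilon)/(1-2\epsilon)$. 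Both routes hinge on exactly the same two facts ($|B|<\epsilon|B_0|$ and Lemma~\ref{lem:size-B}); yours is arguably cleaner and lands exactly on the $\epsilon|B_0|$ additive slack in the certificate definition, while the paper's chain as written only reaches $2\epsilon|B_0|$.
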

\begin{proof}
We set $Z_2 = A \cup B$ and $Z_1 = C \cup B$.
Then by Lemma~\ref{lem:addable4} we have
\begin{equation*}
    r(B_0 \mid Z_1) \le r(B_0 \mid C) < 2\epsilon |B_0| \ .
\end{equation*}
Further,
\begin{multline*}
    r(Z_1) \le r(C) + r(B)
    = |C| + \epsilon |B_0|
    \le |Z_1| + \epsilon |B_0| - 0.5 |B| - 0.5 |B| \\
    \le |Z_1| + \epsilon |B_0| - 0.5 |B| - (0.5 - \epsilon) |A| \le |C_1| - (0.5 - 2\epsilon) |Z_2| + 2\epsilon |B_0| \ .
\end{multline*}
Here we use that $|B| \ge (1 - 2\epsilon) |A|$, see Lemma~\ref{lem:size-B}.
It follows from Lemma~\ref{lem:large-subm-value} and Lemma~\ref{lem:decr-marg} that
$f(i \mid b \cdot (Z_2 - i)) \le f(i \mid b \cdot (A\cup B - i)) < 2b$ for each $i\in C_2 \setminus A_I$
and $|A_I| \le \epsilon |A| \le \epsilon |Z_2|$ since
otherwise we would have terminated successfully.

Finally, $f(i \mid 2b \cdot C_2) \le f(i \mid 2b \cdot A) < 2b$ for all $i\in C\setminus A$ follows immediately from the definition of $C$.
\end{proof}

\begin{lemma}
    When a recursive call returns failure, then there exists a certificate that
    proves infeasibility.
\end{lemma}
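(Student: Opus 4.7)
Let $(Z_1', Z_2')$ be the certificate of infeasibility returned by the failed recursive call on the contracted instance with ground set $E' = E \setminus C$, matroid $\cM' = \cM / C$, polymatroid defined by $f'(X) = f(X \mid b \cdot (A \cup B))$, and $B_0' = B_0 \cup B$. The plan is to lift it to an outer certificate by setting
\begin{equation*}
Z_1 = Z_1' \cup C \cup B, \qquad Z_2 = Z_2' \cup A \cup B.
\end{equation*}
The inclusions $Z_2 \subseteq Z_1 \subseteq E \setminus B_0$ follow immediately from $A \subseteq C$ and from $Z_1',Z_2' \subseteq E' \setminus B_0'$, together with $B \cap B_0 = \emptyset$.

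The polymatroid-side properties are the easier checks. For Property~(4), the elements $i \in C \setminus A \subseteq Z_1 \setminus Z_2$ satisfy $f(i \mid 2b \cdot A) < 2b$ by the very definition of $C$, and Lemma~\ref{lem:decr-marg} lifts this to $f(i \mid 2b \cdot Z_2) < 2b$ since $A \subseteq Z_2$. For $i \in Z_1' \setminus Z_2'$, the inner certificate gives $f'(i \mid 2b \cdot Z_2') < 2b$; unfolding the definition of $f'$ and applying Lemma~\ref{lem:decr-marg} again (using that $A \cup B \subseteq Z_2$) yields $f(i \mid 2b \cdot Z_2) \le f'(i \mid 2b \cdot Z_2') < 2b$. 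For Property~(3), the inner certificate handles $Z_2'$ while Lemma~\ref{lem:ABmarg} handles $A \cup B \setminus A_I$, and the fact that the algorithm has not terminated successfully forces $|A_I| < \epsilon |A|$, so at most an $\epsilon$-fraction of $Z_2$ can fail the marginal bound.

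Property~(1) follows from $C \subseteq Z_1$ together with submodularity of $r$: conditioning on a larger set decreases the conditional rank, so $r(B_0 \mid Z_1) \le r(B_0 \mid C) \le 2\epsilon |B_0|$ by Lemma~\ref{lem:addable4}.

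The technical core is Property~(2). Here I decompose via contraction by $C$:
\begin{equation*}
r(Z_1) = |C| + r'(Z_1' \cup B) \le |C| + r'(Z_1') + r'(B \mid Z_1').
\end{equation*}
Since $B \subseteq B_0'$, the inner certificate's Property~(1) yields $r'(B \mid Z_1') < 2\epsilon |B_0'|$, and plugging the inner Property~(2) into $r'(Z_1')$ gives, after using $|B_0'| = |B_0| + |B|$, a bound of the form
\begin{equation*}
r(Z_1) < |C| + |Z_1'| - (0.5 - 2\epsilon)|Z_2'| + 3\epsilon |B_0| + 3\epsilon |B|.
\end{equation*}
The target upper bound is $|Z_1| - (0.5 - 2\epsilon)|Z_2| + \epsilon|B_0|$, which expands to $|C| + |Z_1'| + |B| - (0.5-2\epsilon)(|Z_2'| + |A| + |B|) + \epsilon|B_0|$. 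The extra error $\cO(\epsilon)|B|$ accrued from the recursion must be absorbed by the $(0.5-2\epsilon)|B|$ slack that arises from putting $B$ into $Z_2$; the simultaneous cost $(0.5-2\epsilon)|A|$ of putting $A$ into $Z_2$ is then covered by Lemma~\ref{lem:size-B}, which tells us $|B| > (1-2\epsilon)|A|$. This careful bookkeeping, possibly at the cost of tuning the hidden constant inside $\alpha = 4 + \cO(\epsilon)$, is what closes Property~(2).

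The main obstacle is exactly this balancing in Property~(2): the naïve use of the inner Property~(1) introduces an $\cO(\epsilon)|B|$ slack that the outer certificate cannot afford in its $\epsilon|B_0|$ budget, and the fix relies critically on the structural linkage $|B| > (1-2\epsilon)|A|$ between the blocking set and the addable set, which is precisely what the algorithm engineers so that $B$ can pay for itself inside $Z_2$.
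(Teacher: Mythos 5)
Your proposal is correct and follows essentially the same route as the paper: the same lifting $Z_1 = Z_1' \cup C \cup B$, $Z_2 = Z_2' \cup A \cup B$, Lemma~\ref{lem:addable4} for Property~(1), the contraction-by-$C$ decomposition plus Lemma~\ref{lem:size-B} for Property~(2), and Lemmas~\ref{lem:ABmarg} and~\ref{lem:decr-marg} together with the inner certificate for Properties~(3) and~(4). Your Property-(2) slack of $3\epsilon|B_0|+3\epsilon|B|$ is slightly looser than the paper's stated $2\epsilon|B|+\epsilon|B_0|$, but as you note this only shifts the $\cO(\epsilon)$ constants, so the argument closes all the same.
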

\begin{proof}
Assume that a recursive call has failed.
Let $Z'_1, Z'_2$ be the certificate returned by it.
We set $Z_1 = Z'_1 \cup C \cup B$ and $Z_2 = Z'_2 \cup A \cup B$.
Then by Lemma~\ref{lem:addable4},
\begin{equation*}
    r(B_0 \mid Z_1) \le r(B_0 \mid C) < 2 \epsilon |B_0| \ .
\end{equation*}
Further,
\begin{align*}
    r(Z_1) &= r(Z'_1 \cup C \cup B) \\
    &= r(Z'_1 \mid C) + r(C) + r(B \mid Z'_1 \cup C) \\
    &\le r'(Z'_1) + |C| + r'(B \mid Z'_1) \\
    &\le |Z'_1| - (0.5 - 2\epsilon) |Z'_2| + 2\epsilon |B| + |C| + \epsilon |B_0| \\
    &= |Z_1| - (0.5 - 2\epsilon) |B| - 0.5 |B| - (0.5 - 2\epsilon) |Z'_2| + \epsilon |B_0| \\
    &\le |Z_1| - (0.5 - 2\epsilon) |Z_2| + \epsilon |B_0| \ .
\end{align*}
Moreover, for all $i\in C_2\setminus A_I$
it holds that $f(i\mid b\cdot (Z_2 - i)) \le f(i\mid b\cdot (A\cup B - i)) < 2b$
Similarly, for a $1-\epsilon$ fraction of $Z'_2$ it
holds that $f(i\mid b \cdot (Z_2 - i)) \le f(i\mid b \cdot (Z'_2 - i)) < b$. Since $|A_I| < \epsilon |A| \le \epsilon |A\cup B|$ Property~\eqref{eq:margins}
is satisfied.
Likewise $f(i \mid 2b \cdot Z_2) \le f(i \mid 2b \cdot Z'_2) < 2b$ for all $i\in Z'_1 \setminus Z_2$
and $f(i \mid 2b \cdot Z_2) \le f(i \mid A) < 2b$ for all $i\in C\setminus Z_2$. Thus, also the last property holds.
\end{proof} 

\subsection{Running time}
We are going to bound only the number of nodes in the recursion tree. It is clear that the overhead of operations outside the recursive call is polynomially bounded. To this end, we focus on the sets $B_0$ and $B$.
Initially, the set $B$ created in the
algorithm will have size at most $n$. Then
with every recursive call it decreases by a factor of $(1 - \epsilon^2)$, see Lemma~\ref{lem:decr-B},
but never below $\epsilon |B_0|$ (else, the algorithm terminates immediately).

For a fixed instance,
let $T(k)$ be the maximum number of nodes in the recursion tree of the algorithm over all inputs $B_0, X, Y$
where $|B_0| \ge 1 / (1 - \epsilon^2)^k$. Then $T(k)$ is monotone decreasing, i.e., $T(k') \le T(k)$ for $k' \ge k$, simply because the set over which the maximum is taken is smaller. Let $\ell = \lceil\log_{1/(1 - \epsilon^2)^k}(n)\rceil$.
Then,
\begin{equation*}
    T(k) \le 1 + \sum_{i = k + 1}^{\ell} T(i) \quad \text{and} \quad T(\ell) = 1 \ .
\end{equation*}
Here, the sum starts at $k+1$, since the smallest size of $B'_0$ of the recursion satisfies $|B'_0| \ge (1 + \epsilon)|B_0| \ge |B_0| / (1 - \epsilon^2)$.
The numbers $T(\ell), T(\ell - 1), T(\ell - 2),\dotsc,T(1)$
are similar to the Fibonacci series (except for the
addition of $1$ for the current node) and it
can easily be shown by induction that
$T(k) \le 2^{\ell - k} \le 2^{\ell} \le n^{\cO_{\epsilon}(1)}$ for all $k$.
\end{document}